\newcommand\R{\mathbb R} 
\newcommand\C{\mathbb C}
\newcommand{\image}{\operatorname{Im}}
\newcommand{\adj}{\operatorname{adj}}
\newcommand{\Psbc}{P_{+ \cdot \cdot}}
\newcommand{\Pasc}{P_{\cdot + \cdot}}
\newcommand{\Pabs}{P_{\cdot \cdot +}}
\newcommand{\hyper}{\Delta(P)}
\newcommand{\ones}{\mathbf 1}
\newcommand{\tr}{\mathrm{tr}}
\newcommand{\Diag}{\operatorname{Diag}}
\newcommand{\Cov}{\operatorname{Cov}}
\newcommand{\prob}{\operatorname{Prob}}
\newcommand{\GL}{\operatorname{GL}}
\newcommand{\Flat}{\operatorname{Flat}}
\newcommand{\GMk}{$\operatorname{GM}(k)$\xspace}
\newcommand{\GMtwo}{$\operatorname{GM}(2)$\xspace}
\title{A semialgebraic description of the general Markov model on
  phylogenetic trees}
\author{Elizabeth S.~Allman\footnotemark[1]\ \footnotemark[2]
\and John A.~Rhodes\footnotemark[1]\ \footnotemark[2]
\and Amelia Taylor\footnotemark[3]}
\begin{document}
\maketitle

\renewcommand{\thefootnote}{\fnsymbol{footnote}}

\footnotetext[1]{Department of Mathematics and Statistics, 
University of Alaska, Fairbanks AK 99775}
\footnotetext[3]{Department of Mathematics, Colorado College, Colorado Springs 80903}
\footnotetext[2]{The work of ESA and JAR was supported by a grant from the National
Science Foundation, DMS 0954865.}

\renewcommand{\thefootnote}{\arabic{footnote}}

\begin{abstract}
Many of the stochastic models used in inference of phylogenetic trees
from biological sequence data have polynomial parameterization maps. The image of 
such a map --- the collection of joint distributions for a model --- forms the model space.  
Since the parameterization is polynomial, the Zariski closure of the model space is
an algebraic variety which is typically much larger than the model space, but has been 
usefully studied with algebraic methods.  Of ultimate interest, however, is not the full variety, 
but only the model space.  Here we develop complete semialgebraic descriptions of the model space 
arising from the $k$-state general Markov model on a tree, with slightly restricted parameters.
Our approach depends upon both recently-formulated analogs of Cayley's hyperdeterminant, and 
the construction of certain quadratic forms from the joint distribution whose positive (semi-)definiteness 
encodes information about parameter values.
We additionally investigate the use of Sturm sequences for obtaining similar results.
\end{abstract}

\begin{keywords} 
phylogenetic tree, phylogenetic variety,
semialgebraic set, general Markov model
\end{keywords}

\begin{AMS}
60J20, 92D15, 92D20, 62P10, 14P10 
\end{AMS}

\pagestyle{myheadings}
\thispagestyle{plain}
\markboth{Allman, Rhodes, and Taylor}{Semialgebraic description of the GM model on phylogenetic trees}

\section{Introduction} Statistical inference of evolutionary relationships among 
organisms from DNA sequence data is routinely
performed using probabilistic models of sequence evolution along a tree. A site in a 
sequence is viewed as a 4-state (A,C,G, T) random variable, which undergoes state changes
as it descends along the tree  from an ancestral organism to its modern descendants. 
Such models exhibit a rich 
mathematical structure, which reflects both the combinatorial features of the tree, and the
algebraic way in which stochastic matrices associated to edges of the
tree are combined to produce a joint probability distribution describing sequences of the extant organisms.  

One thread in the extensive literature on such models has utilized the viewpoint of algebraic 
geometry to understand the probability distributions that may arise.
This is natural, since the distributions are in the image of a polynomial map, and the 
image thus lies in an algebraic variety. The defining equations of this
variety  (which depend on the tree topology), are called \emph{phylogenetic invariants}. 
That a probability distribution satisfies them can be taken as evidence
that it arose from sequence evolution along the particular tree.
Phylogenetic invariants and varieties have been extensively studied by many authors 
\cite{CF87, Lake87, EvansSpeed93, Hendy1996, Hag2000, AR03, SStoric2005, ARgm, CFS2011, RSlargeMixtures2012, Buc2012} (see \cite{ARgascuelchapter} for more references) with
goals ranging from biological (improving data analysis) to statistical
(establishing the identifiability of model parameters) to purely
mathematical.  

However, it has long been understood that, in addition to the equalities of phylogenetic 
invariants, inequalities should play a
role in characterizing those distributions
actually of interest for statistical purposes.   Much of a phylogenetic variety is typically
composed of points not arising from stochastic parameters, but rather
from applying the same polynomial parameterization map to complex
parameters. 
Thus the model space --- the set of probability distributions arising as the image of
stochastic parameters on a tree --- can be considerably smaller 
than the set of all probability distributions on the variety.
An instructive recent computation \cite{SZ2011} demonstrated that for the $2$-state
general Markov model on the $3$-leaf tree, for example, the model space is
only about 8\% of the nonnegative real points on the variety. Inequalities can thus be 
crucial in determining if a probability distribution arises from a model.

In the pioneering 1987 paper of Cavender and Felsenstein \cite{CF87} polynomial 
equalities and inequalities are given that can test
which of the 3 possible unrooted leaf-labeled 4-leaf trees might have produced a given probability distribution, and 
thus in principle determine evolutionary relationships between 4 organisms.
Nonetheless, despite many advances in understanding phylogenetic invariants in the intervening years, 
little has been accomplished in finding or understanding the necessary inequalities. 
The potential usefulness of such inequalities, meanwhile,
has been demonstrated in \cite{FallerSteel}, where an inequality that holds for the 
2-state model on all tree topologies plays a key role in studying loss of biodiversity 
under species extinction. In \cite{ARS2012}  a small number of inequalities, dependent on 
the tree, were used to show that for certain mixture models trees were identifiable from probability distributions.

Recent independent  works by Smith and Zwiernik \cite{SZ2011} and  by Klaere and
Liebscher \cite{KL2011}  provided the first substantial progress on the general problem of 
finding sufficient inequalities to describe the model space.
Both groups
successfully formulated inequalities for 
the 2-state general Markov model on trees, using different viewpoints.
While the 2-state
model has some applicability to DNA sequences, through a
purine/pyrimidine encoding of nucleotides, it is unfortunately not clear how to
extend these works to the more general $k$-state model, or even to the particular $k=4$ 
model that is directly applicable to
DNA. Features of the statistical framework in \cite{SZ2011}
make generalizing to more states highly problematic, while the
formulation of \cite{KL2011} involves beating through a thicket of
algebraic details which are similarly difficult to generalize.

\smallskip

In this work we provide a third approach to understanding the model
space of the general Markov model on trees which
has the advantage of extending from the 2-state to the $k$-state model with little
modification.  Our goal is a semialgebraic description 
(given by a boolean combination of polynomial equalities and inequalities) 
of the set of probability distributions that arise on a specific tree. Such a description exists
by the 
Tarski-Seidenberg Theorem \cite{Tarski1951, Seidenberg1954},
since the stochastic 
parameter space for any $k$-state general Markov model is a semialgebraic set, so
its image under the polynomial parameterization map must be as well.
However, we seek an explicit description, and this theorem does not provide a useful means of obtaining it.

We describe below two methods for obtaining such a semialgebraic model description.
In one approach, that applies equally easily to all $k$ and all binary trees, we obtain inequalities
using a recently-formulated analog of Cayley's $2\times2\times 2$ hyperdeterminant from 
\cite{AJRS2012}, and the construction of certain quadratic forms from the joint distribution 
whose positive (semi-)definiteness encodes information about parameter values. We note 
that the appearance of the hyperdeterminant in both  \cite{SZ2011} 
 and \cite{KL2011} motivated the work of \cite{AJRS2012}, but that our introduction of quadratic 
 forms in this paper is an equally essential tool for obtaining our results. Moreover, we do not 
 see direct precursors of this idea in either \cite{SZ2011} or \cite{KL2011}. 

We also describe an alternative method using
Sturm sequences for univariate polynomials to obtain inequalities.
Specifically, we construct polynomials in the entries of a probability distribution 
whose roots are exactly a subset of the
numerical parameters, and Sturm theory leads to inequalities stating that the 
roots lie in the interval $(0, 1)$, as the parameters must. Although for the $2$-state 
model this leads to a complete semialgebraic description of the model on a 
$3$-leaf tree, for higher $k$ it  becomes more unwieldy. Nonetheless, this 
approach can produce inequalities of smaller degree than those found using 
quadratic forms, so we consider it a potentially useful technique.

In both approaches, we must impose some restrictions on the set of stochastic
parameters in order to give our semialgebraic conditions.  We thus formulate a notion of 
\emph{nonsingular parameters} and mostly restrict to considering them for our results.
In the $k=2$ case this notion is particularly natural from a statistical point of view, though 
it is slightly less so for higher $k$. Indeed, an understanding of why this notion is needed
algebraically illuminates, we believe, the difficulties of passing from 2-state results to $k$-state results.

\smallskip

This paper is organized as follows:  In \S \ref{sec:defns} we formally introduce the
general Markov model on trees and set basic notations and terminology, including the notion of nonsingular parameters.
In \S \ref{sec:THREEleaf}, we give a semialgebraic description of the 
general Markov model on the $3$-leaf tree using the work of \cite{AJRS2012} and Sylvester's theorem
on quadratic forms, a description that is made complete
for the $2$-state model, but holds only for nonsingular parameters of the $k$-state model.
Additionally, we discuss connections to several previous works on the $2$-state 
model \cite{PT86, SZ2011, KL2011}.
In \S \ref{sec:Sturm}, we use Sturm sequences to give partial semialgebraic descriptions
of $3$-leaf model spaces, and develop several examples.  In \S \ref{sec:nleaf}, we
give the main result:  a semialgebraic description of the  $k$-state
general Markov model on $n$-leaf trees for nonsingular parameters.
For the $2$-state model
we prove a slightly stronger result that drops the nonsingularity assumption.

\section{Definitions and Notations}\label{sec:defns}

\subsection{The general Markov model on trees} We review the  
$k$-state general Markov model on trees, \GMk, whose parameters consist 
of a combinatorial object, a tree, and a collection of numerical parameters 
that are associated to a rooted version of the tree.

Let $T=(V,E)$ be a binary tree with leaves $L\subseteq V$, $|L|=n$,
and $\{X_a\}_{a\in V}$ a collection of discrete random variables
associated to the nodes, all with state space $[k]=\{1,2,\dots,k\}$.
Distinguish an internal node $r$ of $T$ to serve as its root, and
direct all edges of $T$ away from $r$.  (Often this
model is presented with the root as a node of valence 2 which is introduced by subdividing
some edge. However, under very mild assumptions this leads to the same
probability distributions we consider here \cite{SSH94, AR03}, so we
avoid that complication.)  Though necessary for parameterizing the
model, the choice of $r$ will not matter in our final results, as will be shown in \S \ref{sec:nleaf}.

For a tree $T$ rooted at $r$, numerical parameters $\{\boldsymbol \pi, \{M_e\}_{e\in E}\}$
for the \GMk model on $T$ are:
\begin{romannum}
\item A \emph{root distribution} row vector $\boldsymbol \pi 
= (\pi_1, \ldots, \pi_k)$, with nonnegative entries summing to 1;
\item  \emph{Markov matrices $M_e$}, with nonnegative entries 
and row sums equal to $1$.
\end{romannum}

The vector $\boldsymbol \pi$ specifies the distribution of the random
variable $X_r$, \emph{i.e.}, $\pi_i=\prob(X_r=i)$, and the Markov
matrices $M_e$, for $e = (a_e, b_e)\in E$, give transition
probabilities $M_e(i,j) = \prob ( X_{b_e} = j \mid X_{a_e} = i)$ of
the various state changes in passing from the parent vertex $a_e$ to 
the child vertex $b_e$.  Letting $\mathbf X=(X_a)_{a\in V}$, the joint probability
distribution at all nodes of $T$ is thus
$$\prob(\mathbf X=\mathbf j)=\pi_{j_r} \prod_{e\in E}M_e(j_{a_e},j_{b_e}).$$
By marginalizing over all variables at internal nodes of $T$, we obtain the \emph{joint
distribution, $P$, of states at the leaves of $T$}; if $\mathbf k\in [k]^{|L|}$ is
an assignment of states to leaf variables, then 
$$
P(\mathbf k)=\sum_{\mathbf m \in [k]^{|V\smallsetminus L|}} \prob( \mathbf X=(\mathbf k,\mathbf m))
$$
where $(\mathbf k, \mathbf m)$ is an assignment of states to all
the vertices of $T$ compatible with $\mathbf k$.
It is natural to view $P$ as an $n$-dimensional $k\times \cdots \times
k$ array, or \emph{tensor}, with one index for each leaf of the tree. 

For fixed $T$ and choice of $r$, we use
$\psi_T$ to denote the \emph{parameterization map}
$$\psi_T: \{\boldsymbol \pi, \{M_e\}_{e\in E}\} \mapsto P.$$
That the coordinate functions of $\psi_T$ are polynomial is obvious,
but essential to our work here.  Note that we may naturally extend
the domain of the polynomial map to larger sets, by dropping the
nonnegativity assumptions in (i) and (ii), but retaining the condition that rows must 
sum to 1.  We will consider
\emph{real parameters} and a real parameterization map, as well as
\emph{complex parameters} and a complex parameterization map. In
contrast, we refer to the original probabilistic model as having
\emph{stochastic parameters}. Since the parameterization maps are all
given by the same formula, we use $\psi_T$ to denote them all, but
will always indicate the current domain of interest.

The image of complex, real, or stochastic parameters under $\psi_T$
is an $n$-dimensional $k \times\cdots \times k$ tensor,
whose $k^n$ entries sum to 1.  When parameters are not stochastic,
this tensor generally does not specify a probability distribution, as
there can be negative or complex entries. We refer to any tensor whose
entries sum to 1, regardless of whether the entries are complex, real,
or nonnegative, as a \emph{distribution}, but reserve the term
\emph{probability distribution} for a nonnegative distribution.  With
this language, the image of complex parameters under $\psi_T$ is a
distribution, but may or may not be a probability distribution.
Similarly, while the matrix parameters $M_e$ have rows summing to one
even for complex parameters, we reserve the term \emph{Markov} matrix
exclusively for the stochastic setting.

\subsection{Algebraic and semialgebraic model descriptions} 

Most previous algebraic analysis of the \GMk model has focused on the
\emph{algebraic variety} associated to it for each choice of tree $T$.  
With this viewpoint one is
essentially passing from the parameterization of the model, as given above, to an
implicit description of the image of the parameterization as a zero set of certain polynomial
functions, traditionally called \emph{phylogenetic invariants} 
\cite{CF87, Lake87, ARgascuelchapter}.

Whether one considers stochastic, real, or complex parameters, the
collection of phylogenetic invariants for \GMk on a tree $T$ are the
same.  Thus they cannot distinguish probability
distributions that arise from stochastic parameters from those arising
from non-stochastic real or complex ones. To complicate matters further, 
there exist distributions that satisfy all
phylogenetic invariants for the model on a given tree, but are not even in
the image of complex parameters. Though the algebraic
issues behind this are well understood, they prevent classical algebraic
geometry from being a sufficient tool to focus exclusively on the distributions of statistical
interest.

To gain a more detailed understanding, we seek to refine the algebraic
description of the model given by phylogenetic invariants into a
\emph{semialgebraic} description: In addition to finding polynomials
vanishing on the image of the parameterization (or equivalently
polynomial equalities holding at all points on the image), we also
seek polynomial inequalities sufficient to distinguish the
stochastic image precisely.  

\smallskip

Recall that a subset of $\R^n$ is called a \emph{semialgebraic set} 
if it is a boolean combination of sets each of which is
defined by a single polynomial equality or inequality.
The Tarski-Seidenberg Theorem \cite{Tarski1951, Seidenberg1954} 
implies that the image of a 
semialgebraic set under a polynomial map is also semialgebraic.

Since for all $T$ the stochastic parameter space of $\psi_T$ is clearly 
semialgebraic, this implies that semialgebraic descriptions exist for  the
images of the $\psi_T$. Determining such descriptions explicitly 
is our goal. 

\subsection{Nonsingular parameters, positivity, and independence}\label{ssec:nonsing}

Some of our results will be stated with additional mild conditions placed on the 
allowed parameters for the \GMk model.
We state these conditions here, and explore their meaning.

\begin{definition}\label{def:nonsingparams} A choice $\{\boldsymbol \pi, \{M_e\}_{e\in E}\}$ of  stochastic, real, or complex 
parameters for \GMk on a tree $T$ with root $r$ is said to be \emph{nonsingular} provided
\begin{romannum}
\item at every (hidden or observed) node $a$, the marginal distribution $\mathbf
  v_a$ of $X_a$ has no zero entry, and
\item for every edge $e$, the matrix  $M_e$ is nonsingular.
\end{romannum}
Parameters which are not nonsingular are said to be \emph{singular}.
\end{definition}

For stochastic parameters, the first condition in this definition can be replaced
with a simpler one:
\begin{romannum}
\item[({\rm i}')] \emph{the root distribution $\boldsymbol \pi$ has no zero entry.  }
\end{romannum}
Statement (i) follows from (i') and (ii) inductively,
since if all entries of $\mathbf v_a$ are positive and $M_{(a,b)}$ is a nonsingular
Markov matrix, then the distribution $\mathbf v_b=\mathbf v_a M_{(a,b)}$ at a child $b$ of $a$ 
has positive entries. However, for complex or real 
parameters requirement (i) is not implied by (i') and (ii), as a simple example shows: $\mathbf
v_a=(1/2,1/2)$, and $M_{(a,b)}=\begin{pmatrix} s&1-s\\2-s&s-1\end{pmatrix}$ are
singular parameters since $\mathbf v_b=(1,0)$, even though $\mathbf v_a$ has no
zero entries and $M_{(a,b)}$ is a nonsingular for $s\ne 1$.

\medskip

It is also natural to require that all numerical parameters of \GMk on
a tree $T$ be strictly positive. This means that all states may occur
at the root, and every state change is possible in passing along any
edge of the tree.  This assumption is plausible from a modeling point
of view, and can be desirable for technical statistical issues as
well. Note that positivity of parameters does not ensure
nonsingularity, since a Markov matrix may be singular despite all its
entries being greater than zero. Similarly, nonsingularity of
parameters does not ensure positivity since a nonsingular Markov
matrix may have zero entries.

\medskip

Given a joint probability distribution of random variables, two subsets of variables are \emph{independent} when the marginal
distribution for the union of the sets is the product of the 
marginal distributions for the two sets individually. We also use this term, in a nonstandard way, to apply  
to complex or real distributions when the same factorization holds.

To illustrate this usage, consider a tree $T$ with two nodes, $r$, $a$
and one edge $(r,a)$. For complex parameters $\boldsymbol
\pi$ and $M_{(r,a)}$, the joint distribution of $X_r$ and  $X_a$ is
given by the matrix
$$P=\diag(\boldsymbol \pi)M_{(r,a)}.$$ 
Then the variables  are independent exactly when
$P$ is a rank $1$ matrix: $P=\boldsymbol \pi ^T\mathbf v_a$. For
$k=2$ this occurs precisely when the parameters are singular. For
$k>2$, however, independence implies the parameters are singular, but not \emph{vice versa}.
In general, 
singular parameters ensure that $P$
has rank strictly less than $k$,  but not that $P$ has rank $1$.

These comments easily extend to larger trees to give the following.

\smallskip

\begin{proposition}\label{prop:ind}
  Suppose $P=\psi_T(\boldsymbol \pi, \{M_e\})$ for a choice of complex
  GM($k$) parameters on an $n$-leaf tree $T$. If the parameters are
  nonsingular, then there is no proper partition of the indices of $P$
  into independent sets. For $k=2$, the converse also holds.
\end{proposition}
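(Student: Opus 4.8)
The plan is to convert independence of a partition into a rank statement about a flattening of the tensor $P$, and then read that rank off the tree. Recall that two complementary groups of leaf variables $X_A,X_B$ with $A\sqcup B=L$ are independent exactly when the flattening $\Flat_{A|B}(P)$, the $k^{|A|}\times k^{|B|}$ matrix with entries $P(x_A,x_B)$, has rank $1$; and a partition into two or more blocks can be independent only if, for any two leaves $\alpha,\beta$ lying in distinct blocks, the $k\times k$ matrix $P_{\alpha\beta}$ of their joint distribution has rank $1$ (one marginalizes). Two factorizations from the tree Markov structure do the work. For leaves $\alpha,\beta$ with most recent common ancestor $u$, conditioning on $X_u$ yields $P_{\alpha\beta}=N_\alpha^{\,T}\Diag(\mathbf v_u)\,N_\beta$, where $N_\alpha$ and $N_\beta$ are the products of the $M_e$ along the edge-disjoint paths from $u$ to $\alpha$ and to $\beta$. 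For an edge split, where $A\mid B$ is cut out by a single edge $e=(a,b)$ with $a$ the parent, conditioning on $X_a$ yields $\Flat_{A|B}(P)=L_A\,\Diag(\mathbf v_a)\,M_e\,L_B^{\,T}$ for suitable $L_A,L_B$. Both identities are formal consequences of the parameterization formula, hence valid for complex parameters.

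For the forward direction, assume the parameters are nonsingular. Then each $M_e$ is invertible, so each $N_\alpha,N_\beta$ is an invertible $k\times k$ matrix; and each marginal $\mathbf v_u$ has no zero entry, so $\Diag(\mathbf v_u)$ is invertible. Thus $P_{\alpha\beta}$ is a product of three invertible $k\times k$ matrices and has rank $k\ge 2$, for every pair $\alpha,\beta$. No two leaves are independent, so no proper partition into independent blocks can exist; this argument is uniform in $k$.

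For the converse with $k=2$ I prove the contrapositive, exhibiting an independent edge split whenever the parameters are singular, using the second factorization together with the fact that for $k=2$ any proper rank deficiency forces rank $\le 1$. If some $M_e$ is singular, then for the split at $e$ the flattening $\Flat_{A|B}(P)$ has rank at most $\mathrm{rank}(M_e)\le1$, so that split is independent. If instead some internal node or the root has a marginal $\mathbf v_a$ with a zero entry, then the edge $e=(a,b)$ below $a$ gives $\Flat_{A|B}(P)$ of rank at most $\mathrm{rank}\,\Diag(\mathbf v_a)\le1$, again independent. Each split is proper because deleting an edge of a tree leaves leaves on both sides.

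The main obstacle is the single remaining way the parameters can be singular: a zero entry appearing only in a \emph{leaf} marginal, with every $M_e$ invertible and every internal marginal nonvanishing. A leaf is never the parent endpoint of an edge, so such a zero is invisible to every edge-split flattening, and for complex parameters it need not force any independence; the example $\mathbf v_a=(1/2,1/2)$, $M_{(a,b)}=\left(\begin{smallmatrix}s&1-s\\2-s&s-1\end{smallmatrix}\right)$ of \S\ref{ssec:nonsing} shows such a configuration can arise with all matrices invertible. Handling this cleanly is where the hypotheses bite: for \emph{stochastic} parameters condition (i) of nonsingularity follows from root positivity (i$'$) and invertibility of the $M_e$, so any loss of nonsingularity is already witnessed at the root or at some edge matrix and is therefore detected by the flattening argument above. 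I would organize the converse around this reduction, which is also where the restriction to $k=2$, forcing a singular $2\times2$ factor to have rank exactly $1$, is essential.
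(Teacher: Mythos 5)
Your forward direction is correct and is essentially the paper's own argument (the paper proves this proposition only by remarking that its two-node computation ``easily extend[s]''): nonsingularity makes every pair marginal $P_{\alpha\beta}=N_\alpha^{T}\operatorname{diag}(\mathbf v_u)N_\beta$ a product of invertible matrices, hence of rank $k\ge 2$, ruling out any proper independent partition.

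The converse is where you have a genuine gap, and it is not confined to the leaf case you flag. Your Case 2 rests on the identity $\Flat_{A|B}(P)=L_A\operatorname{diag}(\mathbf v_a)M_eL_B^{T}$ being a ``formal consequence of the parameterization,'' valid over $\C$. It is not. What the parameterization formally gives is $\Flat_{A|B}(P)=F_A^{T}M_eF_B$, where row $m$ of $F_A$ is the joint complex distribution of the $A$-side leaves together with the event $X_a=m$; all one knows is that this row sums to $v_a(m)$. Factoring $F_A=\operatorname{diag}(\mathbf v_a)L_A$ requires either that $\mathbf v_a$ have no zero entry (useless for a rank bound) or that $F_A$ be nonnegative --- i.e.\ stochasticity, which forces a row with zero sum to vanish. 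The factor pulls out formally only when $a$ is the root, since every term of $F_A$ then carries the factor $\pi_m$. Concretely, on the 4-leaf caterpillar with $\boldsymbol\pi=(1/2,1/2)$, all $M_e$ invertible, and internal-edge matrix $M_5=\left(\begin{smallmatrix}1-c&c\\1+c&-c\end{smallmatrix}\right)$, $c\ne0$, the internal node $a$ has $\mathbf v_a=(1,0)$, yet the flattenings at the edges below $a$ generically have rank $2$, so your claimed bound fails for non-root internal nodes exactly as it does for leaves.

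Moreover, the gap cannot be closed, because the converse as stated --- for \emph{complex} parameters --- is false, a fact your ``main obstacle'' paragraph comes within one computation of establishing. Complete your flagged example: on the 3-leaf star with $\boldsymbol\pi=(1/2,1/2)$, $M_1=\left(\begin{smallmatrix}2&-1\\0&1\end{smallmatrix}\right)$, $M_2=M_3=I$, the parameters are singular ($\mathbf v_1=(1,0)$, and any parameters producing this $P$ share that marginal), but one checks directly that $\Flat_{1|23}(P)$, $\Flat_{2|13}(P)$, $\Flat_{3|12}(P)$ and all three pair marginals have rank $2$, so no proper partition into independent sets exists. Your closing reduction --- for stochastic parameters (i$'$) and (ii) imply (i), so singularity is always witnessed at the root or at an edge matrix, where the valid factorizations apply --- is correct and gives a complete proof, but only of the \emph{stochastic} version of the converse, which is weaker than what the proposition asserts. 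To be fair, the same defect sits in the paper itself: its two-node remark that for $k=2$ independence ``occurs precisely when the parameters are singular'' fails over $\C$ in exactly this way, and the converse is in fact only ever used later (e.g.\ in the final 4-leaf proposition) for stochastic parameters, where your reduction, or nonnegativity of $F_A$, finishes the argument.
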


That the converse is false for $k>2$ is a  complicating factor
for the generalization of our results from the $k=2$ case.  Indeed, this
is the reason we ultimately restrict to nonsingular parameters.

\medskip

In closing this section, we note that for any $P \in \image (\psi_T)$,
there is an inherent and well-understood source of
non-uniqueness of parameters giving rise to $P$, sometimes called `label-swapping.'  
Since internal nodes of $T$ are unobservable variables, the distribution
$P$ is computed by summing over all assignments of states to such
variables. As a result, if the state names were permuted for such a variable, 
and corresponding changes made in numerical
parameters, $P$ is left unchanged. Thus parameters leading 
to $P$ can be determined at most up to such permutations.

In the case of nonsingular parameters, label-swapping is the only 
source of non-uniqueness of parameters leading to $P$ 
\cite{Chang96}. (See also \cite{Kruskal77,ARM09}). For singular 
parameters there are additional sources of non-uniqueness.

\subsection{Marginalizations, slices,  group actions, and flattenings}\label{ssec:marg}

Viewing probability distributions on $n$ variables as 
$n$-dimensional tensors gives natural associations between statistical
notions and tensor operations.  For example,
summing tensor entries over an index, or a collection 
of indices, corresponds to marginalizing over a
variable, or collection of variables. Considering only those
entries with a fixed value of an index, or collection of indices,
corresponds (after renormalization) to conditioning on an observed
variable, or collection of variables. Rearranging array entries into a
new array, with fewer dimensions but larger size, corresponds to
agglomerating several variables into a composite one
with larger state space. 
Here we introduce the necessary notation to formalize these tensor operations.

\begin{definition}  For an $n$-dimensional $k \times \cdots \times k$ tensor
$P$, integer $i\in[n]$, and vector $\mathbf v=(v_1, \cdots, v_k)$,  define the $(n-1)$-dimensional tensor
$P\ast_i \mathbf v$ by
$$(P*_i\mathbf v)(j_1,\dots,\hat j_i,\dots,j_n) = \sum_{j_i=1}^k v_{j_i} P(j_1,\cdots, j_i, \cdots,j_n),$$
where $\hat \ $ denotes omission.

Thus, 
the $\ell$th \emph{slice of $P$ in the $i$th index} is defined by
$P_{\cdots \ell \cdots}=P*_i \mathbf e_\ell,$ where $\mathbf e_\ell$ is the $\ell$th standard basis vector,
and the $i$th \emph{marginalization of $P$} is
$P_{\cdots+\cdots}=P*_i\mathbf 1$
where  $\mathbf 1$ is the vector of all $1$s.
\end{definition}

\smallskip

The
above product of a tensor and vector extends naturally to tensors and matrices.

\smallskip

\begin{definition}
For an $n$-dimensional $k \times \cdots \times k$ tensor $P$ and $k \times k$ matrix $M$, define the
$n$-dimensional tensor
 $P*_iM$  by
$$
(P \ast_i M) (j_1,  \dots, j_n) = \sum_{\ell=1}^k  P(j_1, \dots, j_{i-1}, \ell,  j_{i+1}, \dots, j_n) M(\ell,j_i).$$
\end{definition}

If the above operations on a tensor by vectors or
matrices are performed in different indices, then they commute. This allows the use of
$n$-tuple notation for the operation of matrices in all indices of a tensor, such as the following:
$$P \cdot (M_1,M_2,\dots, M_n)=(\cdots((P*_1M_1)*_2M_2)\cdots )*_nM_n.$$
Although the $M_i$ need not be invertible, restricting to that case gives
the natural (right) group action of $GL(k,\C)^n$ on 
$k\times\cdots \times k$ tensors.
This generalizes the familiar operation on 2-dimensional tensors $P$, \emph{i.e.}, on
matrices, where
$$P \cdot (M_1, M_2) = (P \ast_1 M_1) \ast_2 M_2=M_1^TPM_2.$$  
 
 \medskip
 
 If $\mathbf v\in \C^k$, then $\Diag(\mathbf v)$
 denotes the $3$-dimensional $k \times k \times k$ diagonal
 tensor whose only nonzero entries are 
the $ v_i$ in the $(i,i,i)$ positions. That this notion is useful for the \GMk model is made clear
 by the observation that for a 3-leaf star tree $T$, rooted at the central node,
\begin{equation}\psi_T(\boldsymbol \pi,\{M_1,M_2,M_3\})=\Diag(\boldsymbol \pi) \cdot (M_1,M_2,M_3).\label{eq:3leafdist}
\end{equation}
\medskip 

If $P$ is an $n$-dimensional $k \times \cdots \times k$
tensor and $[n]=A\sqcup B$ is a disjoint union of nonempty sets, then
the \emph{flattening of $P$ with respect to this bipartition},
$\Flat_{A|B}(P)$ is the $k^{|A|}\times k^{|B|}$ matrix with rows
indexed by $\mathbf i\in [k]^{|A|}$ and columns indexed by $\mathbf
j\in [k]^{|B|}$, with
$$\Flat_{A|B}(P) (\mathbf i,\mathbf j)=P(\mathbf k),$$
where $\mathbf k\in [k]^n$ has entries matching those of $\mathbf i$
and $\mathbf j$, appropriately ordered. Thus the entries of $P$ are
simply rearranged into a matrix, in a manner consistent with the
original tensor structure. When $P$ specifies a joint distribution for
$n$ random variables, this flattening corresponds to treating the
variables in $A$ and $B$ as two agglomerate variables, with state
spaces the product of the state spaces of the individual variables.

Notations such as $\Flat_{1 \vert 23} (P)$, for example, will be used
to denote the matrix flattening obtained from a $3$-dimensional tensor
using the partition of indices $A=\{1\}$, $B=\{2,3\}$.  If $e$ is an
edge in an $n$-leaf tree, then $e$ naturally induces a bipartition of
the leaves, by removing the edge and grouping leaves according to the resulting connected components.  A flattening
for such a bipartition is denoted by $\Flat_e (P)$.

Finally, we note that flattenings naturally occur in the notion of independence:
If $[n]=A\sqcup B$, then the sets are independent precisely when
$\Flat_{A|B}(P)$ is a rank 1 matrix.

\section{GM($k$) on  $3$-leaf trees}\label{sec:THREEleaf}

In this section we derive a semialgebraic
description of GM($k$) on the $3$-leaf tree, the smallest example of interest.
Results for the 3-leaf tree also
serve as a building block for the study of
the model on larger trees in \S \ref{sec:nleaf}.
For this section, then, $T$ is fixed, with leaves $1$, $2$, $3$ and
root $r$ at the central node.

When $k=2$, Cayley's \emph{hyperdeterminant} plays a critical
role, as has already been highlighted in \cite{SZ2012}.  
Though our formulation will be different, we 
take
the hyperdeterminant as our starting point.
For any $2 \times 2 \times 2$ tensor $A=(a_{ijk})$,  
the hyperdeterminant $\Delta (A)$ \cite{Cayley, GKZ,dSL} is
\begin{multline*}
\Delta (A) =
\big(a_{111}^2 a_{222}^2+a_{112}^2 a_{221}^2+a_{121}^2  a_{212}^2+a_{122}^2 a_{211}^2\big)
-2\big(a_{111} a_{112} a_{221} a_{222}\\
+a_{111} a_{121} a_{212} a_{222}+a_{111} a_{122} a_{211} a_{222} +a_{112} a_{121} a_{212} a_{221}+a_{112} a_{122} a_{221} a_{211}\\
+a_{121} a_{122} a_{212} a_{211}\big)+4 \big(a_{111} a_{122} a_{212} a_{221}+a_{112} a_{121} a_{211} a_{222}\big).
\end{multline*}

The function $\Delta$ has the $GL(2,\C)^3$-invariance property
\begin{equation}\Delta(P \cdot (g_1,g_2,g_3))=\det(g_1)^2\det(g_2)^2 \det(g_3)^2 \Delta(P).\label{eq:Deltainvariance}\end{equation}
This fact, combined with a study of canonical forms for $GL(2,\C)^3$-orbit
representatives, leads to the following theorem.  

\begin{theorem}[\cite{dSL}, Theorem 7.1]\label{thm:dSL}
A complex $2\times 2\times 2$ tensor 
$P$ is in the $\GL (2, \C)^3$-orbit of $D = \Diag(1,1)$ if, and only if, 
$\Delta (P) \ne 0$.
A real tensor is in the $\GL (2, \R)^3$-orbit of $D$ if, and only if, 
$\Delta (P) > 0$.
\end{theorem}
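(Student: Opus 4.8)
The plan is to treat the two implications separately, beginning with the easy ``only if'' direction. First I would evaluate $\Delta$ on $D=\Diag(1,1)$ directly from its expansion: since the only nonzero entries are $a_{111}=a_{222}=1$, every monomial except $a_{111}^2a_{222}^2$ contains a vanishing factor, so $\Delta(D)=1$. Feeding this into the invariance relation \eqref{eq:Deltainvariance} shows that any $P=D\cdot(g_1,g_2,g_3)$ satisfies $\Delta(P)=\det(g_1)^2\det(g_2)^2\det(g_3)^2$. Over $\C$ this is nonzero because each $g_i$ is invertible, and over $\R$ it is the square of a nonzero real number, hence strictly positive. This settles the forward implication in both cases.

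For the converse I would pass to the two slices of $P$ in the third index, $A=P_{\cdot\cdot 1}$ and $B=P_{\cdot\cdot 2}$, and study the pencil $xA+yB$. A direct expansion shows that $\det(xA+yB)=\det(A)\,x^2+Q\,xy+\det(B)\,y^2$ for a suitable bilinear expression $Q$ in the entries, and that its discriminant $Q^2-4\det(A)\det(B)$ equals $\Delta(P)$; this identity can be confirmed by matching monomials against the displayed formula for $\Delta$. The point of the reformulation is that, in terms of slices, $g_1$ and $g_2$ act by pencil equivalence $(A,B)\mapsto(g_1^TAg_2,\,g_1^TBg_2)$ while $g_3$ replaces $(A,B)$ by a new basis of the pencil; thus carrying $P$ into the orbit of $D$ is exactly the problem of simultaneously reducing $(A,B)$ to the matrix units $(E_{11},E_{22})$, the two slices of $D$.

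Assuming $\Delta(P)\neq 0$, the form $\det(xA+yB)$ is not identically zero and has two distinct roots $[x_1:y_1],[x_2:y_2]\in\PP^1$. Using $g_3$ I would make the slices equal to the singular members $M_i=x_iA+y_iB$. Each $M_i$ has rank exactly $1$, as a rank-$0$ member would force $A,B$ to be linearly dependent and hence $\det(xA+yB)$ to have a repeated root or vanish identically; moreover the distinctness of the roots forces $M_1,M_2$ to have distinct kernels and distinct images, since a shared kernel or image vector would render every member of the pencil singular. Writing $M_i=u_iw_i^T$, the vectors $u_1,u_2$ and $w_1,w_2$ are then each independent, so $U=[\,u_1\mid u_2\,]$ and $W=[\,w_1\mid w_2\,]$ are invertible; choosing $g_1,g_2$ determined by $U^{-1}$ and $W^{-1}$ sends $(M_1,M_2)$ to $(E_{11},E_{22})$, placing $P$ in the orbit of $D$. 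For the real statement, $\Delta(P)>0$ means the discriminant is positive, so the two roots are real and distinct, whence $M_i$, the vectors $u_i,w_i$, and therefore $g_1,g_2,g_3$ can all be taken real.

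I expect the substantive step to be this converse, and within it the simultaneous reduction of the pencil --- equivalently, the statement that a regular $2\times2$ pencil with distinct roots is diagonalizable under the $\GL(2,\C)^2$-action on its slices. The real crux is verifying that the two singular members have independent images and independent kernels; the discriminant identity and the bookkeeping with the group action are then routine. I would also be careful that the real case is genuinely about positivity rather than a mere restriction of scalars: it is precisely $\Delta(P)>0$ that forces the roots, and hence all the auxiliary data, to be real rather than a complex-conjugate pair --- the latter corresponding to $\Delta(P)<0$ and to a different real orbit.
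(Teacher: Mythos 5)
Your proof is correct, but there is nothing in the paper to compare it against: Theorem \ref{thm:dSL} is imported verbatim from \cite{dSL} (Theorem 7.1), and the paper's only gesture toward an argument is the remark that the invariance property \eqref{eq:Deltainvariance}, combined with a study of canonical forms for $\GL(2,\C)^3$-orbit representatives, yields the result. Your forward direction is exactly that remark made precise: $\Delta(\Diag(1,1))=1$, so invariance gives $\Delta(P)=\det(g_1)^2\det(g_2)^2\det(g_3)^2$, nonzero over $\C$ and positive over $\R$. The converse --- the substantive half --- is where you supply what the paper outsources to \cite{dSL}, and your pencil argument is sound. The identity $\Delta(P)=Q^2-4\det(A)\det(B)$, where $\det(xA+yB)=\det(A)x^2+Qxy+\det(B)y^2$, does check out monomial-by-monomial against the displayed formula for $\Delta$; a nonzero discriminant rules out a rank-$0$ member of the pencil (that would force $A,B$ to be dependent and the discriminant to vanish) and rules out shared kernels or images of the two singular members $M_1,M_2$ (since the distinctness of the roots makes $M_1,M_2$ span the pencil, a common kernel or image vector would force $\det(xA+yB)\equiv 0$); and the invertible matrices $U,W$ built from the rank-one factors carry $(M_1,M_2)$ exactly to $(E_{11},E_{22})$, i.e., carry $P$ into the orbit of $D$. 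The real case is also handled correctly: positivity of $\Delta(P)$ is precisely what makes the two roots --- and hence the $M_i$, their rank-one factors, and all three group elements --- real, and your closing observation that $\Delta(P)<0$ corresponds to a conjugate pair of roots and a different real orbit is accurate. In short, the paper buys brevity by citation, while your argument makes the $k=2$ statement self-contained and elementary; it is essentially the classical pencil/canonical-forms analysis underlying \cite{dSL}, specialized to the regular case, and would serve as a legitimate substitute for the citation.
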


\medskip

Suppose that $k=2$ and $P=\psi_T(\boldsymbol \pi, \{M_1,M_2,M_3\})$
arises from nonsingular parameters on $T$. Then, equation
\eqref{eq:3leafdist} states $P = \Diag(\boldsymbol \pi) \cdot (M_1,
M_2, M_3),$ but letting $M_1'=\diag(\boldsymbol \pi)M_1$ we also
have $$P=D \cdot (M_1', M_2, M_3).$$ Therefore $\Delta(P)>0$ by the
forward implication of Theorem~\ref{thm:dSL}. This hyperdeterminantal
inequality can thus be included in building a semialgebraic
description of the GM($2$) model when restricted to nonsingular
parameters.

However, the inequality $\Delta(P)>0$ yields a weaker conclusion than
that $P$ arises from stochastic, or even real, nonsingular parameters, so
additional inequalities are needed for a semialgebraic model
description.

\smallskip

Nonetheless, motivated by the role the hyperdeterminant plays in the
semialgebraic description of the GM($2$) model, in a separate work
Allman, Jarvis, Rhodes, and Sumner \cite{AJRS2012} construct
generalizations of $\Delta$ for $k\ge2$.  These functions are defined
by
$$f_i(P;\mathbf x)=\det(H_\mathbf x( \det(P*_i\mathbf x))),$$
where $\mathbf x$ is a vector of auxiliary variables, and $H_\mathbf x$ denotes the Hessian operator. They also
have invariance properties under $GL(k,\C)^3$ such as
$$f_3(P\cdot(g_1,g_2,g_3);\mathbf x)=\det(g_1)^k\det(g_2)^k\det(g_3)^{2}f_3(P;g_3\mathbf x).$$

The next theorem establishes that the nonvanishing of these polynomials, in 
conjunction with the vanishing of some others, identifies the orbit of $\Diag(\mathbf 1)$, and thus is an analog of
Theorem \ref{thm:dSL} for larger $k$.

\smallskip

\begin{theorem}[\cite{AJRS2012}]\label{thm:AJRS} 
  A complex $k\times k \times k$ tensor $P$ lies in
  the $GL(k,\C)^3$-orbit of $\Diag(\mathbf 1)$ if, and only if, for
  some $i\in\{1,2,3\}$,
\begin{romannum}
\item \label{cond:inv} $(P*_i\mathbf e_j)\adj (P*_i\mathbf x)
  (P*_i\mathbf e_\ell)-(P*_i\mathbf e_\ell)\adj (P*_i\mathbf x) (P*_i\mathbf
  e_j) = 0$ for all $j,\ell\in[k]$. Here $\adj$
 denotes the classical adjoint, and equality means
as a matrix of polynomials in $\mathbf x$.
\item $f_i(P;\mathbf x)$ is not identically zero as a polynomial in $\mathbf x$.
\end{romannum}
Moreover, if the enumerated  conditions hold for one $i$, then they hold for all.
\end{theorem}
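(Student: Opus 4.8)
The plan is to characterize the $GL(k,\mathbb C)^3$-orbit of $\Diag(\mathbf 1)$ directly and then show that membership in this orbit is detected precisely by conditions (i) and (ii). The central observation is that $P$ lies in the orbit of $\Diag(\mathbf 1)$ if and only if there exist $g_1,g_2,g_3\in GL(k,\mathbb C)$ with $P=\Diag(\mathbf 1)\cdot(g_1,g_2,g_3)$, and using the $\ast_i$ notation one computes that such a $P$ has the special structure where each slice $P\ast_i\mathbf e_j$ factors through a common pair of invertible matrices. Concretely, working in index $i=3$ say, the slices $P\ast_3\mathbf e_\ell$ become $g_1^T D_\ell g_2$ where $D_\ell=\Diag((g_3)_{\cdot \ell})$ is diagonal; these matrices are thus simultaneously diagonalizable by the fixed left/right transformations $g_1^T,\,g_2$. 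My first step would be to reformulate the orbit condition as this simultaneous-diagonalizability-up-to-congruence statement and verify the invariance quantities from \cite{AJRS2012} behave correctly under the group action, so that it suffices to check the conditions on the normal form $\Diag(\mathbf 1)$ itself.

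Next I would show the forward implication: assuming $P=\Diag(\mathbf 1)\cdot(g_1,g_2,g_3)$, verify (i) and (ii) hold for the chosen $i$. For (ii), the invariance property $f_i(P\cdot(g_1,g_2,g_3);\mathbf x)=\det(g_1)^k\det(g_2)^k\det(g_3)^2 f_i(\Diag(\mathbf 1);g_3\mathbf x)$ reduces the nonvanishing of $f_i(P;\mathbf x)$ to the nonvanishing of $f_i(\Diag(\mathbf 1);\mathbf x)$, which is a direct computation on the diagonal tensor: here $\det(\Diag(\mathbf 1)\ast_3\mathbf x)=\prod x_m$ up to sign, and its Hessian determinant is a nonzero constant multiple of a monomial. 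For (i), the adjoint identity is a polynomial identity in $\mathbf x$ that I would again verify is equivariant under the action (the two bracketed terms transform by the same scalar), reducing it to the statement on $\Diag(\mathbf 1)$, where $P\ast_3\mathbf x=\Diag(\mathbf x)$ so $\adj(P\ast_3\mathbf x)$ is diagonal and the commutator-like expression vanishes identically because everything in sight is diagonal.

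For the reverse implication I would argue that conditions (i) and (ii) force $P$ into the orbit. Condition (ii), the genericity that $f_i(P;\mathbf x)\not\equiv0$, should guarantee that $\det(P\ast_i\mathbf x)$ is a nondegenerate degree-$k$ form in $\mathbf x$ (equivalently that a generic slice has full rank, so the bipartite flattening $\Flat_{i\mid \cdot}(P)$ has rank $k$), which lets me choose $g_1,g_2$ putting the pencil of slices into a manageable form. Condition (i) is the structural constraint: it encodes that, after this normalization, the family of matrices $\{P\ast_i\mathbf e_\ell\}$ commutes in the appropriate adjoint-twisted sense and hence is simultaneously diagonalizable, which is exactly the data of the third factor $g_3$. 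Assembling $g_1,g_2,g_3$ from this diagonalization recovers $P=\Diag(\mathbf 1)\cdot(g_1,g_2,g_3)$. The main obstacle I anticipate is this reverse direction, specifically extracting simultaneous diagonalizability from the polynomial identity (i): one must rule out degenerate Jordan-block behavior and confirm that the adjoint-based condition, which is weaker-looking than literal commutativity, genuinely yields a single diagonalizing transformation valid for all slices at once. The final ``all $i$ or none'' claim I would obtain as a corollary, since being in the orbit of $\Diag(\mathbf 1)$ is a symmetric, $i$-independent property while the conditions are shown equivalent to it for each $i$ separately.
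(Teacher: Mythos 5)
First, a point of comparison: this paper does not prove Theorem \ref{thm:AJRS} at all --- it is imported wholesale from \cite{AJRS2012}, so there is no in-paper argument to measure your proposal against. Judged on its own terms, your proposal has a genuine gap precisely where the theorem's content lies. Your forward implication is fine and completable: equivariance of conditions (i) and (ii) under the $GL(k,\C)^3$-action (using bilinearity to pass from the basis vectors $\mathbf e_j,\mathbf e_\ell$ to the general vectors $g_3\mathbf e_j, g_3\mathbf e_\ell$ that the action produces), followed by the direct computation on $\Diag(\mathbf 1)$, where the slices and $\adj(\diag(\mathbf x))$ are all diagonal and $\det(\Diag(\mathbf 1)\ast_3\mathbf x)=\prod_m x_m$ has nonvanishing Hessian determinant. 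But the reverse implication is the theorem, and you do not prove it: you state that (i) ``encodes'' simultaneous diagonalizability and then explicitly defer the hard part (``ruling out degenerate Jordan-block behavior,'' extracting a single diagonalizing transformation from the adjoint identity) as an anticipated obstacle. Naming the obstacle is not overcoming it; as written, the converse direction is a plan, not an argument.

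There is also a substantive misstatement that matters for closing that gap. You assert that condition (ii), $f_i(P;\mathbf x)\not\equiv 0$, is ``equivalently'' the statement that a generic slice has full rank (i.e., that $\det(P\ast_i\mathbf x)\not\equiv 0$). That is false: nonvanishing of the Hessian determinant is strictly stronger. For example, if $\det(P\ast_3\mathbf x)=x_1^3$ (for $k=3$), the generic slice is invertible yet $f_3\equiv 0$. This conflation is not incidental, because the surplus strength of (ii) over mere generic invertibility is exactly the tool that eliminates the degeneracies you worry about --- repeated linear factors of $\det(P\ast_i\mathbf x)$ and the associated non-diagonalizable slice behavior. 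A correct proof must use (ii) to force $\det(P\ast_i\mathbf x)$ to split into $k$ linearly independent linear forms, and then combine this with (i) to produce the simultaneous diagonalization; by weakening (ii) to generic full rank at the outset, your outline discards the very hypothesis needed to finish.
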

\smallskip

When $k>2$ the $GL(k,\C)^3$-orbit of $\Diag(\mathbf 1)$ is not dense among
all $k\times k\times k$ tensors; rather its closure is a lower dimensional
subvariety.  This explains the necessity of the
equalities in item (i).
In the case $k=2$ these equalities simplify to $0=0$ and thus hold for all tensors. 
One can further verify that if $k=2$ then $f_i=\Delta$, so that Theorem \ref{thm:AJRS} includes 
 the first statement of Theorem \ref{thm:dSL}.

One might hope that the polynomials $f_i(P;
\mathbf x)$ had a sign property similar to that given in Theorem \ref{thm:dSL} for $\hyper$, so that a simple
test could further distinguish the image of nonsingular real
parameters. For $k=3$, using functions related to the $f_i$, a 
 semialgebraic description of the $GL(k,\R)^3$-orbit of $\Diag(\mathbf
 1)$ can in fact be obtained in this manner (see  \cite{AJRS2012}), giving a complete analog of Theorem \ref{thm:dSL}. However, for $k>3$
 no analog is known.

Finally, we emphasize that for $k>2$ the functions $f_i$ are \emph{not}
 the ones usually referred to as hyperdeterminants \cite{GKZ}, but
 rather a different generalization of $\Delta$. 
 
\medskip

With semialgebraic conditions ensuring a tensor is in the
$GL(k,\C)^3$ orbit of $\Diag(\mathbf 1)$ in hand, we wish to supplement these
to ensure it arises from nonsingular stochastic parameters.  We address this in several steps; first,
we give requirements that a tensor is the image of complex parameters
under $\psi_T$, and then that these parameters be  nonnegative.

\begin{proposition}\label{prop:complexkstates}
Let $P$ be a complex $k \times k \times k$  distribution. Then
$P$ is in the image of nonsingular complex parameters for GM($k$) on
the $3$-leaf tree if, and only if, $P$ is in the
$GL(k,\C)^3$-orbit of $\Diag(\mathbf 1)$ and
$\det (P \ast_i \ones) \neq 0$ for $i=1,2,3$.
Moreover, the parameters are unique up to label swapping. 
\end{proposition}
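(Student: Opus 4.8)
The plan is to prove the two implications separately and then settle uniqueness. For the forward implication, suppose $P=\psi_T(\boldsymbol\pi,\{M_1,M_2,M_3\})$ comes from nonsingular complex parameters. By \eqref{eq:3leafdist} we have $P=\Diag(\boldsymbol\pi)\cdot(M_1,M_2,M_3)$, and since nonsingularity forces $\boldsymbol\pi$ to have no zero entry, I would absorb it into the first factor by setting $M_1'=\diag(\boldsymbol\pi)M_1$, giving $P=\Diag(\ones)\cdot(M_1',M_2,M_3)$; as $\diag(\boldsymbol\pi)$ and each $M_i$ are invertible, this places $P$ in the $\GL(k,\C)^3$-orbit of $\Diag(\ones)$. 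For the determinant conditions I would marginalize directly: using that each $M_i$ has unit row sums, one finds $P\ast_1\ones=M_2^T\diag(\boldsymbol\pi)M_3$ and likewise in the other two indices, so that $\det(P\ast_1\ones)=\det(M_2)\det(M_3)\prod_\ell\pi_\ell\neq0$, and similarly for $i=2,3$.

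For the converse, suppose $P$ lies in the orbit of $\Diag(\ones)$ with $\det(P\ast_i\ones)\neq0$ for every $i$. Writing $P=\Diag(\ones)\cdot(g_1,g_2,g_3)$ with $g_i\in\GL(k,\C)$ and letting $\mathbf r_i=g_i\ones$ denote the row-sum vector of $g_i$, the same marginalization identity now reads $\det(P\ast_i\ones)=\big(\prod_{j\neq i}\det g_j\big)\prod_\ell r_i(\ell)$. Hence the hypothesis is exactly equivalent to every entry of every $\mathbf r_i$ being nonzero, which is precisely what permits the renormalization $M_i=\diag(\mathbf r_i)^{-1}g_i$ into unit-row-sum, invertible matrices. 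Factoring $g_i=\diag(\mathbf r_i)M_i$ through the group action and setting $\pi_\ell=\prod_i r_i(\ell)$ then yields $P=\Diag(\boldsymbol\pi)\cdot(M_1,M_2,M_3)$ with $\boldsymbol\pi$ nowhere zero; since $P$ is a distribution and the $M_i$ have unit row sums, $\sum_\ell\pi_\ell=1$ comes for free. It remains to check these parameters are nonsingular in the sense of Definition~\ref{def:nonsingparams}.

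Uniqueness I would obtain from an orbit-stabilizer argument. The key computation is that the stabilizer of $\Diag(\ones)$ in $\GL(k,\C)^3$ consists exactly of triples $(\Lambda_1\Pi,\Lambda_2\Pi,\Lambda_3\Pi)$ with $\Pi$ a common permutation matrix and $\Lambda_i$ diagonal subject to $\Lambda_1\Lambda_2\Lambda_3=I$. Given two nonsingular parametrizations of the same $P$, absorbing their root distributions as in the forward step yields two orbit representatives differing by such a stabilizer element; imposing unit row sums on the $M_i$ forces $\Lambda_i\ones=\ones$, hence $\Lambda_i=I$, and tracing the leftover common permutation $\Pi$ back through the normalization shows the two parameter sets differ only by relabeling the root states, i.e.\ by label-swapping. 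I expect the main obstacle to be the clean derivation of this stabilizer, which drives uniqueness, together with the genuinely delicate point in the converse: translating the algebraic conditions $\det(P\ast_i\ones)\neq0$, which transparently control invertibility of the $M_i$ and nonvanishing of $\boldsymbol\pi$, into the requirement that the marginal at every node---including the leaves---has no zero entry.
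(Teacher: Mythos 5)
Your proposal follows the paper's own proof nearly step for step. The forward direction (absorb $\diag(\boldsymbol\pi)$ into the first factor, then compute $P\ast_1\ones = M_2^T\diag(\boldsymbol\pi)M_3$, etc.) is exactly the paper's, and your converse uses the identical mechanism: the identity $P\ast_3\ones = g_1^T\diag(\mathbf r^3)g_2$, so that the determinant conditions say precisely that all row sums of each $g_i$ are nonzero, followed by the renormalization $M_i=\diag(\mathbf r_i)^{-1}g_i$, $\pi_\ell=\prod_i r_i(\ell)$, with $\sum_\ell\pi_\ell=1$ deduced from $P$ being a distribution. The one genuine difference is uniqueness: the paper disposes of it by pointing to the label-swapping discussion at the end of \S\ref{ssec:nonsing}, which rests on cited identifiability results, whereas you compute the stabilizer of $\Diag(\ones)$ in $\GL(k,\C)^3$ directly. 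Your description of that stabilizer is correct, and it yields a self-contained uniqueness argument that the paper does not write out.

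The step you flag as ``genuinely delicate'' --- verifying that the constructed parameters are nonsingular in the full sense of Definition~\ref{def:nonsingparams}, i.e., that the marginal at \emph{every} node, including the leaves, has no zero entry --- deserves a sharper conclusion than you give it: it cannot be closed. The hypotheses deliver only condition (i$'$) (root distribution nowhere zero) together with condition (ii), and for complex parameters this is strictly weaker than nonsingularity, as \S\ref{ssec:nonsing} itself notes. Indeed, the paper's own example there promotes to a counterexample to the literal statement: take $k=2$, $\boldsymbol\pi=(1/2,1/2)$, $M_1=M_2=I$, and $M_3=\begin{pmatrix} s&1-s\\ 2-s&s-1\end{pmatrix}$ with $s\ne1$. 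Then $P=\Diag(\ones)\cdot(\diag(\boldsymbol\pi),I,M_3)$ lies in the orbit, and $\det(P\ast_i\ones)$ equals $(s-1)/2$, $(s-1)/2$, $1/4$ for $i=1,2,3$, all nonzero; yet $\mathbf v_3=\boldsymbol\pi M_3=(1,0)$. Moreover, by your own stabilizer computation, every parameter choice for this $P$ with invertible matrices and nowhere-zero root distribution is a label swap of this one (and any other choice is singular by definition), so this $P$ does not arise from nonsingular parameters at all. Hence the proposition holds only if ``nonsingular'' is read as (i$'$)+(ii), or if one adds the (semialgebraic) hypothesis that the one-dimensional marginals of $P$ have no zero entries. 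The paper's proof simply asserts nonsingularity of the constructed parameters and thus contains exactly the same gap; on this point your write-up is more careful than the paper, though neither resolves the issue.
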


\begin{proof} To establish the claimed reverse implication, suppose $P= \Diag(\mathbf 1) \cdot (g_1, g_2, g_3)$ for some
  $g_i \in \GL (k ,\C)$, and let $\mathbf r^i=g_i\mathbf 1$ denote the
  vector of row sums of $g_i$. A computation shows that
$$
P \ast_3 \mathbf 1 = g_1^T \diag(\mathbf r^3) g_2.
$$
Thus $\det (P \ast _3 \mathbf 1) \neq 0$ is equivalent to the row sums of 
$g_3$ being nonzero, and similarly for the other $g_i$.

Now $M_i =
{\diag(\mathbf r^i)}^{-1} g_i$ is a complex matrix with row sums equal to
one.  Letting $\boldsymbol \pi = ( \prod_{i=1}^3 r^i_1, \dots,
\prod_{i=1}^3 r^i_k)$ be the vector of entry-wise products of the
$\mathbf r^i$, the entries of $\boldsymbol \pi$ are nonzero and $$P = \Diag(\boldsymbol
\pi) \cdot (M_1, M_2, M_3).$$ Since $P$ is a distribution,
\begin{align*}1&=((P*_1\mathbf 1)*_2\mathbf 1)*_3\mathbf 1\\
&=(((\Diag(\boldsymbol \pi) \cdot (M_1, M_2, M_3))  \ast_1 \mathbf 1) \ast_2 \mathbf 1)\ast_3 \mathbf 1\\
&= ((\Diag(\boldsymbol \pi)  \ast_1 M_1 \mathbf 1) \ast_2 M_2  \mathbf 1) \ast_3 M_3  \mathbf 1 \\
&= ((\Diag(\boldsymbol \pi)  \ast_1  \mathbf 1) \ast_2  \mathbf 1) \ast_3 \mathbf 1 \\
&=\boldsymbol \pi\cdot \mathbf 1,
\end{align*} so $\boldsymbol \pi$ is a valid complex root distribution.
Thus, $P$ is in the image of $\psi_T$ for complex, nonsingular parameters.

The forward implication in the theorem is straightforward.

The uniqueness of  nonsingular parameters up to permutation of states
at the internal node of the tree was discussed
at the end of subsection \ref{ssec:nonsing}.
 \qquad\end{proof}

\smallskip

Combining this Proposition with Theorems \ref{thm:dSL} and \ref{thm:AJRS} we
obtain the following.

\begin{corollary}\label{cor:complexToMarkov}
A $k \times k \times k$ complex distribution $P$
is the image of complex, nonsingular parameters for GM($k$)
on the $3$-leaf tree if, and only if, it satisfies the semialgebraic conditions 
(i) and (ii) of Theorem \ref{thm:AJRS} and
\begin{remunerate}
\item[(iii)] for $i = 1, 2, 3$, $\det (P \ast_i \mathbf 1) \neq 0$.
\end{remunerate}

For $k=2$,  $P$ is the image of real nonsingular parameters for GM($2$)
on the $3$-leaf tree if, and only if, it
 satisfies $\Delta(P) > 0$ and the semialgebraic conditions (iii).
\end{corollary}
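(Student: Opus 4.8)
The plan is to prove the final statement of Corollary~\ref{cor:complexToMarkov} by reducing the $k=2$ real case to the machinery already assembled, being careful about how the field $\R$ (rather than $\C$) enters. First I would invoke Theorem~\ref{thm:dSL}: for a real $2\times2\times2$ tensor, $\Delta(P)>0$ holds if and only if $P$ lies in the $\GL(2,\R)^3$-orbit of $D=\Diag(1,1)=\Diag(\mathbf 1)$. This is exactly the real analog of the orbit condition, and since $k=2$ the equalities in condition (i) of Theorem~\ref{thm:AJRS} reduce to $0=0$ and $f_i=\Delta$, so the complex conditions (i),(ii) of that theorem are subsumed by the single strict inequality $\Delta(P)>0$ once we work over $\R$.

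Next I would rerun the proof of Proposition~\ref{prop:complexkstates}, but now tracking that everything is real. Given $\Delta(P)>0$, Theorem~\ref{thm:dSL} furnishes real matrices $g_1,g_2,g_3\in\GL(2,\R)$ with $P=\Diag(\mathbf 1)\cdot(g_1,g_2,g_3)$. Condition (iii), namely $\det(P*_i\mathbf 1)\neq 0$ for $i=1,2,3$, is then equivalent — via the identity $P*_3\mathbf 1=g_1^T\diag(\mathbf r^3)g_2$ established in the proof of Proposition~\ref{prop:complexkstates} — to all row sums $\mathbf r^i=g_i\mathbf 1$ being nonzero. Setting $M_i=\diag(\mathbf r^i)^{-1}g_i$ and $\boldsymbol\pi=(\prod_i r^i_1,\prod_i r^i_2)$ exactly as before, the same chain of $*$-computations shows $\boldsymbol\pi\cdot\mathbf 1=1$, so $P=\Diag(\boldsymbol\pi)\cdot(M_1,M_2,M_3)$ realizes $P$ as the image of real parameters with invertible $M_i$ and $\boldsymbol\pi$ having nonzero (here, automatically real) entries. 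Each $M_i$ has real row sums equal to $1$, so these are genuine real nonsingular parameters. The forward implication is immediate: real nonsingular parameters give $P=D\cdot(M_1',M_2,M_3)$ after absorbing $\diag(\boldsymbol\pi)$ into $M_1$, whence $\Delta(P)>0$ by Theorem~\ref{thm:dSL}, and (iii) holds by nonsingularity.

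The step I expect to require the most care is verifying that the entries of $\boldsymbol\pi$ are not merely nonzero but that no spurious imaginary parts or sign obstructions arise — in other words, confirming that the reconstruction genuinely lands in $\GL(2,\R)$ and yields real parameters throughout. Because $P$ is real and $\Delta(P)>0$ delivers a \emph{real} orbit representative (this is precisely the content of the second sentence of Theorem~\ref{thm:dSL}), all of $g_i$, $\mathbf r^i$, $M_i$, and $\boldsymbol\pi$ are real, so the only thing to check is nonvanishing, which is exactly condition (iii). I would remark explicitly that, unlike the complex statement of the Corollary which asserts uniqueness up to label swapping, here no additional positivity of $\boldsymbol\pi$ or of the $M_i$ entries is claimed — the conditions $\Delta(P)>0$ together with (iii) characterize the \emph{real} nonsingular image, not the stochastic one, consistent with the paper's stated program of first reaching real parameters and only afterward imposing nonnegativity.
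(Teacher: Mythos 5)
Your proposal is correct and follows essentially the same route as the paper, which derives the corollary simply by combining Proposition~\ref{prop:complexkstates} with Theorems~\ref{thm:dSL} and~\ref{thm:AJRS}; your contribution is to make explicit what the paper leaves implicit, namely that for $k=2$ the conditions of Theorem~\ref{thm:AJRS} collapse to $\Delta(P)\neq 0$ (strengthened to $\Delta(P)>0$ over $\R$ via Theorem~\ref{thm:dSL}) and that the construction in the proof of Proposition~\ref{prop:complexkstates} — row sums $\mathbf r^i=g_i\mathbf 1$, $M_i=\diag(\mathbf r^i)^{-1}g_i$, $\boldsymbol\pi$ the entrywise product — stays entirely within $\R$ once Theorem~\ref{thm:dSL} supplies a real orbit representative. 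The only quibble is cosmetic: the uniqueness-up-to-label-swapping claim you attribute to the corollary actually appears in Proposition~\ref{prop:complexkstates}, not in the corollary's statement, but this does not affect your argument.
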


\smallskip 

Next we characterize the image of nonsingular stochastic
parameters, and finally of strictly positive nonsingular parameters. 
The key to this step is the construction of certain quadratic forms whose 
positive semi-definiteness (respectively definiteness)
encodes nonnegativity (respectively positivity) of some of the numerical 
parameters. Sylvester's 
Theorem \cite{Sylvester1852}, which we state for reference here, then gives 
a semialgebraic version of these conditions.

Recall that a \emph{principal minor} of a matrix is the determinant of
a submatrix chosen with the same row and column indices, and that a
\emph{leading} principal minor is one of these where the chosen
indices are $\{1,2,3,\dots,k\}$ for any $k$.

\begin{theorem}[Sylvester's Theorem]\label{thm:sylvester}
Let $A$ be an $n\times n$ real symmetric matrix and 
$Q(\mathbf x)=\mathbf x^TA\mathbf x$ the associated quadratic form on $\R^n$.
Then
\begin{remunerate} 
\item $Q$ is positive semidefinite if, and only if, all principal minors of $A$
  are nonnegative, and
\item $Q$ is positive definite if, and only if, all leading principal minors of
  $A$ are strictly positive.
\end{remunerate}
\end{theorem}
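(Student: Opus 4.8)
The plan is to prove part (2), the positive definite criterion, first, and then bootstrap to the semidefinite statement in part (1) by a perturbation argument. Throughout, write $A_j$ for the leading $j\times j$ principal submatrix of $A$ and $\Delta_j=\det(A_j)$ for the $j$th leading principal minor.

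For part (2) the forward direction is immediate: if $Q$ is positive definite, then for each $j$ the restriction of $Q$ to the coordinate subspace spanned by $\mathbf e_1,\dots,\mathbf e_j$ is the quadratic form with matrix $A_j$, which is therefore also positive definite; since its determinant equals the product of its (positive) eigenvalues, $\Delta_j>0$. For the converse I would induct on $n$ using a Schur complement. Writing $A$ in block form with leading block $A_{n-1}$, the inductive hypothesis makes $A_{n-1}$ positive definite, hence invertible, and the congruence that clears the last row and column exhibits $A$ as congruent to $\operatorname{diag}(A_{n-1},s)$, where the Schur complement $s$ satisfies $\Delta_n=\Delta_{n-1}\,s$. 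Thus $s=\Delta_n/\Delta_{n-1}>0$; the block diagonal matrix is then positive definite, and since congruence by an invertible matrix preserves positive definiteness, so is $A$.

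For part (1) the forward direction is again the easy one: restricting $Q$ to the coordinate subspace indexed by any subset $S\subseteq[n]$ shows that the corresponding principal submatrix is positive semidefinite, so its determinant, a product of nonnegative eigenvalues, is nonnegative. The converse is the crux. Here I would perturb, setting $A_\epsilon=A+\epsilon I$ for $\epsilon>0$ and showing that every leading principal minor of $A_\epsilon$ is strictly positive, so that part (2) makes $A_\epsilon$ positive definite. The key identity is $\det(A_j+\epsilon I)=\sum_{m=0}^{j} c_m\,\epsilon^{\,j-m}$, where $c_m$ is the sum of all $m\times m$ principal minors of $A_j$; every $c_m$ is a sum of principal minors of $A$ and so is nonnegative by hypothesis, while $c_0=1$. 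Hence $\det(A_j+\epsilon I)\ge\epsilon^{\,j}>0$ for every $\epsilon>0$, the matrix $A_\epsilon$ is positive definite, and letting $\epsilon\to 0^+$ gives $Q(\mathbf x)=\lim_{\epsilon\to 0^+}\mathbf x^T A_\epsilon\mathbf x\ge 0$.

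The main obstacle is exactly this converse of part (1): one genuinely cannot weaken the hypothesis to the leading principal minors alone, as $\operatorname{diag}(0,-1)$ illustrates (both leading minors vanish, yet the form is indefinite), so the full family of principal minors must be used. The perturbation trick is what lets me exploit all of them through part (2); the only remaining content is verifying the coefficient identity for $\det(A_j+\epsilon I)$ --- a routine consequence of the multilinearity of the determinant --- and noting that a pointwise limit of nonnegative quadratic forms is nonnegative.
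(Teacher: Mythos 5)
Your proposal is correct and complete. Note, however, that the paper itself offers no proof to compare against: it states Sylvester's Theorem purely for reference, citing Sylvester's 1852 paper, and then uses it as a black box in Theorems \ref{thm:kstate-quadform}, \ref{thm:nonneg}, and Proposition \ref{prop:4taxa}. What you have supplied is the standard self-contained argument: the Schur-complement induction (congruence to $\operatorname{diag}(A_{n-1},\,\Delta_n/\Delta_{n-1})$, with congruence preserving definiteness) for the definite case, and the perturbation $A+\epsilon I$ together with the characteristic-polynomial identity
$$\det(A_j+\epsilon I)=\sum_{m=0}^{j} c_m\,\epsilon^{\,j-m},\qquad c_m=\text{sum of }m\times m\text{ principal minors of }A_j,$$
for the semidefinite case. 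All steps check: each $c_m$ is indeed a sum of principal minors of $A$ (the index sets lie in $\{1,\dots,j\}$), so $\det(A_j+\epsilon I)\ge\epsilon^{\,j}>0$, and a pointwise limit of positive definite forms is positive semidefinite. Your remark that $\operatorname{diag}(0,-1)$ blocks any attempt to prove part (1) from leading minors alone is exactly the right caution, and it explains why the paper's Theorem \ref{thm:kstate-quadform} is careful to demand \emph{all} principal minors nonnegative in condition (iv) while requiring only \emph{leading} minors positive in (iv$'$).
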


\medskip

We use Sylvester's Theorem to establish the following theorem.

\smallskip

\begin{theorem}\label{thm:kstate-quadform}
A  $ k \times k \times k$ tensor $P$  is the image of nonsingular stochastic parameters 
for the GM($k$) model on the $3$-leaf tree if, and only
if, its entries are nonnegative and sum to $1$, conditions (i), (ii), and (iii) of 
Theorem \ref{thm:AJRS} and Corollary \ref{cor:complexToMarkov} are satisfied, and
\begin{remunerate}
\item[(iv)]
all leading principal minors of 
\begin{equation}
\det(P_{\cdot\cdot+})P_{+\cdot\cdot}^T \adj(P_{\cdot\cdot +}) P_{\cdot +\cdot},\label{eq:quadform-matrices1}
\end{equation}
 are positive, and all principal minors of the following matrices are nonnegative:
\begin{eqnarray}
&\det(P_{\cdot\cdot+}) \, P_{i\cdot\cdot}^T \, &\adj(P_{\cdot\cdot+}) P_{\cdot + \cdot}, 
\quad \text{for } i=1, \dots, k \label{eq:quadform-matrices},\\
&\det(P_{\cdot\cdot+}) P_{+\cdot\cdot}^T &\adj (P_{\cdot\cdot+}) P_{\cdot i \cdot}, 
\quad  \text{ for } i=1, \dots, k, \nonumber  \\
&\det(P_{+\cdot\cdot})P_{\cdot+\cdot} &\adj(P_{+\cdot\cdot}) P_{\cdot \cdot i}^T, 
\quad \text{ for } i=1, \dots, k. \nonumber
\end{eqnarray}
\end{remunerate}

Moreover, $P$ is  the image of nonsingular positive parameters if,
and only if, its entries are positive and sum to $1$, conditions (i), (ii), and (iii) 
are satisfied and
\begin{remunerate}
\item[(iv')] all leading principal minors of the matrices in 
\eqref{eq:quadform-matrices1} and \eqref{eq:quadform-matrices} are positive.
\end{remunerate}

In both of these cases, the nonsingular parameters are unique up to label swapping.
\end{theorem}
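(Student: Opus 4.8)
The plan is to lean on Corollary~\ref{cor:complexToMarkov}, which already shows that a tensor satisfying (i)--(iii) is the image of \emph{complex} nonsingular parameters, $P=\Diag(\boldsymbol\pi)\cdot(M_1,M_2,M_3)$, unique up to label swapping, with $\boldsymbol\pi\cdot\mathbf 1=1$ and each $M_a$ having unit row sums. All the real work of the theorem is then to show that conditions (iv)/(iv$'$) are precisely what force these a priori complex parameters to be \emph{real} and (semi)positive. The engine is one algebraic identity together with Sylvester's Theorem~\ref{thm:sylvester}.

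First I would record the marginal and slice factorizations
$$P_{\cdot\cdot+}=M_1^T\operatorname{diag}(\boldsymbol\pi)M_2,\quad P_{\cdot+\cdot}=M_1^T\operatorname{diag}(\boldsymbol\pi)M_3,\quad P_{+\cdot\cdot}=M_2^T\operatorname{diag}(\boldsymbol\pi)M_3,$$
together with $P_{i\cdot\cdot}=M_2^T\operatorname{diag}(\boldsymbol\pi\odot M_1(\cdot,i))M_3$ and its analogues for the other slices, where $M_a(\cdot,i)$ denotes the $i$th column. Substituting $\adj(A)=\det(A)A^{-1}$ and cancelling the interior matrix products, a direct computation shows that \eqref{eq:quadform-matrices1} equals $\det(P_{\cdot\cdot+})^2\,M_3^T\operatorname{diag}(\boldsymbol\pi)M_3$, that the first two families in \eqref{eq:quadform-matrices} equal $\det(P_{\cdot\cdot+})^2\,M_3^T\operatorname{diag}(\boldsymbol\pi\odot M_a(\cdot,i))M_3$ for $a=1,2$, and that the third family equals $\det(P_{+\cdot\cdot})^2\,M_1^T\operatorname{diag}(\boldsymbol\pi\odot M_3(\cdot,i))M_1$. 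Each such matrix is real symmetric (being assembled from the real $P$), and the scalar prefactors are strictly positive by condition (iii).

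Granting these identities, the forward implication is immediate: for nonsingular stochastic parameters $\boldsymbol\pi$ has positive entries and the $M_a$ are nonnegative, so $\operatorname{diag}(\boldsymbol\pi\odot M_a(\cdot,i))$ is positive semidefinite and $\operatorname{diag}(\boldsymbol\pi)$ positive definite; conjugation by the real invertible $M_\bullet$ preserves (semi)definiteness, and Sylvester's Theorem turns this into the stated minor conditions (all positive in the strictly positive case). For the reverse implication I would proceed in a fixed order to avoid circularity. Since \eqref{eq:quadform-matrices1}, i.e.\ $M_3^T\operatorname{diag}(\boldsymbol\pi)M_3$, is positive definite, write it as $R^TR$ with $R$ real invertible and set $N=M_3R^{-1}$, so that $N^T\operatorname{diag}(\boldsymbol\pi)N=I$. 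Then for $a=1,2$ the matrix $R^{-T}\big(M_3^T\operatorname{diag}(\boldsymbol\pi\odot M_a(\cdot,i))M_3\big)R^{-1}=N^{-1}\operatorname{diag}(M_a(\cdot,i))N$ is at once real symmetric positive semidefinite (Sylvester applied to the nonnegative principal minors) and \emph{similar} to the diagonal matrix $\operatorname{diag}(M_a(\cdot,i))$; hence its eigenvalues $M_a(1,i),\dots,M_a(k,i)$ are real and nonnegative. Letting $i$ range shows $M_1,M_2$ are real and entrywise nonnegative. With $M_1,M_2$ real and invertible, $\operatorname{diag}(\boldsymbol\pi)=(M_1^{-1})^TP_{\cdot\cdot+}M_2^{-1}$ is real, and then $M_3=\operatorname{diag}(\boldsymbol\pi)^{-1}(M_1^{-1})^TP_{\cdot+\cdot}$ is real; positive definiteness of $M_3^T\operatorname{diag}(\boldsymbol\pi)M_3$ with $M_3$ now real invertible forces $\operatorname{diag}(\boldsymbol\pi)$ positive definite, i.e.\ every $\pi_\ell>0$. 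Finally, $\boldsymbol\pi$ being positive makes $M_1^T\operatorname{diag}(\boldsymbol\pi)M_1$ positive definite, and the same similarity trick applied to the third family yields that $M_3(1,i),\dots,M_3(k,i)$ are real and nonnegative, so $M_3$ is nonnegative too. The unit row sums and $\boldsymbol\pi\cdot\mathbf 1=1$ are inherited from the construction, exhibiting $P$ as the image of nonsingular stochastic parameters; case (iv$'$) is identical except that positive definiteness throughout upgrades every ``$\ge 0$'' to ``$>0$'', and uniqueness up to label swapping is inherited from Proposition~\ref{prop:complexkstates}.

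The step I expect to be the main obstacle is the reverse implication's extraction of \emph{reality} of the parameters, not merely their sign: Corollary~\ref{cor:complexToMarkov} supplies only complex parameters, and a factorization such as $M_3^T\operatorname{diag}(\boldsymbol\pi)M_3=R^TR$ does not by itself force $M_3$ or $\boldsymbol\pi$ to be real. The resolution is the observation that $N^{-1}\operatorname{diag}(M_a(\cdot,i))N$ is simultaneously real symmetric and similar to $\operatorname{diag}(M_a(\cdot,i))$, so the parameters appear as eigenvalues of a real symmetric matrix and are thereby forced to be real, with semidefiniteness fixing their sign; arranging the bookkeeping so that each positive-definiteness hypothesis is available exactly when it is needed is the accompanying technical care.
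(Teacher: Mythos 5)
Your proof is correct, and its skeleton matches the paper's own: both reduce to complex nonsingular parameters via Corollary~\ref{cor:complexToMarkov}, both identify \eqref{eq:quadform-matrices1} and \eqref{eq:quadform-matrices} with $\det(P_{\cdot\cdot+})^2\,M_3^T\operatorname{diag}(\boldsymbol\pi)M_3$, $\det(P_{\cdot\cdot+})^2\,M_3^T\operatorname{diag}(\boldsymbol\pi)\Lambda_{a,i}M_3$ (where $\Lambda_{a,i}=\operatorname{diag}(M_a\mathbf e_i)$), and the analogous product isolating the columns of $M_3$, and both convert (semi)definiteness of these real symmetric matrices into minor conditions via Sylvester's Theorem~\ref{thm:sylvester}. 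Where you genuinely depart from the paper is in the reverse implication. The published proof asserts outright that $M_3^T\operatorname{diag}(\boldsymbol\pi)M_3$ is positive definite if and only if $\boldsymbol\pi$ is positive, and that (given $\boldsymbol\pi>0$) positive semidefiniteness of $M_3^T\operatorname{diag}(\boldsymbol\pi)\Lambda_{1,i}M_3$ is equivalent to nonnegativity of column $i$ of $M_1$; but the ``only if'' halves of these assertions are not automatic, since Corollary~\ref{cor:complexToMarkov} supplies only \emph{complex} parameters, so a priori $M_3$ need not be real and definiteness of a triple product need not transfer to its middle factor. You identified exactly this as the crux and closed it: factoring the positive definite matrix as $R^TR$ with $R$ real invertible, and observing that $N^{-1}\Lambda_{a,i}N$ (with $N=M_3R^{-1}$, so $N^T\operatorname{diag}(\boldsymbol\pi)=N^{-1}$) is simultaneously real symmetric positive semidefinite and similar to $\Lambda_{a,i}$, forces the column entries of $M_1,M_2$ to be real and nonnegative as eigenvalues of a real symmetric positive semidefinite matrix; reality of $\boldsymbol\pi$ and $M_3$, positivity of $\boldsymbol\pi$, and nonnegativity of $M_3$ then follow in that order, which is the reverse of the paper's order of deduction (root distribution first, matrices after). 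Your route buys rigor at the one point where the paper is terse --- the paper's reading can alternatively be repaired by applying complex conjugation to the parameters and invoking uniqueness up to label swapping --- at the cost of a slightly longer chain of bookkeeping; in particular your write-up is, on this step, more complete than the published proof.
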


\smallskip 

\begin{proof}Let $P$ be an arbitrary nonnegative $k\times k\times k$
  tensor whose entries sum to 1.  By Corollary
  \ref{cor:complexToMarkov}, the first 3 conditions are equivalent to $P=\psi_T(\boldsymbol \pi, \{M_1,M_2,M_3\})$
  for complex nonsingular parameters.  We need to show the addition of
  assumption (iv) is equivalent to parameters being
  nonnegative. 
  
    Note that
\begin{align*}
P_{\cdot\cdot+}&=P*_3\mathbf 1= M_1^T\diag(\boldsymbol \pi)M_2,\\
P_{\cdot+\cdot}&=P*_2\mathbf 1=M_1^T\diag(\boldsymbol \pi)M_3,\\ 
P_{+\cdot\cdot}&=P*_1\mathbf 1=M_2^T  \diag(\boldsymbol \pi)M_3.
\end{align*}
Since $P_{\cdot\cdot+}$
is nonsingular, we find 
 \begin{equation}P_{+\cdot\cdot}^TP_{\cdot\cdot+}^{-1}P_{\cdot+\cdot}=M_3^T\diag(\boldsymbol \pi) M_3\label{eq:symprod}\end{equation}
  is symmetric, and
  the matrix of a positive definite quadratic form if, and only if the entries of
  $\boldsymbol \pi$ are positive. 
  Equivalently, by Sylvester's theorem, all leading principal minors of this matrix
  must be positive.

  Similarly, using slices one has $$P_{i\cdot\cdot}^T
  P_{\cdot\cdot+}^{-1}P_{\cdot + \cdot}=M_3^T\diag(\boldsymbol
  \pi)\Lambda_{1,i} M_3$$ where $\Lambda_{1,i}=\diag(M_1\mathbf e_i)$
  is the diagonal matrix with entries from the $i$th column of
  $M_1$. Thus its principal minors being nonnegative is equivalent
  (since we already have that the entries of $\boldsymbol \pi$ are
  positive) by Sylvester's Theorem to the entries in the $i$th column of $M_1$ being
  nonnegative. The product $P_{+\cdot\cdot}^T
  P_{\cdot\cdot+}^{-1}P_{\cdot i \cdot}$ similarly can be used for a
  condition that the $i$th column of $M_2$ be nonnegative, and the
  product $P_{\cdot+\cdot} P_{+\cdot\cdot}^{-1}P_{\cdot \cdot i}^T$
  for the columns of $M_3$. 
  
  Multiplying all these matrices by the square of
  an appropriate nonzero determinant clears denominators and preserves
  signs, yielding \eqref{eq:quadform-matrices1} and \eqref{eq:quadform-matrices}. 

For the second statement, that the matrices in
  \eqref{eq:quadform-matrices1} and \eqref{eq:quadform-matrices}
 have positive leading principal minors is equivalent, by Sylvester's
  Theorem, to the positive definiteness of the quadratic forms, which in turn is equivalent to
  the positiveness of parameters.  Since these parameters are nonsingular, the only source
  of non-uniqueness is label swapping.
 \qquad\end{proof}
\smallskip

\emph{Remark.}
Matrix products such as that of equation \eqref{eq:symprod} appeared in 
\cite{AR03}, where their symmetry was used to produce phylogenetic 
invariants, but their usefulness for stating nonnegativity of parameters was overlooked.

\smallskip

\emph{Remark.} The $j\times j$ minors of the matrices in \eqref{eq:quadform-matrices1} 
and \eqref{eq:quadform-matrices} are
polynomials in the entries of $P$ of degree $j(2k+1)$, with
$j=1,\dots, k$. However, as the leading determinant in those products  is real and nonzero, 
one can remove an even power of it without affecting the sign of the minors. Thus 
the polynomial inequality of degree $j(2k+1)$ can be replaced by one of lower degree,
$j(k+1)+e_jk$, where $e_j=0$ or $1$ is the parity of $j$.

\smallskip

In the case of the $2$-state model, the above result can be made more
complete, by also explicitly describing the image of singular parameters.  
While semialgebraic characterizations of probability distributions for both nonsingular and
singular parameters on the 3-leaf tree have been given previously by
\cite{SettimiSmith2000}, \cite{Auvray2006}, \cite{KL2011}, and \cite{SZ2011}, 
we provide another since our approach
is novel.

\smallskip

\begin{theorem}\label{thm:nonneg}
  A tensor $P$ is in the image of the stochastic parameterization map $\psi_T$ for the
  GM(2) model on the 3-leaf tree if, and only if, its entries are nonnegative and 
  sum to 1, and one of the following occur:
\begin{remunerate}
\item \label{cond:2pos} $\Delta(P)>0$, $\det(P*_i\mathbf 1)\ne 0$ for $i=1,2,3$,
  all leading principal minors of
 $$ \det(P_{\cdot\cdot+})P_{+\cdot\cdot}^T \adj(P_{\cdot\cdot +}) P_{\cdot +\cdot}$$
 are positive, and all principal minors
 of the following six
 matrices are nonnegative:
\begin{align*}
\det(P_{\cdot\cdot+}) \, P_{i\cdot\cdot}^T \, &\adj(P_{\cdot\cdot+}) P_{\cdot + \cdot}, \quad \text{for } i=1,2,\\
\det(P_{\cdot\cdot+}) P_{+\cdot\cdot}^T &\adj (P_{\cdot\cdot+}) P_{\cdot i \cdot},\quad  \text{ for } i=1,2,\\
\det(P_{+\cdot\cdot})P_{\cdot+\cdot} &\adj(P_{+\cdot\cdot}) P_{\cdot \cdot i}^T, \quad \text{ for } i=1,2.
\end{align*}
In this case, $P$ is the image of unique (up to label swapping) nonsingular
parameters.
\item \label{cond:2zero} $\Delta(P)=0$, and all $2 \times 2$ minors of at least one of the matrices
  $\Flat_{1|23}(P)$, $\Flat_{2|13}(P)$, $\Flat_{3|12}(P)$ are zero. 
In this case, $P$ arises from singular parameters. If $P$ has all positive entries, then it is the image of infinitely many singular stochastic parameter choices.  
\end{remunerate}
\end{theorem}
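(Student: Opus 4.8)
The plan is to show that the two alternatives of the theorem correspond exactly to the nonsingular and singular parameter regimes, and that together they exhaust the stochastic image. I would dispose of alternative~(\ref{cond:2pos}) first, by specialization. For $k=2$ the adjoint equalities~(i) of Theorem~\ref{thm:AJRS} reduce to $0=0$, one has $f_i=\Delta$, and for a real tensor Theorem~\ref{thm:dSL} sharpens ``$\Delta(P)\neq 0$'' to ``$\Delta(P)>0$'' for membership in the $\GL(2,\R)^3$-orbit of $D$. Hence the hypotheses collected in~(\ref{cond:2pos}) are precisely the $k=2$ instance of Theorem~\ref{thm:kstate-quadform} together with the second statement of Corollary~\ref{cor:complexToMarkov}. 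Consequently~(\ref{cond:2pos}) holds if and only if $P$ is the image of nonsingular stochastic parameters, unique up to label swapping; this settles both directions and the uniqueness assertion in the nonsingular case.

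Next I would establish the dichotomy and the necessity direction. For $P$ in the stochastic image, Proposition~\ref{prop:ind} (for $k=2$) says the parameters are nonsingular exactly when there is no proper partition of $\{1,2,3\}$ into independent sets, which for three leaves amounts to none of $\Flat_{1|23}(P)$, $\Flat_{2|13}(P)$, $\Flat_{3|12}(P)$ having rank $\le 1$. As this is a property of $P$ alone, it splits the image into a nonsingular locus (no rank-one flattening) and a singular locus (some rank-one flattening), and in particular forbids a single $P$ from arising from both regimes. If the parameters are nonsingular, the argument in the paragraph following Theorem~\ref{thm:dSL} gives $\Delta(P)>0$, so $P$ falls under~(\ref{cond:2pos}) by the previous step. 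If they are singular, some flattening has rank $\le 1$; since the $\GL(2,\C)^3$-orbit of $D=\Diag(\ones)$ consists only of tensors all of whose flattenings have rank $2$ (rank being orbit-invariant), $P$ lies outside that orbit and Theorem~\ref{thm:dSL} forces $\Delta(P)=0$. Thus singular parameters put $P$ into~(\ref{cond:2zero}); this also shows the $\Delta(P)=0$ clause is already implied by the flattening clause, and that the two alternatives are mutually exclusive.

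For the converse in~(\ref{cond:2zero}) I would construct parameters explicitly. Assuming, say, $\Flat_{1|23}(P)$ has rank $\le 1$ (hence $=1$, as $P\neq 0$), write $P(i,jk)=u_i\,w_{jk}$ with $u$ the probability vector giving the marginal of leaf~$1$ and $w=\Psbc$ the nonnegative marginal of leaves $2,3$, which sums to $1$. I realize leaf~$1$ as an independent leaf by taking $M_1$ with both rows equal to $u$, a rank-one (hence singular) Markov matrix, and realize $w$ as a two-leaf GM($2$) distribution: with $r$ the vector of row sums of $w$ and $N(j,k)=w_{jk}/r_j$, one checks $w=\operatorname{diag}(r)\,N$, that is $\boldsymbol\pi=r$, $M_2=I$, $M_3=N$ (with a routine adjustment if a row or column of $w$ vanishes). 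Then $P=\Diag(\boldsymbol\pi)\cdot(M_1,M_2,M_3)$ with singular parameters, and the symmetric constructions handle a rank-one $\Flat_{2|13}$ or $\Flat_{3|12}$; this shows the flattening condition suffices for membership in the image via singular parameters. Finally, when $P$ is strictly positive the factor $w$ is positive, and the stochastic factorizations $w=M_2^T\operatorname{diag}(\boldsymbol\pi)M_3$ form a positive-dimensional family meeting the interior of the parameter space (the root and Markov parameters surject onto the three-dimensional space of $2\times 2$ probability matrices, so the fiber over a positive $w$ is positive-dimensional). Composing each with the fixed independent leaf yields infinitely many singular stochastic realizations of $P$.

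The main obstacle is the converse in~(\ref{cond:2zero}): converting the purely combinatorial rank-one flattening hypothesis into an honest independence decomposition of $P$ into an independent leaf and a two-leaf marginal, and then certifying that \emph{every} $2\times 2$ probability matrix is realizable as a two-leaf GM($2$) distribution. Once that decomposition is in hand, the remaining assertions follow from the cited results, and the ``infinitely many'' count reduces to a dimension-of-fiber computation for the two-leaf model.
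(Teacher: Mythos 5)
Your proposal is correct, and on the crucial step --- the converse in case 2 --- it takes a genuinely different route from the paper's. The paper deduces from $\Delta(P)=0$ together with a rank-one flattening that $P$ lies in the $\GL(2,\R)^3$-orbit of $\Diag(1,0)$ or of the tensor $E$ with slices $\tfrac12 I$ and $0$ (citing the orbit classification of \cite[Table 7.1]{dSL}), and then constructs singular stochastic parameters separately for these two orbit types. You bypass the classification entirely: a rank-one nonnegative flattening, say $\Flat_{1|23}(P)$, forces the independence factorization $P(i,j,k)=u_iw_{jk}$ into marginals, and one explicit construction ($M_1$ with both rows $u$, $M_2=I$, $M_3$ the conditional matrix, root distribution the row sums of $w$, with the stated adjustment for zero rows) settles all singular sub-cases at once. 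This is more elementary and self-contained, and it makes explicit a redundancy the paper leaves implicit: since flattening ranks are invariant under the $\GL(2,\C)^3$-action and all flattenings of $\Diag(\ones)$ have rank $2$, the flattening clause already implies $\Delta(P)=0$ via Theorem \ref{thm:dSL}, whereas the paper's converse genuinely uses both clauses in order to invoke the orbit table. Your case 1 (specialize Theorem \ref{thm:kstate-quadform} to $k=2$, then sharpen $\Delta(P)\ne 0$ to $\Delta(P)>0$ by Theorem \ref{thm:dSL}) is the paper's argument essentially verbatim, and your dichotomy via Proposition \ref{prop:ind} replaces the paper's direct computation $\Flat_{1|23}(P)=M_1^T\operatorname{diag}(\boldsymbol\pi)M$ for singular parameters; both are sound.

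One spot needs repair: the justification of the ``infinitely many'' claim. In the real/stochastic (semialgebraic) category, surjectivity of the two-leaf parameterization onto $2\times 2$ probability matrices plus a dimension count does \emph{not} by itself guarantee that the fiber over a particular positive $w$ is positive-dimensional --- real polynomial surjections can have finite fibers over special image points (e.g., $(x,y)\mapsto x^2+y^2$ over $0$). The fix is the perturbation argument the paper itself uses, and which your phrase ``meeting the interior'' gestures at: for stochastic $M_2$ sufficiently close to $I$, positivity of $w$ ensures that $\boldsymbol\pi$, the vector of row sums of $(M_2^T)^{-1}w$, and $M_3=\operatorname{diag}(\boldsymbol\pi)^{-1}(M_2^T)^{-1}w$ remain stochastic, so the fiber over $w$ contains a two-parameter family of stochastic factorizations; composing each with the fixed rank-one $M_1$ gives the required infinitely many singular parameter choices.
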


\smallskip

\begin{proof} Using Theorem \ref{thm:kstate-quadform}, and the observations 
made for $k=2$ immediately following the statement of Theorem \ref{thm:AJRS},
case 1 is already established under the weaker condition that $\Delta(P)\ne 0$. 
However, since the parameters are nonsingular and real when $\Delta(P) \ne 0$
and the conditions of case 1 are satisfied,
by Theorem \ref{thm:dSL}
we may assume equivalently that $\Delta(P) > 0$.

\smallskip

To establish case 2, first assume $P=\psi_T(\boldsymbol \pi,\{M_1,M_2,M_3\})$ 
is the image of singular stochastic parameters.
Then certainly $P$ has nonnegative entries summing to 1, and  by equations 
\eqref{eq:3leafdist} and \eqref{eq:Deltainvariance}, $\Delta(P)= 0$. Since 
$$\Flat_{1|23}(P)=M_1^T\diag(\boldsymbol \pi) M,$$
where $M$ is the $2\times 4$ matrix obtained by taking the tensor product of 
corresponding rows of $M_2$ and $M_3$, this flattening has rank 1, if
$\boldsymbol \pi$ has a zero entry or $M_1$ has rank 1. Similar products 
for the other flattenings show that singular parameters imply at least one 
of the flattenings $\Flat_{1|23}(P), \Flat_{2|13}(P), \Flat_{3|12}(P)$ has rank 1, 
and hence its $2\times 2$ minors vanish.

Conversely, suppose $\Delta(P)=0$ and at least one of the flattenings has 
vanishing $2\times2$ minors, and hence rank 1. Then by the classification of orbits given in \cite[Table 7.1]{dSL}, 
$P$ is in the $GL(2,\R)^3$-orbit of one of the following four tensors:
the tensor $\Diag(1,0)$ (in which case 
all three flattenings have rank 1) or one of the 3 tensors with parallel 
slices $I$ and the zero matrix (in which case exactly one of the 
flattenings has rank 1).

If $P=\Diag(1,0)\cdot(g_1,g_2,g_3)$, then $P(i,j,k)=g_1(1,i)g_2(1,j)g_3(1,k)$. Since 
the entries of $P$ are nonnegative and sum to 1, one sees the top rows of 
each $g_i$ can be chosen to be nonnegative, summing to 1. The bottom 
row of each $g_i$ can also be replaced with any nonnegative row summing 
to 1 that is independent of the top row. Taking $\boldsymbol \pi=(1,0)$, this 
gives us infinitely many choices of singular stochastic parameters giving 
rise to $P$. Alternatively, one could  choose each Markov matrix to have 
two identical rows, and any $\boldsymbol \pi$ with nonzero entries to obtain 
other singular stochastic parameters leading to $P$.

For the remaining cases assume, without loss of generality, that $P=
E\cdot(g_1,g_2,g_3)$, where $E_{\cdot\cdot1}=(1/2) \, I$, and
$E_{\cdot\cdot2}$ is the zero matrix.  Then $P_{\cdot\cdot1}=g_3(1,1)
(g_1^Tg_2)$ and $P_{\cdot\cdot2}=g_3(1,2) (g_1^Tg_2)$.  Since the
entries of $P$ are nonnegative and add to 1, we may assume that the
top row of $g_3$ is also nonnegative and adds to 1. Choose $M_3$ to
have two identical rows matching the top row of $g_3$.  Now
$P_{\cdot\cdot+}=g_1^Tg_2$ is a rank-2 nonnegative matrix with entries
adding to 1. Such a matrix can be written in the form
$P_{\cdot\cdot+}=M_1^T\diag(\boldsymbol \pi)M_2$ with, for instance
$M_1=I$, $\boldsymbol \pi=P_{\cdot++}$, $M_2=\diag(\boldsymbol
\pi)^{-1}P_{\cdot\cdot+}$.  Then one has $P=\psi_T(\boldsymbol \pi,
\{M_1,M_2,M_3\}).$ If $P$ has positive entries one may also choose
$M_1$ sufficiently close to $I$ so that $M_1$, $\boldsymbol
\pi=(M_1^T)^{-1}P_{\cdot++}$, $M_2=\diag(\boldsymbol
\pi)^{-1}{{(M_1^T)}^{-1}}P_{\cdot\cdot+}$ all have nonnegative
entries, thus obtaining infinitely many singular parameter choices
leading to $P$. (The example of $P=E$ shows that with only nonnegative
entries there may be only finitely many singular parameter choices
leading to $P$.)  \qquad\end{proof}

\smallskip

\emph{Remark.} The analysis of the singular parameter case in this proof, by 
appealing without explanation to \cite[Table 7.1]{dSL}, has
not made explicit the importance of the notion of \emph{tensor rank.}
Indeed, that concept is central to both \cite{dSL} and \cite{AJRS2012}
and thus plays a crucial behind-the-scenes role in this work as well. 
The first singular case, a tensor in the orbit of $\Diag(1,0)$, is of tensor 
rank $1$, while the second,  a tensor in the orbit of $E$, is of tensor 
rank $2$ yet multlilinear rank $(2,2,1)$. The nonsingular case is those 
of tensor rank $2$ and multlinear rank $(2,2,2)$.

\emph{Remark.} In case \ref{cond:2pos} of the theorem, the polynomial
inequalities are of degree 4 and 2 (from $\Delta$ and the
determinants) and degree 5 and 10 (from the minors), but the degree 10
ones can be lowered to degree 6 by removing a factor of a determinant
squared.  The polynomial equalities appearing in case \ref{cond:2zero}
have degree $4$ and $2$, with the quadratics simply expressing that
one of the leaf variables is independent of the others.

\smallskip

Minor modifications to the argument give the extension to positive
parameters below.

\begin{theorem}\label{thm:pos}
  A tensor $P$ is in the image of the positive stochastic parameterization map for the GM(2)
  model on the 3-leaf tree if, and only if, its entries are positive and 
  sum to 1, and the conditions of Theorem \ref{thm:nonneg} are met 
with the following modification to case 1:  all leading principal minors
  of the matrices are positive.
\end{theorem}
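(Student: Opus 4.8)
The plan is to re-run the proof of Theorem~\ref{thm:nonneg} almost verbatim, replacing each appeal to part~(1) of Sylvester's Theorem---nonnegativity of all principal minors, which encodes nonnegativity of the numerical parameters---by an appeal to part~(2)---positivity of all leading principal minors, which encodes strict positivity. This is precisely the passage from condition~(iv) to~(iv') already carried out in the proof of Theorem~\ref{thm:kstate-quadform}, now specialized to $k=2$, and the hypothesis that the entries of $P$ are nonnegative is correspondingly strengthened to positivity. Only case~1 changes in its analytic content; in case~2 the polynomial conditions are left identical and only realizability by \emph{positive} parameters needs to be rechecked.

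First I would treat the nonsingular case. As before, the conditions $\Delta(P)>0$ and $\det(P*_i\mathbf 1)\ne 0$ for $i=1,2,3$ guarantee that $P=\Diag(\boldsymbol\pi)\cdot(M_1,M_2,M_3)$ for real nonsingular parameters. The matrix of \eqref{eq:symprod} equals $M_3^T\diag(\boldsymbol\pi)M_3$, and its quadratic form is positive definite exactly when $\boldsymbol\pi$ is strictly positive; by Sylvester's Theorem part~(2) this is equivalent to all its leading principal minors being positive. Each of the six sliced products factors as $M_3^T\diag(\boldsymbol\pi)\Lambda M_3$ (or an analog in another index) with $\Lambda$ the diagonal matrix built from a column of some $M_i$, so positive definiteness of all these forms is equivalent to strict positivity of every column of every $M_i$. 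Hence requiring all leading principal minors of all the listed matrices to be positive is equivalent to positivity of all numerical parameters, which is exactly the modification to case~1.

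Next I would verify case~2 under the positivity hypothesis. The forward direction is unchanged: positive singular parameters still force $\Delta(P)=0$ and a rank-$1$ flattening. For the converse I would revisit the explicit constructions in the proof of Theorem~\ref{thm:nonneg} and arrange strict positivity. In the orbit of $\Diag(1,0)$ one has $P(i,j,k)=a_ib_jc_k$ with $a,b,c$ strictly positive distributions, so any strictly positive $\boldsymbol\pi$ together with the three Markov matrices whose rows all equal $a$, $b$, $c$ respectively gives positive singular parameters. In the orbit of a tensor with parallel slices $I$ and $0$, the marginal $P_{\cdot\cdot+}=g_1^Tg_2$ is a strictly positive rank-$2$ matrix; one takes $M_3$ with two identical strictly positive rows, and recovers $\boldsymbol\pi=(M_1^T)^{-1}P_{\cdot++}$ and $M_2=\diag(\boldsymbol\pi)^{-1}(M_1^T)^{-1}P_{\cdot\cdot+}$ from the formulas already used in Theorem~\ref{thm:nonneg}.

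The hard part will be this last step, since those formulas are stated there with $M_1=I$, whose off-diagonal entries vanish. At $M_1=I$ the derived parameters $\boldsymbol\pi=P_{\cdot++}$ and $M_2=\diag(\boldsymbol\pi)^{-1}P_{\cdot\cdot+}$ are already strictly positive because $P$ is; the remaining issue is purely to make $M_1$ itself positive. I expect to resolve this by a continuity argument: replace $I$ by a strictly positive Markov matrix close to it, and observe that $\boldsymbol\pi$ and $M_2$ depend continuously on $M_1$ while positivity is an open condition, so for a small enough perturbation all three parameters remain strictly positive. One then checks that $M_2$ still has unit row sums and that $M_3$ remains rank~$1$, so the parameters are genuinely singular; these verifications are the only ingredient not already present in the proof of Theorem~\ref{thm:nonneg}.
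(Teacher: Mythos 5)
Your proposal is correct and follows essentially the same route as the paper: case 1 by invoking part (2) of Sylvester's Theorem exactly as in the proof of Theorem \ref{thm:kstate-quadform}, and case 2 by using the identical-rows (``second'') construction for the orbit of $\Diag(1,0)$ and the perturbation of $M_1$ away from $I$ for the orbit of $E$. The step you flag as ``the hard part'' is in fact already resolved this way in the proof of Theorem \ref{thm:nonneg} (the choice of $M_1$ sufficiently close to $I$ with $\boldsymbol\pi=(M_1^T)^{-1}P_{\cdot++}$ and $M_2=\diag(\boldsymbol\pi)^{-1}(M_1^T)^{-1}P_{\cdot\cdot+}$), so your continuity argument is precisely what the paper relies on when it says to replace ``nonnegative'' with ``positive.''
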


\begin{proof}
  Case 1 is proved by combining the arguments for Theorems \ref{thm:kstate-quadform} and \ref{thm:nonneg}.
  
  For case 2, if $P$ is in the orbit of $\Diag(1,0)$ the argument of Theorem \ref{thm:nonneg}
  still applies, replacing `nonnegative' with 
  `positive', and using the second construction of singular parameters.  If $P$ is in the orbit of $E$
  we simply replace `nonnegative' with `positive' in the argument.
 \qquad\end{proof}

\smallskip

As mentioned above, semialgebraic descriptions of the binary general Markov 
model on trees have been given previously, but in ways 
where generalizations to $k$-state models were not apparent. Although 
we have considered only the $3$-leaf tree (with larger trees to be discussed 
in \S \ref{sec:nleaf}) thus far, we pause here to discuss some connections to the 
recent works of Zwiernik and Smith \cite{SZ2011} 
and Klaere and Liebscher \cite{KL2011} on semialgebraic 
descriptions of the binary model on trees, as well as the older work of 
Pearl and Tarsi \cite{PT86}.  

Though using different approaches (and in the case of \cite{PT86} with different goals), 
all three of these works
emphasize statistical interpretations of various quantities computed from a 
probability distribution $P$ (\emph{e.g.}, covariances, conditional covariances, 
moments, tree cumulants).  While analogs of some of the same quantities 
appear in our generalization to 
$k$-state models, we have used algebra, rather than statistics, to guide our 
derivation. Although an inequality such as $\det(P*_i\mathbf 1)\ne 0$ which appears 
in our description can be given a simple statistical interpretation when $k=2$ 
(that two leaf variables are not independent), for larger $k$ its meaning is more 
subtle, as it is tied to our notion of nonsingular parameters. Thus our generalization 
to $k$-state models uses a more detailed development than the simple 
generalization of statistical concepts from $k=2$ to larger $k$.

The role played by the hyperdeterminant $\Delta$ in giving a 
semialgebraic model description for $k=2$ was first made clear in \cite{SZ2011}.
Its role was essentially independently discovered in \cite[Theorem 6]{KL2011}, 
though without recognition that it is a classical algebraic object. Indeed, both 
works recognize $\Delta$ as an expression in the 2-variable and 3-variable 
covariances (\emph{i.e.}, central moments). This is a fascinating 
intertwining of algebra and statistics, yet we did not find it helpful in understanding 
the correct analog of $\Delta$ for higher $k$; rather, 
\cite{AJRS2012} develops the analog $f_i$ used here through algebraic 
motivation entirely. It would nonetheless be interesting to understand whether $f_i$ can be 
described in more statistical language.

The explicit semialgebraic model descriptions for the $2$-state 
model given in both \cite{SZ2011} and \cite{KL2011} take on quite
different forms than ours. This is not a surprise as such a description 
is far from unique, and different reasoning may produce different 
inequalities. The version in \cite{SZ2011}, for example, is stated in a 
different coordinate system, using tree cumulants rather 
than the entries of $P$. We find all these descriptions 
valuable, as what should be considered the simplest, or most natural, description is not obvious.

The focus of \cite{PT86} is on
recovery of  parameters for \GMtwo on the $3$-leaf tree from a probability distribution 
assumed to have arisen
from stochastic parameters, in an approach based on earlier work 
in latent structure analysis \cite{LH68}.  In addressing this question, however,
semialgebraic conditions on the distribution are obtained.  For instance, the non-vanishing of denominators is needed for formulas 
to make sense, and thus certain polynomials must be nonzero. (The authors 
seem to assume nonsingularity of parameters, though that is never clarified in the paper.)
While $\Delta$ never arises in \cite{PT86},
it is remarkable to note, then, that using the results of Theorem \ref{thm:dSL}
and making explicit the tacit assumptions that various rational expressions exist and
are real, the conditions given in Theorem 1 of \cite{PT86} are sufficient
to show $\hyper > 0$, and thus $P$ is in the image of stochastic nonsingular parameters. Thus
one can extract a semialgebraic model description from this work, even if that was not its goal.

\section{Inequalities for 3-leaf trees via Sturm theory}\label{sec:Sturm}

The semialgebraic model descriptions given in the previous section
have the advantage of being easily describable in a uniform way for
all $k$. However, semialgebraic descriptions are not unique, and there
is no clear notion of what description should be considered simplest.
It is also of interest, therefore, to obtain alternative polynomial
inequalities, possibly of lower degree, that must also be satisfied by
probability distributions in the model, in the hopes that they lead to
another, perhaps better, semialgebraic model description, or that they
might be of further use for testing whether a distribution arises from
the model. We explore another approach to doing so here.

\subsection{Review of Sturm Theory}

Sturm theory can be used to impose conditions that roots of a
univariate polynomial lie in a certain interval. We briefly recall
basic definitions and results.  Suppose that $f(x) \in \R[x]$ is a
non-constant polynomial of degree $m$, with no multiple roots, and we
wish to count the roots of $f$ in an interval $(a,b)$ where
$f(a)f(b)\ne 0$.  Then a \emph{Sturm sequence} $\mathcal S$ for $f$ on
the interval $[a,b]$ is a sequence $f = f_0, f_1, \dots, f_m$ of
polynomials satisfying certain sign relationships at the zeros of the
$f_j$ in the interval $[a,b]$.  We give an example 
below, but for specifics, see, for instance, \cite{Jacobson}.  
For any $c \in [a,b]$ which is not a root of any $f_i$, the \emph{sign variation}
$V_S(c)$ is the number of sign changes in the sequence $f(c), f_1(c),
\ldots, f_m(c)$.

\begin{theorem}[Sturm's Theorem] \label{thm:Sturms-thm}
Let $f(x) \in \R[x]$ be a non-constant polynomial and $\mathcal S$
a Sturm sequence for $f$ on $[a,b]$.  Then 
the number of distinct roots of $f(x)$ in $(a,b)$ is equal to
$V_{\mathcal S} (a)-V_\mathcal{S} (b)$.  
\end{theorem}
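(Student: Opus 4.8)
The plan is to prove Sturm's Theorem, Theorem~\ref{thm:Sturms-thm}, by tracking how the sign variation $V_{\mathcal S}(c)$ changes as the test point $c$ moves continuously from $a$ to $b$. The key observation is that $V_{\mathcal S}(c)$ is a step function of $c$ which can only jump when $c$ passes through a root of one of the polynomials $f_i$ in the Sturm sequence. So the entire argument reduces to a local analysis at each such root, with the goal of showing that the \emph{only} place where the sign variation actually changes value is where $c$ crosses a root of $f=f_0$ itself, and that there it decreases by exactly one each time. Summing these local contributions over the interval then yields $V_{\mathcal S}(a) - V_{\mathcal S}(b) = (\text{number of roots of } f \text{ in } (a,b))$.

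First I would fix precisely the defining sign relationships of a Sturm sequence that I intend to use, since the excerpt defers these to \cite{Jacobson}: namely that $f_m$ has constant sign on $[a,b]$ (no roots there), that consecutive members $f_{i-1}, f_{i+1}$ never vanish simultaneously, and that at a root $x_0$ of an intermediate $f_i$ (with $0 < i < m$) the neighbors satisfy $f_{i-1}(x_0)\,f_{i+1}(x_0) < 0$. With these in hand, the two cases of the local analysis are as follows. At a root $x_0$ of an intermediate $f_i$, I would examine the three-term window $f_{i-1}, f_i, f_{i+1}$ just to the left and just to the right of $x_0$; because $f_{i-1}$ and $f_{i+1}$ have strictly opposite signs (and are nonzero) throughout a small neighborhood, that window contributes exactly one sign change regardless of the sign of $f_i$, so no net change in $V_{\mathcal S}$ occurs there.

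The genuinely consequential case is a root $x_0$ of $f = f_0$ itself. Here I would use that $f$ has no multiple roots, so $f'(x_0) \neq 0$, together with the standard fact that in a Sturm sequence $f_1$ plays the role of $f'$ (or is sign-compatible with it) near roots of $f_0$; this forces $f_0$ and $f_1$ to have \emph{opposite} signs immediately to the left of $x_0$ and the \emph{same} sign immediately to the right. Hence the pair $(f_0, f_1)$ contributes one sign change on the left and zero on the right, so $V_{\mathcal S}$ drops by exactly $1$ as $c$ crosses each root of $f$. I would note that the no-multiple-roots hypothesis in the statement is exactly what guarantees this clean transition.

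The main obstacle I anticipate is not any single estimate but rather pinning down the correct axiomatization of ``Sturm sequence'' so that the two local computations go through cleanly --- in particular ensuring the behavior at roots of $f_0$ is governed by $f_1$ as claimed. A secondary technical point is the bookkeeping to confirm that sign changes at positions \emph{away} from the active window are unaffected when $c$ crosses $x_0$; this is routine but must be stated, since $V_{\mathcal S}$ counts changes across the whole sequence. Once the local analysis is settled, the global conclusion follows by summing: each root of $f$ in $(a,b)$ decreases $V_{\mathcal S}$ by one and nothing else changes it, giving $V_{\mathcal S}(a) - V_{\mathcal S}(b)$ as the count of distinct roots.
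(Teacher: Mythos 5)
The paper offers no proof of Theorem~\ref{thm:Sturms-thm}: it is quoted as the classical theorem of Sturm, with even the precise sign conditions defining a Sturm sequence deferred to \cite{Jacobson}, so there is no in-paper argument to compare yours against. Your sketch is the standard textbook proof, and it is correct in substance: $V_{\mathcal S}(c)$ is locally constant away from the (finitely many) roots of the $f_i$; at a root of an intermediate $f_i$ the condition $f_{i-1}(x_0)f_{i+1}(x_0)<0$ forces the window $(f_{i-1},f_i,f_{i+1})$ to contribute exactly one sign change on both sides, so nothing happens; at a root of $f_0$ the pair $(f_0,f_1)$ loses exactly one sign change; summing gives $V_{\mathcal S}(a)-V_{\mathcal S}(b)$. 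Three details deserve tightening if you write this out. First, your second axiom is garbled: the standard requirement is that \emph{consecutive} terms $f_i,f_{i+1}$ have no common zero (which your third axiom already implies at interior indices); $f_{i-1}$ and $f_{i+1}$ are not consecutive, and nonadjacent terms of a Sturm sequence genuinely can vanish at the same point --- which is exactly why the window-by-window bookkeeping you flag as ``routine'' must actually be stated. Second, for a \emph{general} Sturm sequence the crossing behavior at roots of $f_0$ (opposite signs of $f_0,f_1$ on the left, equal signs on the right) is part of the definition, not something you derive; squarefreeness of $f$ delivers it only for the standard sequence, where $f_1=f'$ literally and $f'(x_0)\neq 0$ applies. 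Since the paper's theorem is stated for an abstract Sturm sequence, the cleanest write-up takes this as an axiom and notes that the standard sequence of a squarefree $f$ satisfies it. Third, $V_{\mathcal S}(a)$ and $V_{\mathcal S}(b)$ are only defined when $a,b$ are roots of no $f_i$ (the paper's surrounding text assumes $f(a)f(b)\neq 0$); your continuity argument needs this hypothesis made explicit at the endpoints.
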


\smallskip

Though other constructions of Sturm sequences exist, we use the
\emph{standard sequence}, derived using a modified Euclidean
algorithm.  If $f = f_0$ is a polynomial of degree $m > 0$, then set
$f_1 = f'$, and for $j=2, \ldots, m$, take $f_j$ to
be the opposite of the 
remainder of division of $f_{j-2}$ by $f_{j-1}$.  This yields a Sturm
sequence for $f$ on any interval $[a,b]$ with $f_j(a) \neq 0$, $f_j(b)
\neq 0$.

To illustrate, suppose that $f(x) = x^2 + c_1 x +c_0 \in \R[x]$.  Then
the standard sequence is $f, 2x+c_1, \frac{c_1^2}{4} -c_0$.  For the
particular choice of coefficients $c_1 = -\frac{3}{4}$ and $c_0 =
\frac{1}{8}$, using this sequence on $[0,1]$, we calculate that
$f_0(0), f_1(0), f_2(0)$ is the sequence $\frac{1}{8}, -\frac{3}{4},
\frac{1}{64}$, and $f_0(1), f_1(1), f_2(1) = \frac{3}{8}, \frac{5}{4},
\frac{1}{64}$.  Thus, $V_{\mathcal S}(0) = 2$, $V_{\mathcal S} (1) =
0$, so $f$ has $V_{\mathcal S}(0) - V_{\mathcal S}(1) = 2$ roots in
$(0,1)$. Indeed, the factorization $f(x) =
\big(x-\frac{1}{2}\big)\big(x-\frac{1}{4}\big)$ shows this directly.

Two comments are in order: First, in this example observe
that $f_2(x)$ is one-fourth the discriminant of $f(x)$, and its
positivity ensures that a monic quadratic has distinct and real
roots. This shows that Sturm theory can produce familiar
algebraic expressions, such as the quadratic discriminant, and thus gives a tool for generalizing them.  The second observation we state more formally,
as it is needed in our arguments below.

\begin{corollary}\label{cor:signs}
If $f(x)$ is of degree $m$ with neither $0$ nor $1$ a root, and $S$ is a Sturm 
sequence for $f$ on $[0,1]$,  then $f$ has $m$
distinct roots in this interval precisely when $V_\mathcal S(0) =m$ 
and $V_\mathcal S(1) = 0$.
\end{corollary}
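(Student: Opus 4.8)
The plan is to derive Corollary~\ref{cor:signs} directly from Sturm's Theorem (Theorem~\ref{thm:Sturms-thm}) by pinning down the two extreme values that the sign variation $V_{\mathcal S}$ can take. The key observation is that a standard Sturm sequence for $f$ of degree $m$ consists of exactly $m+1$ polynomials $f_0, f_1, \ldots, f_m$, so for any point $c$ that is not a root of any $f_i$, the quantity $V_{\mathcal S}(c)$ counts sign changes in a sequence of $m+1$ nonzero real numbers. Hence $0 \le V_{\mathcal S}(c) \le m$ automatically, giving us the natural bounds we need.

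First I would invoke Theorem~\ref{thm:Sturms-thm} to state that the number of distinct roots of $f$ in $(0,1)$ equals $V_{\mathcal S}(0) - V_{\mathcal S}(1)$; this is legitimate precisely because we assume neither $0$ nor $1$ is a root of $f$, and for a standard sequence the endpoint values $f_j(0), f_j(1)$ are nonzero (this is the hypothesis under which the standard sequence is a genuine Sturm sequence, as noted just before the corollary). Since $f$ has degree $m$, it has at most $m$ distinct roots in $(0,1)$, and the count equals $m$ exactly when all $m$ roots are real, simple, and lie in the open interval. The reverse direction is immediate: if $V_{\mathcal S}(0) = m$ and $V_{\mathcal S}(1) = 0$, then the difference is $m$, so Sturm's Theorem gives $m$ distinct roots in $(0,1)$.

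For the forward direction, suppose $f$ has $m$ distinct roots in $(0,1)$. Then $V_{\mathcal S}(0) - V_{\mathcal S}(1) = m$. Combining this with the universal bounds $V_{\mathcal S}(1) \ge 0$ and $V_{\mathcal S}(0) \le m$ forces both inequalities to be equalities: from $V_{\mathcal S}(0) = m + V_{\mathcal S}(1) \ge m$ together with $V_{\mathcal S}(0) \le m$ we get $V_{\mathcal S}(0) = m$, and then $V_{\mathcal S}(1) = 0$. This squeezing argument is the crux and is entirely elementary once the bound $V_{\mathcal S}(c) \le m$ is in hand.

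The main obstacle, such as it is, is purely bookkeeping rather than mathematical: one must be careful that the length of the standard sequence really is $m+1$, so that $V_{\mathcal S}$ is bounded above by $m$. This can fail if the standard Euclidean construction terminates early (producing a shorter sequence when $f$ shares a factor with $f' = f_1$, i.e.\ when $f$ has a repeated root). Since the corollary's conclusion characterizes having $m$ \emph{distinct} roots, I would note that the relevant sequence in the nondegenerate case has full length $m+1$, and that the bound $V_{\mathcal S}(c) \le (\text{number of terms}) - 1 = m$ holds regardless; in the degenerate case the count $V_{\mathcal S}(0) - V_{\mathcal S}(1)$ simply cannot reach $m$, which is consistent with $f$ having fewer than $m$ distinct roots. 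Thus the squeezing argument delivers the stated equivalence without further subtlety.
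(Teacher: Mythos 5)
Your proposal is correct and takes essentially the same route as the paper, which states the corollary as an immediate consequence of Theorem~\ref{thm:Sturms-thm} without spelling out a proof: the implicit argument is exactly your squeezing step, namely that the sequence $f_0,\dots,f_m$ has $m+1$ terms so $0 \le V_{\mathcal S}(c) \le m$, whence $V_{\mathcal S}(0)-V_{\mathcal S}(1)=m$ forces $V_{\mathcal S}(0)=m$ and $V_{\mathcal S}(1)=0$, with the converse immediate. Your closing remark on the degenerate (repeated-root) case is a harmless addition; the paper sidesteps it by assuming no multiple roots in its definition of the Sturm sequence.
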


This corollary allows us to obtain inequalities ensuring a polynomial has distinct roots in $[0,1]$: One simply requires that
the $f_i(0)$ alternate in being $>0$ or $<0$, while the $f_i(1)$ either be all $>0$ or all $<0$. We informally refer to inequalities
obtained in this manner as \emph{Sturm sequence inequalities}.

\subsection{Eigenvalues and Sturm Sequences} 

We now give a second construction of inequalities that,
if satisfied, ensure that \GMk model parameters on a 3-leaf tree are stochastic.
While in \S \ref{sec:THREEleaf} we constructed matrices encoding positivity
of parameters through requirements on associated quadratic forms, 
here we instead construct matrices whose eigenvalues encode 
parameters. 

\medskip

Suppose that $P = \psi_T ( \boldsymbol \pi, \{M_1, M_2, M_3\})$ for
complex nonsingular $\{ \boldsymbol \pi, M_1 M_2, M_3\}$. Recall that
for $i \in[k]$, 
\begin{align*}
P_{i\cdot\cdot}&=P*_1\mathbf e_i= M_2^T  \, \diag(\boldsymbol \pi) \Lambda_{1,i} \, M_3, \\
P_{+\cdot\cdot}&=P*_1\mathbf 1= M_2^T \, \diag(\boldsymbol \pi) M_3
\end{align*}
where $\Lambda_{1,i}=\diag(M_1 \mathbf e_i)$ is the diagonal matrix with entries from 
the $i$th column of $M_1$. Thus, 
by nonsingularity of the parameters,
$$A_{1,i} := P_{+\cdot\cdot}^{-1}P_{i\cdot\cdot}  = M_3^{-1}\Lambda_{1,i}M_3$$
has as eigenvalues the entries of the $i$th column
of $M_1$.  (This construction underlies the proof of parameter
identifiability for the GM($k)$ model on trees \cite{Chang96}, and the construction 
of phylogenetic invariants in \cite{AR03}, including the equality in condition (i)
of Theorem \ref{thm:AJRS} of this paper.)  Similarly we define
the matrices 
\begin{align*}
A_{2,i} :=   P_{\cdot+\cdot}^{-1}P_{\cdot i\cdot} &= M_3^{-1} \, \Lambda_{2,i} \, M_3,\\
A_{3,i} :=  P_{\cdot\cdot +}^{-1}P_{\cdot\cdot i} &= M_2^{-1} \, \Lambda_{3,i} \, M_2
\end{align*}
with $\Lambda_{j,i}=\diag(M_j\mathbf e_i)$ the diagonal matrix with entries from the
$i$th column of the matrix $M_j$.

\begin{proposition}\label{prop:SturmSeqToMarkov}
  Let $P=\psi_T ( \boldsymbol \pi, \{M_1, M_2, M_3\})$ be a real
  $k\times k\times k$ tensor that is the image of complex nonsingular
  parameters. For each of the matrices $A_{j,i}$, $j\in\{1,2,3\}$,
  $i\in [k]$, assume the characteristic polynomial has neither $0$ nor $1$
  as a root, and let $\mathcal S_{j,i}$ denote the standard Sturm
  sequence for it.  Then $V_{\mathcal S_{j,i}}(0) = k$ and
  $V_{\mathcal S_{j,i}}(1) = 0$ for all $i,j$ if, and only if, the
  matrices $M_1$, $M_2$, $M_3$ are positive Markov matrices with no
  repeated element in any column and $\boldsymbol \pi$ is real.
\end{proposition}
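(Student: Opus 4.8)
The plan is to exploit the diagonalization $A_{j,i}=M_{*}^{-1}\Lambda_{j,i}M_{*}$ already established just before the statement, which shows that the eigenvalues of $A_{j,i}$ are precisely the entries of the $i$th column of $M_j$. Because the characteristic polynomial $\chi_{j,i}$ of $A_{j,i}$ is a real polynomial whose roots are these eigenvalues, I would apply Corollary~\ref{cor:signs}: the Sturm-sequence conditions $V_{\mathcal S_{j,i}}(0)=k$ and $V_{\mathcal S_{j,i}}(1)=0$ hold if, and only if, $\chi_{j,i}$ has $k$ distinct real roots in the open interval $(0,1)$. Thus the entire statement reduces to the equivalence: the Sturm conditions hold for all $i,j$ $\iff$ each $M_j$ is a positive Markov matrix with no repeated entry in any column, and $\boldsymbol\pi$ is real.

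For the reverse implication, suppose $M_1,M_2,M_3$ are positive Markov matrices with distinct entries in every column and $\boldsymbol\pi$ is real. Then each column of $M_j$ consists of $k$ distinct real numbers strictly between $0$ and $1$ (positivity gives $>0$, and the row-sum-$1$ Markov condition forces every entry $<1$ since each row has at least two positive entries for $k\ge 2$). By the eigenvalue identification these are exactly the roots of $\chi_{j,i}$, so $\chi_{j,i}$ has $k$ distinct roots in $(0,1)$, and Corollary~\ref{cor:signs} yields the stated sign-variation counts. The reality of $\boldsymbol\pi$ is needed to guarantee that the $A_{j,i}$, and hence their characteristic polynomials, are genuinely real so that Sturm theory applies; I would note that since $P$ is assumed real and the parameters are nonsingular, the $A_{j,i}=P_{+\cdot\cdot}^{-1}P_{i\cdot\cdot}$ (etc.) are automatically real matrices, so Sturm's theorem is applicable throughout.

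For the forward implication, assume the Sturm conditions hold for all $i,j$. By Corollary~\ref{cor:signs}, each $\chi_{j,i}$ has $k$ distinct real roots in $(0,1)$; equivalently, every column of every $M_j$ consists of $k$ distinct real numbers in $(0,1)$. Distinctness within each column is immediate, and the roots lying in $(0,1)$ forces every entry of each $M_j$ to be strictly positive. It remains to recover that the $M_j$ are \emph{Markov} (row sums $1$) and that $\boldsymbol\pi$ is real. The row-sum condition is built into the parameterization itself: by the normalization in Proposition~\ref{prop:complexkstates}, any nonsingular complex parameters producing $P$ have matrices $M_j$ with row sums equal to $1$, so positivity of entries together with row sums $1$ makes each $M_j$ a genuine Markov matrix. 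For the reality of $\boldsymbol\pi$, I would use that $P$ is real, the $M_j$ are now known real, and $\boldsymbol\pi$ is the unique (up to label swapping) root distribution with $P=\Diag(\boldsymbol\pi)\cdot(M_1,M_2,M_3)$; extracting $\boldsymbol\pi$ from $P$ and the real $M_j$ via the formulas of Proposition~\ref{prop:complexkstates} shows $\boldsymbol\pi$ is real.

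I expect the main obstacle to be the bookkeeping needed to pin down the reality and Markov structure of the parameters \emph{given only} that the eigenvalues are real and in $(0,1)$. The eigenvalue-in-$(0,1)$ conclusion constrains the columns of $M_j$ directly, but one must argue carefully that this is consistent with, and forces, the full Markov normalization and the reality of $\boldsymbol\pi$, rather than merely controlling the spectra of the $A_{j,i}$. The cleanest route is to lean on the uniqueness-up-to-label-swapping of nonsingular parameters (Proposition~\ref{prop:complexkstates}) so that the recovered $M_j$ and $\boldsymbol\pi$ are forced to be exactly the (real) parameters producing the real tensor $P$; the eigenvalue conditions then upgrade these from merely complex-with-unit-row-sums to positive Markov matrices with distinct column entries.
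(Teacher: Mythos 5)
Your proof is correct and takes essentially the same route as the paper's: Corollary \ref{cor:signs} combined with the eigenvalue identification $A_{j,i}=M_{*}^{-1}\Lambda_{j,i}M_{*}$ yields the statement about the $M_j$, and $\boldsymbol\pi$ is then seen to be real because it is recovered from the real tensor $P$ by applying the inverses of the (now known real, invertible) $M_j$. The only superfluous element is your appeal to uniqueness up to label swapping; the diagonalization already ties the eigenvalues of each $A_{j,i}$ directly to the columns of the given parameters $M_j$, so no identifiability argument is needed.
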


\begin{proof}
The statement about the $M_i$ follows from Corollary \ref{cor:signs}. 
Then, since the $M_i$ are real and invertible
$$\boldsymbol \pi = ((P \cdot (M_1^{-1}, M_2^{-1}, M_3^{-1})) \ast_1 
\mathbf 1) \ast_2 \mathbf 1$$
shows
$\boldsymbol \pi$ is real. 
\qquad\end{proof}

\smallskip

Each matrix $A_{j,i}$ has entries that are rational in the entries of $P$, with 
denominator $\det(P*_j \mathbf 1)$ of degree $k$ and numerator of degree $k$.
The characteristic polynomial $f$ of $A_{j,i}$ thus also has coefficients that
are rational functions in $P$, and in fact the non-leading coefficients $c_i$ of
$f$ are rational of degree $k$ over $k$. This can be seen
explicitly, for example when $j=1$, in the following:
\begin{align}\label{degCoeffs}
f(x) &= \det( xI - \Psbc^{-1} P_{i\cdot\cdot})\notag \\
&= \det( x \Psbc^{-1}{\Psbc} - \Psbc^{-1} P_{i\cdot\cdot})\notag \\ 
&= \det\big( \Psbc^{-1} (x {\Psbc} - P_{i\cdot\cdot})\big)\notag \\
&= \frac{\det\big( x {\Psbc} -  P_{i\cdot\cdot} \big)}{\det(\Psbc)}. 
\end{align}

It follows that the Sturm sequence inequalities, which
  are constructed from the coefficients $c_i$, are rational in the
  entries of $P$ as well.  Indeed, by multiplying each of these
  inequalities by a sufficiently high even power of $\det(P*_j \mathbf
  1)$ to avoid changing signs, these expressions become polynomial in
  $P$.  Thus, one can phrase the conditions $V_{\mathcal S_{j,i}}(0) =
  k$ and $V_{\mathcal S_{j,i}}(1) = 0$ as a collection of polynomial
  inequalities. Finally, note that since $f$ is a monic characteristic
  polynomial, then $(-1)^kf_0(0)=\det(A_{j,i})>0$ and $f_k(0) =
  f_k(1)$, so the signs of all $f_j(0)$ and $f_j(1)$ are
  determined. This leads to the following: 

\begin{corollary}\label{cor:SturmIneq}
  Consider real $k \times k\times k$ tensors $P=\psi_T ( \boldsymbol
  \pi, \{M_1, M_2, M_3\})$ arising from complex nonsingular parameters.  Then
  one can give a finite collection of strict polynomial inequalities
  that hold precisely when the $M_i$ are positive Markov matrices with
  no repeated column entries, and $\boldsymbol\pi$ is real.
\end{corollary}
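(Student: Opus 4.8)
The plan is to establish Corollary~\ref{cor:SturmIneq} essentially as a formalized bookkeeping consequence of Proposition~\ref{prop:SturmSeqToMarkov}, the real content being the observation that the Sturm conditions can be rewritten as genuine polynomial inequalities in the entries of $P$. First I would invoke Proposition~\ref{prop:SturmSeqToMarkov} directly: it already gives the equivalence between the sign conditions $V_{\mathcal S_{j,i}}(0)=k$, $V_{\mathcal S_{j,i}}(1)=0$ for all $i,j$ and the statement that $M_1,M_2,M_3$ are positive Markov matrices with no repeated element in any column and $\boldsymbol\pi$ is real. So the only thing remaining is to argue that those sign conditions can be packaged as a \emph{finite collection of strict polynomial inequalities in the entries of~$P$}, rather than inequalities in rational functions.

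Next I would recall the degree computation in \eqref{degCoeffs}, which shows that each entry of $A_{j,i}$, and hence each non-leading coefficient $c_i$ of its characteristic polynomial $f$, is a rational function of the entries of $P$ with denominator a power of $\det(P\ast_j\mathbf 1)$. The standard Sturm sequence $\mathcal S_{j,i}$ is built from the $c_i$ by the modified Euclidean algorithm, so each member $f_\ell$ of the sequence, evaluated at $0$ and at $1$, is again a rational function of $P$ whose denominator is a power of $\det(P\ast_j\mathbf 1)$. The Sturm sequence inequalities assert that the numbers $f_\ell(0)$ alternate in sign and the numbers $f_\ell(1)$ share a common sign; each such condition is a strict inequality of the form $r(P)>0$ or $r(P)<0$ for a rational $r$. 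Clearing the denominator by multiplying through by a sufficiently high \emph{even} power of $\det(P\ast_j\mathbf 1)$ does not change the sign of the expression, and so converts each rational inequality into an equivalent strict polynomial inequality in the entries of $P$. Since there are finitely many sequences $\mathcal S_{j,i}$ (one for each $j\in\{1,2,3\}$ and $i\in[k]$), and each is finite, this yields a finite collection of strict polynomial inequalities.

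I would close by noting that, because $f$ is a monic characteristic polynomial, the leading member $f_k$ is a nonzero constant and $f_k(0)=f_k(1)$, while $(-1)^k f_0(0)=\det(A_{j,i})$ has a known sign; these remarks, as observed just before the corollary, fix the sign of every $f_\ell(0)$ and $f_\ell(1)$ in the ``all roots in $(0,1)$'' configuration, so that the alternation and common-sign requirements are completely determined and can be written down explicitly. Combining this with the equivalence from Proposition~\ref{prop:SturmSeqToMarkov} completes the argument. The main obstacle here is purely organizational rather than conceptual: one must be careful that clearing denominators uses an \emph{even} power of $\det(P\ast_j\mathbf 1)$ so that signs are genuinely preserved, and that the hypothesis excluding $0$ and $1$ as roots of each characteristic polynomial (already assumed in the setup) guarantees $\det(P\ast_j\mathbf 1)\neq 0$ so the rational expressions are well defined. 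No deeper difficulty arises, since Proposition~\ref{prop:SturmSeqToMarkov} has already done the substantive work.
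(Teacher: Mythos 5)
Your proposal is correct and follows essentially the same route as the paper: the paper's justification for this corollary is precisely the discussion preceding it, namely invoking Proposition~\ref{prop:SturmSeqToMarkov}, using \eqref{degCoeffs} to see that the Sturm sequence values at $0$ and $1$ are rational in the entries of $P$ with denominator a power of $\det(P\ast_j\mathbf 1)$, clearing denominators by a sufficiently high even power to preserve signs, and noting that monicity, $(-1)^k f_0(0)=\det(A_{j,i})$, and $f_k(0)=f_k(1)$ pin down the required sign pattern. Your reconstruction captures all of these points, including the care about even powers and the nonvanishing of $\det(P\ast_j\mathbf 1)$.
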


\smallskip

\emph{Remark.}  For simplicity, we have restricted our discussion to polynomials 
with no multiple roots, leading to the constraints on the matrix
column entries given above.  However, this restriction can 
be removed, by considering Sturm sequences for polynomials with
repeated roots.  
We suggest \cite{KKsturm, JMsturm} for futher information. 

\smallskip

Note that the non-strict versions of the inequalities of Corollary \ref{cor:SturmIneq} must
continue to hold on the closure of the image of parameters described
in the corollary.  As this closure includes the image of 
Markov matrices which may have repeated column entries, or entries of 0 or 1, or are singular, the
non-strict inequalities hold for \emph{all} stochastic parameters. 

However, some distributions arising from non-stochastic
parameters may satisfy the non-strict inequalities as well. 
Thus while we have obtained semialgebraic statements guaranteeing
stochasticity of nonsingular parameters of a particular form, this does not
seem to lead to a complete semialgebraic description of
the image of all stochastic parameters for arbitrary $k$.
In the case $k=2$, however, we can do better, as we show next.

\subsection{Sturm sequences for GM($2$)}

In the specific case of the $2$-state model, we explicitly give the inequalities of Corollary \ref{cor:SturmIneq}. 
To this end, suppose that $P=\psi_T(\boldsymbol \pi, \{M_1,M_2,M_3\})$ 
is a $2 \times 2 \times 2$ probability distribution arising
from nonsingular parameters.

For any $j \in\{1,2,3\}$, $i\in [2]$, let 
$A=A_{j,i}$.
The standard Sturm sequence of the characteristic polynomial of $A$
is 
\begin{align*}
f_0   &={} t^2 -\tr(A) t + \det(A),\\
f_1  &={} 2t -\tr(A),\\
f_2  &={}  \frac{1}{4}({\tr (A)}^2-4\det(A)) = \frac14{ \delta(f_0)},
\end{align*}
where $\delta (f)$ denotes the discriminant of the monic quadratic polynomial
$f$. 

Since $k=2$, the nonsingularity of the $M_i$ together
with the fact that their rows sum to 1 implies that none of their
columns have repeated entries. Thus the characteristic polynomial
$f_0$ must have distinct roots.  To ensure that these roots
are in $(0,1)$, so that the matrix parameters are Markov, 
the variation in signs of the Sturm sequence must be: 
\begin{align}
f_0(0) &= \,\det(A) > 0, &f_0(1)=1 -\tr(A)  + \det(A) > 0, \notag\\
f_1(0) &= -\tr(A) < 0,  &f_1(1)=2 -\tr(A) > 0, \label{eq:sturm2}\\
f_2(0) &= ~\,\frac{1}{4} \delta(f) > 0,   &f_2(1)=\frac{1}{4} \delta(f) > 0. \notag
\end{align}

Each of these inequalities can be expressed using rational expressions in the entries of $P$.
For instance, for $j=1$, $i=1$, the first two inequalities ensuring that the first column of $M_1$ 
has distinct entries between 0 and 1 are 
\begin{equation}
f_0(0) = \det (A) = \frac{\det (P_{1 \cdot \cdot})}{\det(\Psbc)} > 0,\label{eq:f0ineq}
\end{equation}
and
$$f_1(0) = -\tr(A) =  \frac{-2 \det(P_{1 \cdot \cdot}) + p_{112} p_{221}-p_{211} p_{122}-p_{111} p_{222}+p_{212} p_{121}}{\det(\Psbc)} < 0.
$$
After multiplying each of the inequalities of \eqref{eq:sturm2} by $\det(\Psbc)^2$ 
to clear denominators, we obtain five distinct  polynomial inequalities of degree $4$
in the entries of $P$, as well as one degree-2 inequality $$\det(\Psbc)\ne0.$$  Note too that $f_2(0) = f_2(1)$ is a 
positive multiple of the discriminant of $f_0$, and its positivity guarantees (again) 
that the roots of $f_0$ are distinct and real. 

The inequality \eqref{eq:f0ineq} has a direct statistical
interpretation: Assuming the states of the variables $X_i$ are encoded
with numerical values $s$ and $s+1$, then $\det(\Psbc) = \Cov(X_2,
X_3)$ and $\det (P_{1 \cdot \cdot})$ is a positive multiple of $\Cov
(X_2, X_3 \mid X_1 = 1)$. Thus the inequality states that the sign of
the association of $X_2$ and $X_3$ is the same whether we have
information about $X_1$ or not.  Viewing the $3$-leaf tree as a
graphical model for nonsingular parameters, this should be expected,
but that it arises cleanly from Sturm theory is a pleasant surprise.

\smallskip

Of additional interest is the observation that $f_2$ can be expressed in terms of the
hyperdeterminant $\hyper$: 
$$
f_2 = \frac{ \delta (f)}4  = \frac{\hyper }{4\det(\Psbc)^{2}} .
$$
Since $\Delta(P)\ne 0$ implies $\det(\Psbc)\ne0$, we find
\begin{equation}\label{eq:hyperDelta}
f_2>0 \text{\; if, and only if, \;} \Delta(P)>0.
\end{equation}
In particular, using 
Theorem \ref{thm:dSL}, we see the Sturm inequality involving $f_2$ implies 
that $P$ arises from nonsingular real parameters, and thus an additional assumption of that fact is 
not needed to supplement the Sturm inequalities.

\smallskip

We further note that the inequalities \eqref{eq:sturm2} for various $A_{j,i}$ are not independent of one another.
Since $A_{j,1}+A_{j,2}=I$, it follows that the two matrices $A_{j,1}$, $A_{j,2}$ give rise to the same inequalities.

\smallskip

The inequalities \eqref{eq:sturm2}, unfortunately,  are not
sufficient to ensure the root distribution $\boldsymbol \pi$ is also
positive. For instance,
\begin{align*}
P &= 
\left[
\begin{array}{cc|cc}
   0.65439 &  0.07191 &   0.16361 & 0.01809\\
   0.07191 &  0.00079 &   0.01809 & 0.00121
\end{array}
\right]
\end{align*}
is a probability distribution that satisfies $\hyper > 0$ and the
Sturm inequalities for each $A_{j,i}$, but the root
distribution $\boldsymbol \pi = (1.01, -0.01)$ is not stochastic.

Nonetheless, in the $2$-state case we can construct another inequality in $P$ that ensures the 
root distribution is positive.  If $P \in \image (\psi_T)$ for nonsingular real parameters, then
\begin{equation*}
\frac{\det(\Psbc) \det(\Pasc) \det(\Pabs)}{\hyper} = \pi_1 \pi_2.
\end{equation*}
This is easily verified using transformation properties of the determinants and $\Delta$ under
the action of $(M_1,M_2,M_3)$ on $\Diag(\boldsymbol \pi)$.
(See \cite[p.~136]{ARsbd} for an earlier derivation and application of this equation.)
Since $\pi_1+\pi_2=1$, the positivity of the $\pi_i$ is equivalent to
\begin{equation*}\label{eq:pistochastic}
0 < \frac{\det(\Psbc) \det(\Pasc) \det(\Pabs)}{\hyper} < \frac{1}{4}.
\end{equation*}
Moreover, because $\hyper>0$, this in turn is equivalent to the inequalities
\begin{equation}0<\det(\Psbc) \det(\Pasc) \det(\Pabs)<\frac 14\hyper.\label{eq:pipos}\end{equation}
Although the second inequality here is not homogeneous, it can be made 
homogeneous of degree 6 by multiplication of the right side by 
$1=(\sum_{i,j,k=1}^2 P_{ijk})^2$.

Putting this all together, we have an alternative semialgebraic test, 
to be contrasted with case 1 of Theorem \ref{thm:pos}, 
for testing  that $\boldsymbol \pi$ is stochastic.

\begin{proposition}
  The image of the positive nonsingular parameterization map for
  GM(2) on a 3-leaf tree can be characterized as the probability
  distributions satisfying an explicit collection of strict
  polynomial inequalities: $3$ of degree $2$, $13$ of degree $4$, and $2$ of degree $6$.
  
  \end{proposition}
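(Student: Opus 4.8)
The plan is to assemble the final proposition by carefully counting the inequalities produced across the Sturm-theoretic development of this subsection, and verifying that together they characterize exactly the image of the positive nonsingular parameterization map. First I would recall from Corollary \ref{cor:SturmIneq} and the explicit computation in \eqref{eq:sturm2} that, for each pair $(j,i)$ with $j\in\{1,2,3\}$ and $i\in[2]$, the Sturm conditions $V_{\mathcal S_{j,i}}(0)=2$, $V_{\mathcal S_{j,i}}(1)=0$ guarantee the $j$th Markov matrix has a column with distinct entries in $(0,1)$. The key reduction, already noted in the excerpt, is that since $A_{j,1}+A_{j,2}=I$ the matrices $A_{j,1}$ and $A_{j,2}$ yield identical inequalities, so only three essentially distinct choices of $(j,i)$ remain, one for each $M_j$.

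Next I would tabulate the surviving inequalities for a single $j$. After clearing denominators by multiplying through by $\det(P*_j\mathbf 1)^2$, the five sign conditions in \eqref{eq:sturm2} become polynomial inequalities of degree $4$, together with the single degree-$2$ condition $\det(P*_j\mathbf 1)\ne0$ needed for the matrix $A_{j,i}$ to be defined. However, one of the five degree-$4$ inequalities, the one coming from $f_2$, is redundant across the three values of $j$: by \eqref{eq:hyperDelta} the condition $f_2>0$ is equivalent to $\hyper>0$ regardless of which $A_{j,i}$ we use. I therefore expect the bookkeeping to run as follows. The three conditions $\det(P*_j\mathbf 1)\ne0$ for $j=1,2,3$ give the $3$ inequalities of degree $2$. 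From each of the three values of $j$, the four remaining sign conditions (those from $f_0(0),f_0(1),f_1(0),f_1(1)$, excluding the shared $f_2$ condition) contribute $4$ degree-$4$ inequalities, totaling $12$; adding the single shared hyperdeterminantal inequality $\hyper>0$ of degree $4$ gives $13$ of degree $4$. Finally, the two degree-$6$ inequalities arise from \eqref{eq:pipos}, after homogenizing the non-homogeneous upper bound $\det(\Psbc)\det(\Pasc)\det(\Pabs)<\tfrac14\hyper$ by the substitution $1=(\sum P_{ijk})^2$ as described, so that both bounds have degree $6$.

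To complete the characterization I would argue the equivalence in both directions. For the forward direction, if $P$ lies in the image of the positive nonsingular parameterization map, then by \eqref{eq:hyperDelta} and Theorem \ref{thm:dSL} we have $\hyper>0$, the Sturm inequalities hold by Proposition \ref{prop:SturmSeqToMarkov} and Corollary \ref{cor:signs}, and \eqref{eq:pipos} holds because $\pi_1,\pi_2>0$ with $\pi_1+\pi_2=1$. For the converse, $\hyper>0$ places $P$ in the $\GL(2,\R)^3$-orbit of $D$ and hence (via Corollary \ref{cor:complexToMarkov} and the identification of $\Delta$ with $f_i$ for $k=2$) in the image of real nonsingular parameters; the degree-$4$ Sturm inequalities then force each $M_j$ to be a positive Markov matrix, and \eqref{eq:pipos} forces $\boldsymbol\pi$ to be strictly positive. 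The main obstacle I anticipate is not any single step but the redundancy accounting: I must confirm that exactly one $f_2$-type inequality survives across all three $j$ (rather than three separate ones), that the $\det(P*_j\mathbf 1)\ne0$ conditions are genuinely all needed and counted once each, and that homogenization of \eqref{eq:pipos} is legitimate on the model space (where $\sum P_{ijk}=1$, so multiplying by $1=(\sum P_{ijk})^2$ changes nothing). Getting these counts to land precisely on $3$, $13$, and $2$ is the delicate part.
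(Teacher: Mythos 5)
Your proposal is correct and follows essentially the same route as the paper's proof: the degree-$2$ conditions $\det(P*_j\mathbf 1)\ne 0$, the Sturm sign conditions of \eqref{eq:sturm2} cleared of denominators (with the redundancies from $A_{j,1}+A_{j,2}=I$ and from each $j$'s $f_2$ condition collapsing to the single shared inequality $\hyper>0$, giving $3\cdot 4+1=13$ of degree $4$), and the two degree-$6$ inequalities \eqref{eq:pipos} forcing $\boldsymbol\pi$ to be positive. Your explicit two-directional equivalence argument is just a fuller write-up of what the paper compresses into its short proof, so there is nothing to add.
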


\begin{proof}
By the above discussion, the degree-$2$  inequalities  $\det(P*_i \mathbf 1)\ne 0$ and the  5 degree-$4$ inequalities arising from \eqref{eq:sturm2} for each of the 3 choices of ${j}$ suffice to ensure 
that nonsingular parameters exist, with Markov matrices having positive entries. Only 13 of these degree-$4$ inequalities are distinct, as each $j$ leads to $\hyper>0$.
Then the degree-$6$ inequalities of \eqref{eq:pipos} ensure $\boldsymbol \pi$ has positive entries.
 \qquad\end{proof}

Note that case 1 of Theorem \ref{thm:pos} gave a description using 3 degree-2, 1 degree-$4$, 4 degree-$5$, and 4 degree-$6$ polynomials.
The description arising from Sturm theory thus uses fewer degree-5 and 6 polynomials, but more degree-4 ones.

\subsection{Sturm sequences for GM($3$)}    
\medskip

We now give several examples of Sturm sequence inequalities for 
GM($3$) on the $3$-taxon tree.

If $A$ is a $3\times 3$ matrix with positive determinant and characteristic polynomial 
$f(x) = x^3 + c_2 x^2 + c_1 x + c_0$
without roots at 0 or 1, then $A$ has 3 distinct eigenvalues in the interval (0,1) if, and only if,
\begin{align}
f_0(0) &= \,c_0  &< 0,                                  &&f_0(1)&= \, 1 +c_2  + c_1 +c_0 &> 0, \notag\\
f_1(0) &= \, c_1 &> 0,                                      &&f_1(1)&= \, 3+2c_2+c_1 &> 0,\notag\\
f_2(0) &= -c_0 + \frac{1}{9} c_1 c_2 &< 0,          &&f_2(1)&= -\frac{2}{3}c_1 + \frac{2}{9} c_2^2-c_0+\frac{1}{9}c_1c_2 &> 0, \label{eq:sturm3}\\
f_3(0) &= \, \frac{9}{4} \frac{\delta(f)}{{(3c_1-c_2^2)}^2}&> 0, &&f_3(1) &= \, \frac{9}{4} \frac{\delta(f)}{{(3c_1-c_2^2)}^2}&> 0,\notag
\end{align}
where $\delta(f)$ denotes the discriminant of $f$. 
Here, of course, $c_0=-\det(A)$, $c_2=-\tr(A)$ and $c_1$ is quadratic in the entries of $A$.
However, by \eqref{degCoeffs}, if $A=A_{j,i}$ then each $c_i$ is rational in the entries of $P$, with numerator of degree $3$ and denominator $\det (P*_j \mathbf 1)$.

By multiplying the top 6 inequalities of \eqref{eq:sturm3} by $\det (P*_j \mathbf 1)^2$, 
we obtain six polynomial inequalities of degree $6$ in $P$.
In the case that $A = A_{1,1} = \Psbc^{-1} P_{1\cdot\cdot}$, for example,
the first inequality is
$$0<-\det(\Psbc)^2 f_0(0) = \det(\Psbc) \det(P_{1\cdot\cdot}).$$
By \eqref{degCoeffs} we know that $\det(\Psbc)^2 c_1$ and $\det(\Psbc)^2 c_2$ are polynomials
of the form $\det(\Psbc) K$, 
where $K$ is homogeneous of degree $3$ in the entries of $P$. Computations with
Maple show $K$
has $42$ monomial summands for $c_1$, and $114$ monomial summands for $c_2$.  

The inequality of the bottom row of \eqref{eq:sturm3} can be simplified to $\delta(f)>0$, which is of degree $4$ in the $c_i$.  Thus, $\det (\Psbc)^4 \delta (f)>0$
is a polynomial inequality of degree $12$ in $P$.
We omit writing these Sturm inequalities explicitly.

Instead, we illustrate a more direct application of the relevent Sturm theory, in which the semialgebraic description of the model is present only implicitly.
Consider the probability distribution with exact rational entries given by

\begin{align}\label{ex:complexParams}
P &= 
\tiny{\left[
\begin{array}{ccc|ccc|ccc}
    0.1500  &  0.0130  &  0.105\bar 3  &  0.0130   & 0.0050  &  0.015\bar 3  &  0.105\bar 3 &   0.015\bar 3  &  0.077\bar 6\\
    0.0130  &  0.0050  &  0.015\bar 3  &  0.0050   & 0.0090  &  0.009 \bar 3  &  0.015\bar 3 &   0.009 \bar 3  &  0.018\bar 6\\
    0.105\bar 3  &  0.015\bar 3  &  0.077 \bar 6  &  0.015\bar 3   & 0.009 \bar 3  &  0.018\bar 6  &  0.077 \bar 6 &   0.018\bar 6  &  0.0620
\end{array}
\right]}.
\end{align}

One can check that $P$ satisfies the conditions of Theorem \ref{thm:AJRS}, so
$P$ is in the image of nonsingular complex parameters.  Then the values
of the Sturm sequence at $x=0$ and $x=1$ for the characteristic polynomial of $A_{1,1}$ are approximately as in Table \ref{tb:sturm}.
Thus the sign variations are $V_{\mathcal S}(0) = 2$ and 
$V_{\mathcal S}(1) = 1$, so the first column of
$M_1$ has exactly $1$ distinct real entry. This then implies that either $M_1$ is not real, or that its first column contains the same entry in all rows. Since the second possibility implies that a $3\times 3$ Markov matrix is singular, we can conclude that $P$ does not arise from stochastic
parameters.  

\begin{table}[h]
\begin{center}
\begin{tabular}{c|cccc}
& \hskip 2mm $f_0(x)$ & $f_1(x)$ & \hskip 2mm $f_2(x)$ & $f_3(x)$\\
\hline\hline
$x=0$ &  $-0.1087$ & $0.7225$ & $-0.0117$ & $-0.0283$\\
$x=1$ &  \hskip 3mm $0.1138$ & $0.7225$ & \hskip 3mm $0.0067$ & $-0.0283$
\end{tabular}
\end{center}
\caption{Sturm sequence values for the characteristics polynomial of $A_{1,1}$ for the tensor $P$ of  \eqref{ex:complexParams}}\label{tb:sturm}
\end{table}

This example did not actually require the full strength of
Sturm's theorem; it is sufficient to note that the discriminant of the cubic
characteristic polynomial is negative to conclude that the first
column of $M_1$ has two complex entries, and one real one.  This is 
special to the small size of the state space, however. For the case
of most interest in phylogenetics, $k=4$, the sign of the quartic discriminant alone
does not carry enough information to determine whether all roots of a polynomial are real. Moreover, even for $k=3$, the Sturm sequence is needed to ensure roots are between 0 and 1. 

One might optimistically hope that some $k=3$ analog of the hyperdeterminant, such as those in \cite{AJRS2012}, might arise easily from Sturm theory, as the hyperdeterminant itself did in the case $k=2$. Unfortunately this does not appear to be the case, at least by the most straightforward considerations.
\smallskip

In closing, we note that for $k\ge3$ we have not given in this section
any condition to ensure that $\boldsymbol \pi$ has positive entries.
For $k=2$ we did do so, using the transformation formula for $\hyper$, but
this idea does not seem to
generalize.  The only way we know to obtain a semialgebraic condition ensuring this is through
the quadratic form approach used in \S \ref{sec:THREEleaf}.

\section{GM($k$) on $n$-leaf trees}\label{sec:nleaf}

We now extend the results of the previous sections to $n$-leaf 
trees, for $n>3$.
To vary the choice of the root node of the tree in our arguments, we 
need the following. Similar lemmas 
are given in
\cite[Theorem 2]{SSH94} and \cite[Proposition 1]{AR03}. 

 \begin{lemma} \label{lem:moveroot}
 Suppose stochastic parameters are
   given for the GM($k$) model on a tree $T$ with the root located at a
   specific node of $T$.  Then there are stochastic parameters for $T$
   rooted at any other node of $T$, or at a
   node of valence $2$ introduced along an edge of $T$,
   which lead to the same distribution. Moreover, if the original 
   parameters were nonsingular and/or positive, then so are the new ones.  
   \end{lemma}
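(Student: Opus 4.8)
The plan is to prove the statement by reducing the general case of moving the root to an arbitrary node to the elementary operation of moving the root across a single edge, and then iterating. The key observation is that re-rooting a tree at an adjacent node, or introducing a valence-$2$ node along an edge, changes only the numerical parameters associated to the edges incident to the old and new root, while leaving the underlying distribution $P$ at the leaves unchanged. So first I would set up the local picture: suppose the root is at a node $r$ with a child $a$ along edge $e=(r,a)$, and we wish to re-root at $a$. The parameters to reconcile are the root distribution $\boldsymbol\pi$ at $r$ together with the Markov matrix $M_e$, which must be replaced by a new root distribution $\mathbf v_a$ at $a$ and a new matrix $M_{e'}$ governing the reversed edge $e'=(a,r)$.

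The central computation is the standard time-reversal (Bayes') formula. Writing $P_{ra}=\diag(\boldsymbol\pi)M_e$ for the joint distribution of $X_r$ and $X_a$, I would define $\mathbf v_a=\boldsymbol\pi M_e$ to be the marginal distribution of $X_a$, and then set
\begin{equation*}
M_{e'}(j,i)=\frac{\pi_i\,M_e(i,j)}{(\mathbf v_a)_j}=\prob(X_r=i\mid X_a=j),
\end{equation*}
with the understanding that this is well-defined for stochastic parameters whenever $(\mathbf v_a)_j\neq 0$. The point is that $\diag(\mathbf v_a)M_{e'}=P_{ra}^T=M_e^T\diag(\boldsymbol\pi)=(\diag(\boldsymbol\pi)M_e)^T$, so the pair $(\mathbf v_a,M_{e'})$ reproduces exactly the same joint distribution of $X_r,X_a$, now factored from the perspective of $a$. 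Since every other edge of $T$ retains its original Markov matrix and its original parent-child direction relative to the new root, the full joint distribution $\prob(\mathbf X=\mathbf j)$ is unchanged, and hence so is its marginalization $P$ at the leaves. Introducing a valence-$2$ node is handled identically by treating it as subdividing an edge and assigning the two new edges Markov matrices whose product is the original matrix; the easiest choice takes one to be the identity.

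I would then verify the preservation of the side conditions. For stochasticity: $\mathbf v_a=\boldsymbol\pi M_e$ has nonnegative entries summing to $1$ since $\boldsymbol\pi$ and $M_e$ are stochastic, and each row of $M_{e'}$ sums to $1$ by construction (the entries $\prob(X_r=i\mid X_a=j)$ over $i$ sum to $1$), with nonnegative entries. For positivity: if $\boldsymbol\pi$ and $M_e$ have strictly positive entries then $\mathbf v_a$ and the Bayes' weights $M_{e'}(j,i)$ are strictly positive as well. For nonsingularity: condition (i) of Definition \ref{def:nonsingparams}, that every node's marginal $\mathbf v_a$ has no zero entry, is a property of the distribution alone and is manifestly invariant under re-rooting; condition (ii) requires $M_{e'}$ to be nonsingular, which follows because $\diag(\mathbf v_a)M_{e'}=\diag(\boldsymbol\pi)M_e$ is a product of invertible matrices (using that $\boldsymbol\pi$ and $\mathbf v_a$ have no zero entries and $M_e$ is nonsingular), forcing $M_{e'}$ to be invertible.

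Finally I would iterate: any target root node is connected to the original root by a path in $T$, and moving the root one edge at a time along this path, applying the single-edge reversal at each step, brings the root to the desired location while preserving the distribution and each of the three properties at every stage. The main obstacle I anticipate is the bookkeeping around condition (i) for \emph{real or complex} parameters rather than stochastic ones: as the paper emphasizes just after Definition \ref{def:nonsingparams}, nonsingularity there requires explicitly that \emph{every} node's marginal $\mathbf v_a$ be nonzero, which is exactly what guarantees the denominators $(\mathbf v_a)_j$ in the Bayes' formula do not vanish and that $\diag(\mathbf v_a)$ stays invertible. Since the lemma is stated for stochastic parameters, this is automatic from condition (i') and (ii) via the inductive argument already given in the text, but I would flag that the nonsingularity bookkeeping is the one place where care is genuinely needed rather than routine.
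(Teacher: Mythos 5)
Your overall strategy coincides with the paper's own proof: reverse the root across a single edge via the identity $\operatorname{diag}(\mathbf v_a)M_{e'}=M_e^{T}\operatorname{diag}(\boldsymbol\pi)$, iterate along a path to reach any node, and handle a valence-$2$ node by subdividing an edge into two matrices whose product is the original. Your verification of stochasticity, positivity, and nonsingularity for the single-edge reversal is also sound. However, two steps fail as written. First, the lemma is stated for \emph{all} stochastic parameters, not just nonsingular ones, and your Bayes' formula $M_{e'}(j,i)=\pi_i M_e(i,j)/(\mathbf v_a)_j$ breaks down when $(\mathbf v_a)_j=0$, which genuinely occurs for singular stochastic parameters (take $\boldsymbol\pi$ with a zero entry, or $M_e$ with a zero column). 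Your closing paragraph asserts that nonvanishing of these denominators is ``automatic from condition (i') and (ii)'', but those are precisely the nonsingularity hypotheses, which the lemma does not assume. The repair is the one the paper makes explicit: if $(\mathbf v_a)_j=0$ then, since all the terms $\pi_iM_e(i,j)$ are nonnegative, the entire $j$th column of $\operatorname{diag}(\boldsymbol\pi)M_e$ is zero, so the $j$th row of $M_{e'}$ may be chosen to be an arbitrary stochastic row and the required identity still holds. This is exactly the point where stochasticity is essential and where the argument fails for real or complex parameters (the paper's example following the lemma), which is why the nonsingular real/complex case is split off into Lemma \ref{lem:moveroot2}.

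Second, your treatment of positivity under subdivision is wrong: you subdivide by taking one of the two new matrices to be the identity $I$, which is stochastic and nonsingular but \emph{not} positive (its off-diagonal entries are $0$), so your construction does not establish the claim that positive parameters can be replaced by positive ones after subdividing an edge. The paper fixes this by instead choosing $M_{(r,b)}$ with positive entries sufficiently close to $I$ that $M_{(a,r)}=M_{(a,b)}M_{(r,b)}^{-1}$ also has positive entries. Both gaps are local and easily repaired, but both repairs are needed for the lemma as stated.
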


\begin{proof} It is enough to show this on a tree with a single edge,
  as one may then successively apply that result along the edges in a path in a larger tree.  
  
  We show first that the root
  may be moved from one vertex of an edge to the other. For this
  it is sufficient to show that
 given any probability distribution $\boldsymbol \pi$ and Markov matrix
  $M$, there exists a probability distribution
  $\tilde{\boldsymbol \pi}$ and Markov matrix $\widetilde M$ with
  $\diag(\boldsymbol \pi )M=P=\widetilde M^T \diag(\tilde{\boldsymbol
    \pi} )$. This is straightforward if the column sums of $P$ are
  nonzero. If a column sum of $P$ is zero, and hence all entries in
  the column are zero, then the corresponding entry of
  $\tilde{\boldsymbol \pi}$ must be zero while that row of $\widetilde
  M$ can be arbitrary.  If the original parameters were nonsingular or positive, 
  then showing that the new ones are as well is straightforward.
  
  If instead we wish to move the root from vertex $a$ on edge 
  $(a,b)$ to a new internal node $r$ introduced to subdivide the edge,
  first introduce $r$ and let $M_{(a,r)}=M_{(a,b)}$ and $M_{(r,b)}=I$. Then 
  move the root from $a$ to $r$ as above. For the case of positive parameters, 
  instead pick $M_{(r,b)}$ to have positive entries but be near enough to 
  $I$ that $M_{(a,r)}=M_{(a,b)}M_{(r,b)}^{-1}$ has positive entries.
  \qquad\end{proof}

\smallskip
 
Note that for real or complex parameters Lemma \ref{lem:moveroot}
fails to hold as the examples $\boldsymbol \pi=(1/2, 1/2)$,
$M=\begin{pmatrix} s&1-s\\2-s&s-1\end{pmatrix}$, $s\ne 1$ show.  (The
problem is simply that a column sum of $\diag(\boldsymbol \pi)M$ can
be zero though the column is not the zero vector.)  However, if the
parameters are nonsingular, we can still move the root by modifying
the above argument.  Indeed, nonsingularity of parameters implies that
from $\diag(\boldsymbol \pi )M=\widetilde M^T \diag(\tilde{\boldsymbol
  \pi} )$ one can solve for a nonsingular $\widetilde M$, since the
other three matrices in the equation are nonsingular.  This shows the
following.

 \begin{lemma} \label{lem:moveroot2} 
 Suppose real or complex nonsingular parameters are
   given for the GM($k$) model on a tree $T$ with the root located
   at a
   specific node of $T$.  Then there are nonsingular parameters for $T$
   rooted at any other node of $T$, or at a
   node of valence $2$ introduced along an edge of $T$,
   which lead to the same distribution.
\end{lemma}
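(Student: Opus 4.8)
The plan is to mimic the structure of the proof of Lemma \ref{lem:moveroot}, reducing to the single-edge case and then treating the two types of root relocation (moving to an adjacent node, versus introducing a valence-$2$ node on an edge) separately. As in the stochastic case, the general statement follows by applying the single-edge result successively along the edges of a path connecting the old root to the new one, so the entire argument takes place on a tree $T$ with a single edge $(a,b)$, root distribution $\boldsymbol\pi$, and a single nonsingular Markov-type matrix $M$, producing joint distribution $P=\diag(\boldsymbol\pi)M$.

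First I would handle moving the root from $a$ to $b$. The key equation to solve is $\diag(\boldsymbol\pi)M = \widetilde M^T\diag(\tilde{\boldsymbol\pi})$, where $\widetilde M$ must have row sums equal to $1$ and be nonsingular, and $\tilde{\boldsymbol\pi}$ must have entries summing to $1$. The crucial observation, already noted in the excerpt just above the statement, is that for \emph{nonsingular} parameters the troublesome case of the stochastic proof --- a zero column sum of $P$ --- cannot cause failure. Concretely, I would take $\tilde{\boldsymbol\pi}$ to be the vector of column sums of $P$, i.e. $\tilde{\boldsymbol\pi}=\mathbf 1^T P$, which is the marginal distribution $\mathbf v_b$ of $X_b$. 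By Definition \ref{def:nonsingparams}(i), nonsingularity guarantees $\mathbf v_b$ has no zero entry, so $\diag(\tilde{\boldsymbol\pi})$ is invertible, and I can then solve $\widetilde M^T = P\,\diag(\tilde{\boldsymbol\pi})^{-1}$, giving $\widetilde M = \diag(\tilde{\boldsymbol\pi})^{-1}P^T$. I would then verify that $\widetilde M$ has the required properties: its row sums are $1$ (a direct computation using $P\mathbf 1 = \diag(\boldsymbol\pi)M\mathbf 1 = \boldsymbol\pi$ together with $\tilde{\boldsymbol\pi}$ being column sums), it is nonsingular since $P=\diag(\boldsymbol\pi)M$ and $\diag(\tilde{\boldsymbol\pi})$ are both nonsingular, and the new marginal distributions coincide with the old ones at both nodes, so nonsingularity of the whole parameter set is preserved.

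Next I would treat the case of introducing a valence-$2$ node $r$ subdividing the edge $(a,b)$. Here I follow the stochastic template exactly: set $M_{(a,r)}=M_{(a,b)}$ and $M_{(r,b)}=I$, keep the root at $a$, and observe that this trivially yields the same distribution with the new (still nonsingular) parameters, since the identity matrix is nonsingular and introduces the node without altering any marginal. Once $r$ is in place as a subdividing node, I can invoke the adjacent-node case already established to move the root onto $r$ if desired. This reduces the second relocation type to the first.

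The main obstacle --- though a mild one --- is bookkeeping the nonsingularity invariant through the root move, specifically confirming that both defining conditions of nonsingularity survive: that every marginal $\mathbf v$ remains zero-free, and that every edge matrix remains nonsingular. The marginals at $a$ and $b$ are unchanged by the relocation (the joint distribution $P$ is fixed, so its marginals are fixed), which disposes of condition (i); and the nonsingularity of $\widetilde M$ follows algebraically from the nonsingularity of the three known matrices in the defining equation, disposing of condition (ii). I would emphasize that it is precisely the failure of the marginal $\mathbf v_b$ to be zero-free in the general real/complex case --- exactly the phenomenon exhibited by the example $\boldsymbol\pi=(1/2,1/2)$, $M=\begin{pmatrix} s&1-s\\2-s&s-1\end{pmatrix}$ cited before the statement --- that breaks Lemma \ref{lem:moveroot} without the nonsingularity hypothesis, and that our standing assumption is exactly what is needed to repair it.
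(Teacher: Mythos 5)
Your proposal is correct and follows essentially the same route as the paper: the paper likewise reduces to a single edge and observes that nonsingularity makes $\diag(\tilde{\boldsymbol\pi})$ invertible (since $\tilde{\boldsymbol\pi}=\mathbf v_b$ is zero-free by condition (i)), so one can solve $\diag(\boldsymbol\pi)M=\widetilde M^T\diag(\tilde{\boldsymbol\pi})$ for a nonsingular $\widetilde M$, handling the subdividing valence-$2$ node exactly as in the stochastic lemma with $M_{(a,r)}=M_{(a,b)}$ and $M_{(r,b)}=I$. Your write-up merely makes explicit the verifications (row sums, preservation of marginals) that the paper leaves as a brief remark.
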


\smallskip
  
We now show that independent subsets of variables allow the question
of determining if a distribution arises from parameters on a tree to
be `decomposed' into the same question for the marginalizations to the
subsets.

\begin{proposition} \label{prop:patchtogetherindependentsets} 
Let $P$
  be a joint distribution of a set $L$ of $k$-state variables such
  that for some partition $L_1 | L_2 | \cdots | L_s$ of $L$, the
  variable sets $L_i$ and $L_j$ are independent for all $i\ne
  j$. Suppose the marginal distribution of each $L_i$ arises from
  nonsingular GM($k$) parameters on a tree $T_i$. Then $P$ arises from
  GM($k$) parameters on any tree $T$ which can be obtained by connecting
  the trees $T_1,T_2,\dots, T_s$ by the introduction of new edges
  between them (with endpoints possibly subdividing either edges of
  the $T_i$ or previously introduced edges joining some of the $T_i$).
 \end{proposition}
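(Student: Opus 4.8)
The plan is to prove this by reducing to the single gluing move of attaching one subtree to the rest, which I then iterate (formally, an induction on $s$). The one idea that does all the work is to place a \emph{reset} Markov matrix --- a (singular) stochastic matrix all of whose rows are equal to a common probability vector --- on each newly introduced edge, so that the state at the far end of that edge is resampled independently of everything upstream. I would use the independence hypotheses in the form $P=\bigotimes_{i=1}^s P_{L_i}$, where $P_{L_i}$ is the marginal of $L_i$ (equivalently, for each $j$ the block $L_j$ is independent of the union of the others, so $\Flat_{L_j\,|\,L\smallsetminus L_j}(P)$ has rank one). This factorization is exactly what the reset construction reproduces, so it is the form of the hypothesis on which everything turns.

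The heart of the argument is the following move, which covers both the base case $s=2$ and the inductive step. Suppose $P=P'\otimes P_{L_j}$, where $P'$ is the marginal on $L\smallsetminus L_j$ and already arises from stochastic \GMk parameters on a tree $T'$, and $P_{L_j}$ arises from nonsingular stochastic parameters on $T_j$; let $T$ be obtained by joining them with a single new edge meeting $T_j$ at a node $a_j$ (introducing a valence-two node there first if the attachment point is interior to an edge). I would root $T$ at the root already used for $T'$, keep all parameters of $T'$ unchanged, and re-root $T_j$ at $a_j$ via Lemma~\ref{lem:moveroot}, whose root distribution is then the marginal $\mathbf v_{a_j}$ of $X_{a_j}$; crucially, Lemma~\ref{lem:moveroot} needs only stochasticity, not nonsingularity, so it applies even though the parameters on $T'$ produced inductively may be singular. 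Finally I assign to the connecting edge the reset matrix with every row equal to $\mathbf v_{a_j}$. This is a genuine stochastic matrix, and it forces $X_{a_j}$ to have distribution $\mathbf v_{a_j}$ independently of its parent; hence the $T_j$-part of the tree generates $P_{L_j}$ independently of the $T'$-part, and the joint leaf distribution is $P'\otimes P_{L_j}=P$.

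To assemble the general case I would induct on $s$: since the new edges join the $T_i$ into a tree, some block $T_j$ is attached to the remainder by a single new edge; the marginals $P_{L_1},\dots$ (omitting $L_j$) are again mutually independent and each arises from nonsingular parameters, so by the inductive hypothesis their product arises from \GMk parameters on the tree $T'$ built from them, and the gluing move above finishes the step. The constructions are short, and I expect the work to be organizational rather than computational. The main obstacle is precisely the matching of marginals: ensuring that the reset distribution on each connecting edge coincides with the root distribution demanded by the re-rooted subtree (which is exactly why the freedom in Lemma~\ref{lem:moveroot} to root at any node, including an introduced valence-two node, is indispensable), and verifying that the reset edges decouple the subtrees \emph{exactly}, so that the global leaf distribution is the full outer product and can be identified with $P$. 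I would also note that nonsingularity of the original parameters is not actually used in the gluing --- all constructed parameters are stochastic, merely singular --- so it serves here only to guarantee, via Proposition~\ref{prop:ind}, that the blocks $L_i$ admit no further independent refinement, establishing that $P\in\image(\psi_T)$ for stochastic parameters as claimed.
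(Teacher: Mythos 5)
Your core construction is the paper's: re-root the subtree being attached at its attachment point, place on the new connecting edge the rank-one ``reset'' matrix $\mathbf 1\,\mathbf v^T$ whose rows all equal the new root distribution $\mathbf v$, and check that the resulting leaf distribution is the product of the marginals (the paper does this via the flattening computation $\Flat_e(\tilde P)=A^T(\boldsymbol \pi_1\boldsymbol \pi_2^T)B$). The only organizational difference is that you iterate a one-block gluing step by induction on $s$, while the paper reduces at once to the case of two independent blocks; both arguments, like yours, really use the independence hypothesis in the factored form $P=\bigotimes_i P_{L_i}$.

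However, your proof has a genuine gap in scope. ``Nonsingular GM($k$) parameters'' in this proposition means nonsingular \emph{stochastic, real, or complex} parameters (Definition \ref{def:nonsingparams}); covering non-stochastic parameters is the point of the statement, as the paper makes explicit when it introduces Proposition \ref{prop:patchtogetherindependentsets2} as the variant in which the assumption of nonsingularity is replaced by one of stochasticity. Your re-rooting step invokes Lemma \ref{lem:moveroot}, which holds only for stochastic parameters --- the paper exhibits real parameters, immediately after that lemma, for which it fails. For real or complex nonsingular parameters one must instead re-root $T_j$ by Lemma \ref{lem:moveroot2}, whose hypothesis is exactly nonsingularity. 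This is also why your closing remark --- that nonsingularity ``is not actually used'' and serves only, via Proposition \ref{prop:ind}, to prevent finer independent refinements of the blocks --- is mistaken twice over: the proposition never requires the $L_i$ to be minimal, and nonsingularity \emph{is} used, precisely to make the root of each $T_j$ movable when the parameters need not be stochastic. As written, your argument establishes (a special case of) Proposition \ref{prop:patchtogetherindependentsets2} rather than this proposition; the fix is small --- cite Lemma \ref{lem:moveroot2} for the re-rooting and note that the reset matrix is a legitimate real or complex parameter since its rows sum to one --- after which your induction goes through for the intended statement.
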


 Note that the converse of this statement, that if $P$ arises from
 parameters for the GM($k$) model on an $|L|$-leaf tree then the
 marginal distributions of each $L_i$ arise from parameters for the
 GM($k$) model on an $|L_i|$-leaf tree, is well-known, and does not
 require the independence of the variable sets, or nonsingularity of
 parameters.
  
\begin{proof}
  It is enough to consider a partition of $L$ into two independent
  subsets, $L_1|L_2$.  Let $T$ be any tree formed by connecting $T_1$
  and $T_2$ by a single edge, possibly with endpoints introduced to
  subdivide edges of one or both of the $T_i$.  If $e = (r_1,r_2)$ is
  the edge joining $T_1$ and $T_2$, with $r_i$ in $T_i$, then by Lemma
  \ref{lem:moveroot2} we may assume that parameters on $T_1$ and $T_2$
  are given for roots $r_1$ and $r_2$.  We root $T$ at $r_1$ and then
  specify parameters on $T$ as the root distribution $\boldsymbol
  \pi_1 $ for $T_1$, all matrix parameters on the edges of $T_1$ and
  $T_2$, and for the edge $e$ the matrix $M_e = \mathbf 1
  {{\boldsymbol \pi}_2}^T$ where $\boldsymbol \pi_2$ is the root
  distribution on $T_2$.
   
   Let $\tilde P$ denote the image of these parameters under  $\psi_T$.
   The edge $e$ of $T$ induces the split $L_1 | L_2$ of the leaf variables, and flattening
   with respect to $e$ gives $\Flat_e(\tilde P)= A^TCB$ where $A,B$ are 
   $k \times k^{\vert L_1 \vert}$ and $k \times k^{\vert L_2 \vert}$ matrices
   depending only on the matrix parameters on the subtrees $T_1$ and $T_2$, 
  and 
  \begin{align*}
  C &= \diag(\boldsymbol \pi_{1}) M_{e}, \\
     &= \diag( \boldsymbol \pi_{1})\mathbf 1 \boldsymbol \pi_{2}^T = \boldsymbol \pi_{1} \boldsymbol \pi_{2}^T.
  \end{align*}
  Indeed, in the stochastic case, 
  $A$ gives probabilities of observations at the leaves $L_1$ conditioned on the state at $r_1$, 
  $B$ gives probabilities of observations at the leaves $L_2$ conditioned on the state at $r_2$, 
  and $C$ is a matrix giving the joint distribution of states
  at $r_1$ and $r_2$.
  Observing that $A^TCB=(A^T \boldsymbol \pi_{1})(\boldsymbol \pi_{2}^TB)$, independence implies that $\tilde P$
is the product of the same marginal distributions on $L_1$ and $L_2$ as $P$, and
hence $\tilde P=P$.
 \qquad\end{proof}

\smallskip

One can replace the assumption of nonsingularity of parameters in
this proposition with one of stochasticity, since the main technical
point in the proof is that we need freedom to move the root to any
node of valence $2$ along an edge of $T$.  By Lemma
\ref{lem:moveroot}, this holds for stochastic parameters, so we obtain
the following.

\begin{proposition} \label{prop:patchtogetherindependentsets2}
  With $P$ and the $L_i$ as in Proposition \ref{prop:patchtogetherindependentsets}, if
  the marginal distributions of each $L_i$ arise from stochastic parameters 
  for the GM($k$) model on an
  $|L_i|$-leaf tree, then $P$ arises from stochastic parameters for the GM($k$) model
  on an $|L|$-leaf tree. If the parameters on the $|L_i|$-leaf trees are positive, so are those on
  the $|L|$-leaf tree.
\end{proposition}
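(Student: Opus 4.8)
The plan is to follow the proof of Proposition~\ref{prop:patchtogetherindependentsets} essentially verbatim, replacing the single appeal to the nonsingular root-moving result, Lemma~\ref{lem:moveroot2}, with its stochastic counterpart, Lemma~\ref{lem:moveroot}. As in that proof, it suffices to treat a partition into two independent blocks $L_1\mid L_2$, the general partition $L_1\mid L_2\mid\cdots\mid L_s$ being handled by iterating the two-block construction along the successively introduced edges. So I would let $T$ be any tree obtained by joining $T_1$ and $T_2$ with a single edge $e=(r_1,r_2)$, where each endpoint $r_i$ either is an existing node of $T_i$ or subdivides an edge of $T_i$.

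The first step is to place the roots at the endpoints of $e$. Because the marginal on each $L_i$ arises from stochastic parameters on $T_i$, Lemma~\ref{lem:moveroot} lets me re-root $T_i$ at $r_i$ --- including at a valence-$2$ node introduced by subdivision --- while keeping the parameters stochastic, and, if the original parameters on $T_i$ were positive, keeping them positive as well. This root-moving freedom, now available for stochastic (rather than merely nonsingular) parameters, is exactly the technical ingredient that the nonsingular proof drew from Lemma~\ref{lem:moveroot2}, and it is the only place the argument needs adjustment.

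I would then build parameters on $T$ exactly as before: retain all edge matrices of $T_1$ and $T_2$, take the root distribution of $T$ (rooted at $r_1$) to be $\boldsymbol\pi_1$, and assign to the new edge $e$ the matrix $M_e=\mathbf 1\,\boldsymbol\pi_2^T$. The one genuinely new verification is that these form legitimate stochastic parameters: $M_e$ is a Markov matrix because $\boldsymbol\pi_2$ is a probability distribution, so $M_e$ has nonnegative entries and each of its rows sums to $\boldsymbol\pi_2\cdot\mathbf 1=1$; and when the parameters on $T_2$ are positive, $\boldsymbol\pi_2$ has strictly positive entries, making $M_e$ positive and hence all parameters on $T$ positive. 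Note that $M_e$ has rank $1$ and is therefore singular, which is precisely why this construction lies outside the nonsingular regime yet remains stochastic.

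Finally, the identity $\tilde P=P$ follows by the same flattening computation as in Proposition~\ref{prop:patchtogetherindependentsets}: flattening the image $\tilde P=\psi_T(\cdot)$ along $e$ gives $\Flat_e(\tilde P)=A^TCB$ with $C=\diag(\boldsymbol\pi_1)M_e=\boldsymbol\pi_1\boldsymbol\pi_2^T$, so $\Flat_e(\tilde P)=(A^T\boldsymbol\pi_1)(\boldsymbol\pi_2^T B)$ is rank $1$ and encodes the product of the marginals on $L_1$ and $L_2$; by the independence hypothesis these are the same marginals as those of $P$, forcing $\tilde P=P$. I expect no real obstacle here: the entire content of the strengthening is that Lemma~\ref{lem:moveroot} supplies the needed subdivision and root-moving freedom in the stochastic category, and that the inserted matrix $M_e$ is automatically Markov. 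The main point to be careful about is simply confirming these two substitutions, since everything else in the original proof carries over unchanged.
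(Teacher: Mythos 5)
Your proposal is correct and matches the paper's own argument exactly: the paper derives Proposition~\ref{prop:patchtogetherindependentsets2} by observing that the only technical ingredient of the proof of Proposition~\ref{prop:patchtogetherindependentsets} is the freedom to re-root at (possibly subdividing) nodes, which Lemma~\ref{lem:moveroot} supplies in the stochastic and positive settings. Your additional checks---that $M_e=\mathbf 1\,\boldsymbol\pi_2^T$ is Markov (and positive when $\boldsymbol\pi_2$ is) and that the flattening identity carries over---are exactly the details the paper leaves implicit.
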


\smallskip

By this proposition, the only sets we must understand to build a
semialgebraic description for the full $n$-leaf stochastic  model are
the image of parameters for $m$-leaf trees, $m\le n$, when no
subsets of the $m$ leaf variables are independent.  In the case $k=2$, by
Proposition \ref{prop:ind}, this is precisely the images of
nonsingular parameters.  Unfortunately, for larger $k$ characterizing
such images is much more difficult,  and we do not accomplish it in
this paper. 

One way to think about the difficulty with $k>2$ is in terms of matrix rank. 
When $k=2$, a Markov matrix has either rank 1 or rank 2, which results in 
independent variables or nonsingular parameters, respectively. For larger $k$, 
a Markov matrix may have rank between the extremes of 1 and $k$, which 
again produce independent variables or nonsingular parameters.  These intermediate 
cases of singular Markov matrices that are not of rank 1 would each require a 
detailed analysis, both in the 3-leaf tree case, and then to produce 
some type of decomposition for larger trees.

Rather than pursue a detailed semialgebraic model description allowing all ranks 
of Markov matrices for all $k$, we
instead choose to focus on the image of nonsingular parameters. These are certainly 
the most important in most applications. Moreover, any distribution arising from 
singular parameters will lie in the topological closure of this set, as singular parameters 
can be approximated by nonsingular ones. Of course the closure may also contain points 
that do not arise from any parameters, so this does not circumvent the difficulties of 
dealing with the many special cases of intermediate rank if an exact semialgebraic 
description of the full stochastic model is desired.

\smallskip
In the setting of nonsingular, though not necessarily stochastic parameters, we obtain the following.
 \begin{proposition}\label{prop:nonsing3toN}
   Let $P$ be an $n$-dimensional $k\times k \times \cdots \times k$ 
   distribution  with $n\ge 3$.
   Then $P$ arises from nonsingular complex parameters on a binary tree $T$ if,
   and only if,
 \begin{romannum}
 \item All marginalizations of $P$ to $3$ variables arise from 
   nonsingular parameters on the induced $3$-leaf, $3$-edge trees,
   and \label{cond:1}
 \item For all internal edges $e$ of $T$, all $(k+1) \times (k+1)$ minors of the matrix flattening $Flat_e(P)$ are
   0.\label{cond:2} 
    \end{romannum}
    
  Moreover, such nonsingular parameters are unique up to label
   swapping at internal nodes of $T$.
 \end{proposition}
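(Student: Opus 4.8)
The plan is to prove this by induction on the number of leaves $n$, using the $3$-leaf results of \S\ref{sec:THREEleaf} as the base case and the independence-decomposition machinery (Proposition \ref{prop:patchtogetherindependentsets} together with Lemmas \ref{lem:moveroot}, \ref{lem:moveroot2}) as the inductive engine. The forward direction is easy: if $P=\psi_T(\boldsymbol\pi,\{M_e\})$ for nonsingular complex parameters, then each $3$-leaf marginalization is obtained by marginalizing over the other variables, and the standard collapsing of a subtree into a single composite Markov matrix shows the marginal arises from nonsingular GM($k$) parameters on the induced $3$-leaf tree (nonsingularity is preserved because products and entrywise-positive combinations of nonsingular matrices with the induced root distribution remain nonsingular), giving \eqref{cond:1}. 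For \eqref{cond:2}, flattening along an internal edge $e$ factors as $\Flat_e(P)=A^T\diag(\boldsymbol\pi_e)B$ where $\boldsymbol\pi_e$ is the marginal at one endpoint of $e$ and $A,B$ are the conditional-probability matrices for the two sides; since each factor has only $k$ rows or columns, the rank of $\Flat_e(P)$ is at most $k$, so all $(k+1)\times(k+1)$ minors vanish.

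The substance is the reverse direction. First I would use Lemma \ref{lem:moveroot2} to root $T$ wherever convenient. The key structural idea is to read off, from the rank-$k$ flattenings, the transition matrices edge by edge. Condition \eqref{cond:2} guarantees $\Flat_e(P)$ has rank exactly $k$ for each internal edge (rank at least $k$ follows from \eqref{cond:1}, since a $3$-leaf marginal from nonsingular parameters forces the relevant flattening to have full rank $k$), so the column space is a $k$-dimensional subspace that plays the role of the "hidden variable" at that edge. For the inductive step, pick a cherry of $T$, say leaves $1$ and $2$ joined at an internal node $a$, with $a$ connected to the rest of $T$ through an edge $e$. The marginalization of $P$ to leaves $\{1,2,\ell\}$ for any third leaf $\ell$ arises from nonsingular $3$-leaf parameters by \eqref{cond:1}, and by the $3$-leaf uniqueness in Proposition \ref{prop:complexkstates} these parameters are consistent across choices of $\ell$. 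This lets me recover nonsingular matrices $M_1,M_2$ on the two cherry edges and the induced distribution at the internal node $a$, and then form the $(n-1)$-leaf distribution $P'$ obtained by replacing the cherry with a single leaf carrying the composite variable $X_a$. I would verify that $P'$ again satisfies \eqref{cond:1} and \eqref{cond:2} for the contracted tree $T'$, apply the inductive hypothesis to get nonsingular parameters on $T'$, and then splice $M_1,M_2$ back in to parameterize $T$.

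The main obstacle is the \emph{consistency} step: ensuring that the transition matrices and hidden-node distributions extracted from overlapping $3$-leaf marginalizations glue into a single globally consistent set of parameters, and in particular that the composite-variable distribution $P'$ genuinely arises from the contracted data in a way compatible with the inductive hypothesis. The $3$-leaf uniqueness-up-to-label-swapping of Proposition \ref{prop:complexkstates} is exactly the tool that resolves this: because the parameters at the cherry are determined up to a permutation of the $k$ states at node $a$, I can fix one such labelling and then propagate it consistently, absorbing any discrepancies between different marginalizations into the label-swap ambiguity. The role of condition \eqref{cond:2} is precisely to certify that the effective state space at each internal node is exactly $k$ (neither larger, which rank $>k$ would force, nor effectively smaller, which would break nonsingularity), so that the gluing is dimensionally possible.

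Finally, uniqueness up to label swapping follows from the same ingredients: any two nonsingular parameterizations of $P$ on $T$ induce the same $3$-leaf marginalizations, whose parameters are unique up to label swapping at the shared internal nodes by Proposition \ref{prop:complexkstates}; propagating these local permutations along the edges of $T$ and checking compatibility on overlapping cherries (again using that each internal flattening has rank exactly $k$, so the hidden state at each node is determined up to relabelling) shows the two global parameterizations differ only by label swaps at internal nodes. The delicate bookkeeping here is in confirming that a choice of label permutation at one internal node forces a compatible choice at adjacent nodes, which is where I expect the argument to require the most care.
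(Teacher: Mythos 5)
Your forward direction matches the paper's. For the reverse direction you take a genuinely different route --- contracting a cherry and inducting on the contracted tree, rather than the paper's strategy of cutting an internal edge $e_0$, factoring $\Flat_{e_0}(P)=AB$, and applying induction to the two marginalizations that each retain one leaf from the opposite side of the cut --- but your route has a genuine gap at the consistency step, and it is not the benign bookkeeping you describe. Proposition \ref{prop:complexkstates} gives uniqueness of parameters for a \emph{single} $3$-leaf distribution; it says nothing about whether parameters recovered from two \emph{different} triples agree. The triples $\{1,2,\ell\}$ and $\{1,2,\ell'\}$ overlap only in the $2$-leaf marginal $P_{12}=M_1^T\operatorname{diag}(\mathbf v_a)M_2$, and rank-$k$ factorizations of a single $k\times k$ matrix form a continuous family with no distinguished member: nothing forces the tuples $(M_1,M_2,\mathbf v_a)$ and $(\tilde M_1,\tilde M_2,\tilde{\mathbf v}_a)$ recovered from the two triples to differ by a label swap. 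Consequently your verification that $P'$ satisfies condition (i) breaks down: the triple $P'_{aij}=P_{1ij}*_1 M_1^{-1}$ uses the $M_1$ recovered from $\{1,2,\ell\}$, and without cross-triple consistency there is no reason its marginal at the new composite leaf, $\mathbf v_1 M_1^{-1}$, has no zero entries, nor that the final splice reproduces $P$ --- the splice requires identities such as $P_{2ij}=P_{1ij}*_1(M_1^{-1}M_2)$, which are precisely statements that $4$-leaf marginals arise from parameters on the induced quartet tree, i.e., instances of the very proposition being proved.

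The natural repair --- deduce consistency by applying the proposition itself to the $4$-leaf marginal $P_{12\ell\ell'}$, whose $3$-marginals and internal-edge flattening do inherit conditions (i) and (ii) --- is circular at the base of your induction, since for $n=4$ that marginal is $P$ itself. This is exactly the point where condition (ii) must be used as a gluing device rather than a mere rank certificate, and it is what the paper's argument supplies: the factorization $\Flat_{e_0}(P)=AB$ is shared by the two subtree distributions $Q'$ and $U'$, which are arranged (after rerooting via Lemma \ref{lem:moveroot2}) so that $\Flat_{(a,c)}(Q')=A$ and $\Flat_{(c,b)}(U')=\operatorname{diag}(\mathbf v_c)B$ with a common distribution $\mathbf v_c$ at the new root $c$; splicing the two parameter sets then automatically returns $AB=\Flat_{e_0}(P)$, so the glued parameters provably produce $P$. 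Your outline needs an analogous global ingredient tying the two sides of the cherry to the rest of the tree; as written, the inductive step already fails at $n=4$.
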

 
 \smallskip
 
 Note that condition (i) can be stated in terms of explicit semialgebraic conditions,
 using Corollary \ref{cor:complexToMarkov}.  
 Also, the polynomial equalities of condition (ii) are usually called \emph{edge invariants} \cite{ARgm}.
 
 \smallskip
 
 \begin{proof}
   For the forward implication, condition (i) follows since
   marginalizations arise from the model on the
   associated induced subtree, using Markov matrices that are products 
   of the original ones. Item (ii) is from \cite{ARgm}, where it
   is shown that all $P \in \image ( \psi_T)$ satisfy the edge invariants.  
   (The nonsingularity of parameters is not required for either of these.)  
   
   For the reverse implication, we proceed by induction on the size $n$ of the 
   variable set $L$. The claim holds by assumption in the base case of  $n=3$.
   Assume the statement is true for fewer than $n\ge 4$ variables. We identify leaves 
   of $T$ with the variables associated to them. Choose some internal edge 
   $e_0=(a,b)$ of $T$, corresponding
   to the split $L_1|L_2$ of $L$, with $|L_1|,|L_2|\ge 2$, $a$ in the subtree
   spanned by $L_1$, and $b$ in the subtree spanned by $L_2$. Introducing a vertex $c$
   subdividing $(a,b)$, let $T_1$ be the subtree with leaves $L_1\cup\{c\}$ and
   $T_2$ the subtree with leaves $L_2\cup\{c\}$. Thus $(a,c)$ in $T_1$ and
   $(b,c)$ in $T_2$ are the edges formed from dividing $(a,b)$.

   Since the edge invariants are satisfied by $P$, $\Flat_{e_0}(P)$ has
   rank at most $k$. Therefore, there exist $k^{|L_1|}\times k$ and
   $k \times k^{|L_2|}$ matrices $A,B$, both of rank at most $k$, with
$$\Flat_{e_0}(P)=AB.$$
Choose a single variable $\ell_2 \in L_2$ and let $Q$ denote the
marginalization of $P$ to $L_1\cup\{\ell_2\}$. Then there is a
$k^{|L_2|}\times k$ matrix $N$ such that
$$\Flat_{e_0}(P)N=ABN=\Flat_{e_1}(Q),$$
where this last flattening is along the edge $e_1 = (a,\ell_2)$ in the induced
subtree on $L_1\cup\{\ell_2\}$. Stated differently, multiplication by $N$ marginalizes over all
those leaves in $L_2$ except $\ell_2$. 

Since $Q$ also satisfies
conditions (i) and (ii), by the inductive hypothesis $Q$ arises from
nonsingular parameters. 
Moreover, we see that $\Flat_{e_1}(Q)$ has rank $k$,
since marginalization over all but one variable in $L_1$
is seen to produce a rank $k$ matrix from the nonsingular parameterization.
It follows that the $k \times k$
matrix $BN$ has rank $k$.  Replacing $A$ and $B$ with $AC$ and
$C^{-1}B$ for some invertible $k \times k$ matrix $C$, we may further
assume the rows of $BN$ add to $1$.

Now since $Q$ arises from nonsingular parameters on a $(|L_1|+1)$-leaf
tree isomorphic to $T_1$ rooted at $a$, we claim that
$Q'=Q*_{\ell_2}(BN)^{-1} $ arises from nonsingular complex parameters
on $T_1$ for some suitable choice of $B$. Indeed, $Q'$ arises from the
same parameters as $Q$, except that on the edge $(a,c)$ we use the
matrix parameter that is the product of the one on the edge leading to
$\ell_2$ and $(BN)^{-1}$.  Since $(BN)^{-1}$ is a nonsingular matrix
with rows summing to one, the only condition to check is that the
marginalization of the resulting distribution to $c$ has no zero
entries.  But this marginalization is $\mathbf v_c=\mathbf
v_{{\ell_2}} (BN)^{-1}$, and has a zero entry only if $\mathbf
v_{{\ell_2}}$ is in the left nullspace of one (or more) of the columns
of ${(BN)}^{-1}$.  However, replacing $A$ and $B$ with $AC$ and
$C^{-1}B$ for some appropriate nonsingular matrix $C$ whose rows sum
to one, we can ensure that $\mathbf v_c$ has no zero entries.

Since the parameters producing $Q'$ are nonsingular, 
by Lemma \ref{lem:moveroot2} we may reroot $T_1$ at $c$, with parameters 
the root distribution $\mathbf v_c$, matrices
$\{M_{e}\}$ on all edges of $T_1$ corresponding to ones in $T$, and
matrix $M_{(c,a)}$ on the edge $(c,a)$.

Now with $K$ the matrix which marginalizes $\Flat_{e_0}(P)$ over all
elements of $L_1$ but one, say $\ell_1$, we see $$K \Flat_{e_0}(P)
=KAB=\Flat_{e_2}(U),$$ where $U$ is the marginalization of $P$ over the
same elements of $L_1$ and the last flattening is on $e_2 = (b, \ell_1)$ in the
induced subtree, which is isomorphic to $T_2$.  But by induction $U$
arises from nonsingular parameters on $T_2$ rooted at $b$. Letting $M$ be
the product of the matrix parameters on the edges in the path from $c$
to $\ell_1$ in $T_1$.  
Then
$U'=U*_{\ell_1} M^{-1}$ also arises from nonsingular parameters on
$T_2$ (checking that its marginalization to $c$ is $\mathbf v_{\ell_1} M^{-1}
=\mathbf v_c$, which has no zeros by construction).

Note now that $U'$ has flattening $(M^{-1})^TKAB$.  But $(M^{-1})^TKA=\diag(\mathbf v_c)$ by
construction.  Thus $\diag(\mathbf
v_c)B$ is the $c|L_2$ flattening of a tensor arising on $T_2$ from
nonsingular complex parameters. With the root at $c$,  let $M_{e}$ be the
Markov parameters for all edges of $T_2$ corresponding to ones in $T$,
and $M_{(c,b)}$ the Markov matrix on $(c,b)$. The root distribution 
$\mathbf v_c$ is the same as for $T_1$.

It remains to check that $P$ is the image of the parameters on $T$
with subdivided edges $(c,a)$ and $(c,b)$ rooted at $c$ given by
$\mathbf v_c$, $\{M_{e}\}_{e\ne (a,b)}$, and $M_{(c,a)}$ and
$M_{(c,b)}$.  But these parameters lead to the distributions $Q'$ and
$U'$ on $T_1$ and $T_2$ respectively. Since $\Flat_{(a,c)}(Q')=A$
while $\Flat_{(c,b)}(U')=\diag(\mathbf v_c)B$, the equation
$\Flat_{e}(P)=AB$ shows they produce $P$ on $T$.

That the parameters are unique, up to label swapping at
the internal nodes of $T$, follows from the 3-leaf case.
 \qquad\end{proof}

\smallskip

Note that in establishing the reverse implication in Proposition
\ref{prop:nonsing3toN} we did not use condition (i) for every
3-variable marginalization. Informally, given a tree $T$ one could
choose a sequence of edges which can be successively `cut' (by the
introduction of the node $c$ in the inductive proof above) to produce
a forest of 3-taxon trees. Then condition (i) is only needed for a
subset of the 3-leaf marginalizations, determined by the sequence of
edges chosen to cut and the choice of the variables denoted
$\ell_1,\ell_2$ in the proof. Similarly, not all edge flattenings of
condition (ii) are used: For the first edge to be `cut', one uses the
full edge flattening, but after that, only edge flattenings of
marginalizations to subsets of variables are needed. Thus the full set
of conditions given in this proposition is actually equivalent to a
subset of them, though pinning down a precise subset is rather messy
and will not be pursued here.

\medskip

Supposing now that an $n$-dimensional distribution $P$ arises from
nonsingular complex parameters on a binary tree $T$, we wish to give
semialgebraic conditions that are satisfied if, and only if, the
parameters are stochastic.  By considering only marginalizations to
$3$ variables and appealing to Proposition \ref{thm:kstate-quadform},
we can give conditions that hold precisely when the root distribution
and products of matrix parameters along any path leading from an
interior vertex of $T$ to leaves are stochastic. This immediately
yields semialgebraic conditions that the root distribution and matrix
parameters on terminal edges are stochastic. However, additional
criteria are needed to ensure matrix parameters on interior edges are
stochastic. In the 4-leaf case, such criteria are given by the
following.

\begin{proposition} \label{prop:4taxa} Suppose a tensor $P$ arises
  from nonsingular complex parameters for GM($k)$ on the $4$-leaf tree
  $12|34$.  If the $3$-marginalizations $P_{\cdot\cdot\cdot+}$ and
  $P_{+\cdot\cdot\cdot}$ arise from stochastic parameters and, in
  addition, all principal minors of the $k^2\times k^2$ matrix
\begin{equation}\det(P_{+\cdot\cdot+}  )\det(P_{\cdot+\cdot+} )\Flat_{13|24}\left( P*_2 (\adj(P^T_{+\cdot\cdot +}) P^T_{\cdot+\cdot+} ))*_3( \adj( P_{\cdot +\cdot +}) P_{\cdot++\cdot} ) \right)\label{eq:bigflat}\end{equation}  are nonnegative, then $P$ arises from stochastic parameters.
\end{proposition}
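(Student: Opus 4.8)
The plan is to use the uniqueness of nonsingular parameters to fix one concrete parameterization, reduce the problem to the nonnegativity of a single internal matrix, and then recognize the matrix \eqref{eq:bigflat} as the matrix of a quadratic form of the shape $B^{T}\Gamma B$ that detects exactly that nonnegativity. Root $T$ at the internal node $a$ adjacent to leaves $1,2$, call the other internal node $b$, and write $M_1,M_2$ for the edge parameters to leaves $1,2$, $M_3,M_4$ for those to leaves $3,4$, $N=M_{(a,b)}$ for the internal matrix, and $\boldsymbol\pi$ for the root distribution with $D=\diag(\boldsymbol\pi)$; these are determined up to label swapping by Proposition~\ref{prop:nonsing3toN}. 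First I would observe that $P_{\cdot\cdot\cdot+}$ and $P_{+\cdot\cdot\cdot}$ are the star-tree distributions on $\{1,2,3\}$ and $\{2,3,4\}$, each arising from the (nonsingular) restriction of our parameters. Since $\det(Q \ast_i \ones)\neq 0$ at every index of such a marginalization $Q$, any stochastic parameters producing it are themselves nonsingular, so by the uniqueness in Theorem~\ref{thm:kstate-quadform} they agree with our restricted parameters. The stated hypothesis therefore forces $\boldsymbol\pi$ to be a positive distribution and $M_1,M_2,M_3,M_4$ to be Markov. As $N$ already has unit row sums, the entire claim reduces to showing $N\ge 0$.

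The heart of the argument is an explicit simplification of \eqref{eq:bigflat}. Using the factorizations $P_{+\cdot\cdot+}=M_2^{T}DNM_3$, $P_{\cdot+\cdot+}=M_1^{T}DNM_3$, and $P_{\cdot++\cdot}=M_1^{T}DNM_4$, a short computation with $\adj(\cdot)=\det(\cdot)\,(\cdot)^{-1}$ shows that the two operators telescope:
\begin{align*}
\adj(P_{+\cdot\cdot+}^{T})\,P_{\cdot+\cdot+}^{T} &= \det(P_{+\cdot\cdot+})\,M_2^{-1}M_1,\\
\adj(P_{\cdot+\cdot+})\,P_{\cdot++\cdot} &= \det(P_{\cdot+\cdot+})\,M_3^{-1}M_4.
\end{align*}
Because applying $\ast_2$ by a matrix replaces the leaf-$2$ parameter $M_2$ by $M_2$ times that matrix (and similarly for $\ast_3$), the transformed tensor $\tilde P := P\ast_2(\adj(P_{+\cdot\cdot+}^{T})P_{\cdot+\cdot+}^{T})\ast_3(\adj(P_{\cdot+\cdot+})P_{\cdot++\cdot})$ is parameterized by the same $\boldsymbol\pi,N$ but with the four leaf parameters $M_1,\ \det(P_{+\cdot\cdot+})M_1,\ \det(P_{\cdot+\cdot+})M_4,\ M_4$. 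Flattening along $13\mid24$ and collecting the internal summation into the $k^2\times k^2$ diagonal matrix $\Gamma$ with $(s,t)$-entry $\pi_s N(s,t)$ then yields, with $B=M_1\otimes M_4$,
$$\Flat_{13\mid24}(\tilde P)=\det(P_{+\cdot\cdot+})\det(P_{\cdot+\cdot+})\,B^{T}\Gamma B,$$
so the full matrix of \eqref{eq:bigflat} equals $\big(\det(P_{+\cdot\cdot+})\det(P_{\cdot+\cdot+})\big)^{2}\,B^{T}\Gamma B$, which is symmetric.

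Finally I would invoke Sylvester's Theorem~\ref{thm:sylvester}. The scalar prefactor is a nonzero square, hence positive, and scaling an $n\times n$ matrix by a positive constant multiplies each $j\times j$ principal minor by a positive quantity, so the nonnegativity of all principal minors of \eqref{eq:bigflat} is equivalent to $B^{T}\Gamma B$ being positive semidefinite. Since $B=M_1\otimes M_4$ is nonsingular (both $M_1$ and $M_4$ are), congruence by $B$ preserves positive semidefiniteness, so $B^{T}\Gamma B$ is positive semidefinite if, and only if, $\Gamma$ is. As $\Gamma$ is diagonal with entries $\pi_s N(s,t)$ and every $\pi_s>0$, this holds exactly when $N(s,t)\ge 0$ for all $s,t$; thus $N$ is a Markov matrix, and together with the first paragraph all parameters are stochastic. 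I expect the main obstacle to be the bookkeeping of the second paragraph: verifying that the $\adj$-products collapse to $\det(P_{+\cdot\cdot+})M_2^{-1}M_1$ and $\det(P_{\cdot+\cdot+})M_3^{-1}M_4$, and that the $13\mid24$ flattening of $\tilde P$ is precisely the Kronecker-structured form $B^{T}\Gamma B$, so that the lone internal parameter $N$ is isolated by a single positive-semidefiniteness condition — everything afterward is a routine appeal to Sylvester's Theorem and to the congruence-invariance of definiteness.
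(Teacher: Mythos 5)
Your proposal is correct and takes essentially the same route as the paper's own proof: both arguments use the marginalization hypotheses (via uniqueness of nonsingular parameters) to force $\boldsymbol\pi$ positive and the leaf matrices Markov, transform $P$ by $\ast_2$ and $\ast_3$ so that leaves $1,2$ share the matrix $M_1$ and leaves $3,4$ share $M_4$, recognize the off-tree flattening $\Flat_{13|24}$ of the transformed tensor as $(M_1\otimes M_4)^T\,\Gamma\,(M_1\otimes M_4)$ with $\Gamma$ diagonal carrying the entries of $\diag(\boldsymbol\pi)N$, and conclude $N\ge 0$ by Sylvester's theorem and congruence-invariance of positive semidefiniteness. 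The only differences are expository: the paper cites an earlier work for the Kronecker-structured flattening and clears denominators at the end, whereas you derive the telescoping adjugate identities and track the determinant prefactor $\bigl(\det(P_{+\cdot\cdot+})\det(P_{\cdot+\cdot+})\bigr)^2$ explicitly.
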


The statement about the minors of the symmetric matrix in
\eqref{eq:bigflat} is of course really a requirement that this matrix be positive
semidefinite.  Also, this matrix could instead be replaced by ones
where the roles of leaves 1 and 2 or of leaves 3 and 4 have been
interchanged.

\begin{proof} Root $T$  at the interior node
near leaves $1$ and $2$. Let $M_i$, $i=1,2,3,4$ be the complex matrix parameter with row sums
equal to one on the edge leading 
to leaf $i$, $M_5$ the matrix parameter on the internal edge, and 
$\boldsymbol \pi$ the root distribution.  Define the matrices
\begin{align*}
N_{32}&=P_{+\cdot\cdot+}^T= M_3^TM_5^T\diag(\boldsymbol \pi) M_2, \\
N_{31}&=P_{\cdot+\cdot+}^T= M_3^TM_5^T\diag(\boldsymbol \pi) M_1.
\end{align*}
Then
$$
\overline P=P*_2N_{32}^{-1}N_{31}
$$
is a tensor arising from the same parameters
as $P$ except that $M_2$ has been replaced with $M_1$.  That is,
now the same matrix parameter is used on the edges leading to
leaves $1$ and $2$.

Similarly with 
\begin{align*}
N_{14}&=P_{\cdot++\cdot}= M_1^T\diag(\boldsymbol \pi) M_5 M_4,\\
N_{13}&=P_{\cdot+\cdot+}= M_1^T\diag(\boldsymbol \pi) M_5 M_3,
\end{align*}
then 
\begin{equation}\label{eq:Ptilde}
\widetilde P=\overline P*_3N_{13}^{-1}N_{14}
\end{equation} 
is a tensor arising from the same
parameters as $\overline P$ except that $M_3$ has been replaced with $M_4$.

Consider now the $13|24$ flattening of $\widetilde P$, a flattening 
which is \emph{not} consistent with the topology of the underlying tree. 
As shown in \cite{ARcovariontreeID2006}, this can be expressed as a 
product of $k\times k$ matrices
\begin{equation}\Flat_{13|24} (\widetilde P)= A^TDA,\label{eq:bigsym}\end{equation}
where $D$ is the diagonal matrix with the $k^2$ entries of
$\diag(\boldsymbol \pi)M_5$ on its diagonal, and $A=M_1\otimes M_4$ is
the Kronecker product.  Because $M_1,M_4$ are nonsingular, so is $A$.
Since conditions on $3$-marginals ensure $\boldsymbol \pi$ has
positive entries, we can ensure $M_5$ has nonnegative entries by
requiring that $\Flat_{13|24}( \widetilde P)$ be positive semidefinite.
Using Sylvester's theorem, this is equivalent to requiring that its principal minors
be nonnegative.  Since the resulting inequalities would
involve rational expressions, due to the inverses of matrices, we
first multiply $\Flat_{13|24} \widetilde P$ by squares of nonzero
determinants, to remove denominators.  \qquad\end{proof}

\smallskip

The matrix in \eqref{eq:bigflat} has entries of degree $4k+1$ in those of $P$.  
After removing squares of powers of determinants for
even minors, the polynomial inequalities from the principal $j \times j$ minors 
are of degrees
$j(2k+1) + 2ke_j$, where $e_j\in\{0,1\}$ gives the parity of $j$.

Together with Theorems \ref{thm:kstate-quadform} and \ref{thm:nonneg}, the last two propositions yield the following theorem.

\begin{theorem}\label{thm:kstate-semialgebraic}
  Suppose $P$ is an $n$-dimensional joint probability distribution for the $k$-state
  variables $Y_1,\dots, Y_n$.
  Then $P$ arises from nonsingular stochastic parameters for
  GM($k$) on an $n$-leaf binary tree $T$ if, and only
  if,

\begin{romannum}

\item All marginalizations of $P$ to $3$ variables satisfy the
  conditions of Theorem \ref{thm:kstate-quadform} (or if $k=2$ of
  Theorem \ref{thm:nonneg}) to arise from nonsingular stochastic
  parameters on a $3$-leaf tree;\label{itm:1}
   
\item For all internal edges $e$ of $T$, the edge invariants are satisfied by
   $P$, i.e., all $(k+1) \times (k+1)$ minors of the matrix flattening $Flat_e(P)$ are
   0;

\item For each internal edge $e$ of $T$, and some choice of $4$ leaves inducing a quartet
 tree with internal edge $e$, all principal minors of the matrix flattening constructed in
  Proposition \ref{prop:4taxa}, for the $4$-dimensional marginalization, are
  nonnegative.
\end{romannum}

\end{theorem}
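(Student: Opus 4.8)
The plan is to prove both implications by reducing the $n$-leaf statement to the already-established $3$- and $4$-leaf results: Proposition~\ref{prop:nonsing3toN} first secures nonsingular complex parameters, and the remaining task is to upgrade them to stochastic ones, one node and one edge at a time. For the forward implication, suppose $P$ arises from nonsingular stochastic parameters on $T$. Condition (i) is immediate, since any marginalization to three variables is the image of the induced $3$-leaf model whose root distribution is a nodal marginal and whose Markov matrices are products of the original (nonsingular, stochastic) matrices along the connecting paths; such products are again nonsingular stochastic matrices, so by Theorem~\ref{thm:kstate-quadform} (or Theorem~\ref{thm:nonneg} when $k=2$) its conditions hold. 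Condition (ii) holds for every $P\in\image(\psi_T)$ by \cite{ARgm}. For condition (iii), I would invoke the computation in the proof of Proposition~\ref{prop:4taxa}, which identifies the matrix of \eqref{eq:bigflat} with $\det(P_{+\cdot\cdot+})^2\det(P_{\cdot+\cdot+})^2\,A^TDA$, where $D$ carries the entries of $\diag(\boldsymbol\pi)M_5$ on its diagonal and $A=M_1\otimes M_4$; stochasticity forces $\boldsymbol\pi$ and the internal-edge matrix to be nonnegative, hence $D\ge 0$, so this matrix is positive semidefinite and all its principal minors are nonnegative.

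For the reverse implication, conditions (i) and (ii) are exactly the hypotheses of Proposition~\ref{prop:nonsing3toN}, so $P$ is the image of nonsingular complex parameters $\{\boldsymbol\pi,\{M_e\}\}$, unique up to label swapping; fix one such choice. The key remark is that being a (complex, row-sum-one) stochastic matrix or distribution is preserved by the permutations of internal-node states generating the label-swapping ambiguity, so stochasticity can be checked locally without any global consistency obstruction. Applying condition (i) through Theorem~\ref{thm:kstate-quadform} to a star of three leaves lying in distinct subtrees around each internal node $v$ shows that the nodal marginal $\mathbf v_v$ is a genuine probability distribution and that the path-product matrices are Markov; choosing the three leaves so that one path is a single pendant edge shows each pendant-edge matrix is stochastic, and the same analysis at the chosen root shows $\boldsymbol\pi$ is stochastic. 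For an internal edge $e=(a,b)$, I would select one leaf in each of the two subtrees incident to $a$ away from $e$, and likewise at $b$, so that the induced quartet has $a$ and $b$ as its branch vertices and its internal-edge matrix is exactly $M_e$. The relevant $3$-marginalizations of this quartet are $3$-marginalizations of $P$, so by condition (i) they arise from stochastic parameters; together with the nonnegativity of principal minors supplied by condition (iii), Proposition~\ref{prop:4taxa} forces the quartet, and in particular $M_e$, to be stochastic. Since every edge matrix and the root distribution are then stochastic, $P$ arises from nonsingular stochastic parameters, with uniqueness up to label swapping inherited from Proposition~\ref{prop:nonsing3toN}.

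The main obstacle I anticipate is not any single computation but the bookkeeping that glues the local conclusions into a single globally stochastic parameter set: each $3$- or $4$-leaf analysis recovers parameters only up to an independent label swapping at the relevant internal nodes, and one must argue these ambiguities cannot obstruct simultaneous stochasticity. The clean resolution is the permutation-invariance of stochasticity noted above, which lets one avoid matching labelings across the tree. The second delicate point is verifying that the chosen quartet's internal edge is the single edge $e$ rather than a longer path, so that Proposition~\ref{prop:4taxa} constrains $M_e$ itself; this holds precisely because $a$ and $b$ each contribute two leaves through distinct incident subtrees, placing the quartet's branch vertices at $a$ and $b$.
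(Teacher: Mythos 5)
Your proposal is correct and takes essentially the same route as the paper, which obtains this theorem directly from Proposition \ref{prop:nonsing3toN} (conditions (i)--(ii) yield nonsingular complex parameters), from the $3$-marginalization conditions of Theorem \ref{thm:kstate-quadform} forcing the root distribution and pendant-edge matrices to be stochastic, and from Proposition \ref{prop:4taxa} forcing each internal-edge matrix to be stochastic. The details you make explicit --- the invariance of stochasticity under label swapping, and choosing the four leaves so the induced quartet's branch vertices are exactly the endpoints of $e$ --- are precisely the points the paper leaves implicit in its one-sentence derivation.
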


\smallskip

While we noted that one can use a smaller set of inequalities than
were given in Proposition \ref{prop:nonsing3toN} to ensure a
distribution arises from nonsingular parameters, the full set of
inequalities given in Theorem \ref{thm:kstate-semialgebraic} has
additional redundancies.  To illustrate, in the $4$-leaf case checking
that only two of the $3$-marginals, say $P_{+\cdot\cdot\cdot}$ and
$P_{\cdot\cdot\cdot+}$ for the tree $12|34$, satisfy the conditions of
Proposition \ref{thm:kstate-quadform}.  is sufficient.

\smallskip

For a 4-variable distribution $P$, it is straightforward to obtain
semialgebraic conditions ensuring $P$ arises from strictly positive
parameters: One need only require the more stringent condition (iv')
of Proposition \ref{thm:kstate-quadform}, on the marginalizations
$P_{\cdot\cdot\cdot+}$ and $P_{+\cdot\cdot\cdot}$ to ensure they arise
from strictly positive parameters, and that all leading principal
minors of the matrix in \eqref{eq:bigflat} are strictly positive. This
then allows one to give such conditions applicable to larger trees, 
establishing the following.

\smallskip

\begin{theorem}  Semialgebraic conditions that a probability distribution 
$P$ arises from nonsingular positive parameters for \GMk on a tree $T$
can be explicitly given.
\end{theorem}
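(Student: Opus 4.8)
The plan is to mirror the proof of Theorem \ref{thm:kstate-semialgebraic}, replacing each nonnegativity (positive semidefinite) condition by its strict (positive definite) analog, and then to verify that the strengthened conditions pin down each \emph{individual} parameter rather than merely products of parameters. First I would invoke Proposition \ref{prop:nonsing3toN}: requiring that all $3$-variable marginalizations of $P$ arise from nonsingular complex parameters (a semialgebraic condition via Corollary \ref{cor:complexToMarkov}) together with the edge invariants guarantees that $P$ is the image of nonsingular complex parameters on $T$, unique up to label swapping. This scaffolding is insensitive to positivity and is common to both theorems, so the remaining task is to add strict semialgebraic conditions forcing these (already determined) parameters to have strictly positive entries.

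Next I would strengthen the $3$-marginal hypotheses: instead of condition (iv) of Theorem \ref{thm:kstate-quadform}, require condition (iv') for every marginalization of $P$ to three variables, so that each such marginalization arises from \emph{positive} nonsingular parameters on its $3$-leaf tree. The key point is that judiciously chosen triples force the root distribution and every terminal-edge matrix to be strictly positive. Given a leaf $\ell$ with parent $p$, since $p$ is an internal (degree-$3$) node one can choose two further leaves $a,b$ lying in the two components of $T\smallsetminus\{p\}$ not containing $\ell$; the median of $\{\ell,a,b\}$ is then $p$, and in the induced $3$-leaf marginalization the effective matrix on the edge to $\ell$ is exactly $M_{(p,\ell)}$, with no leftover product. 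Applying (iv') to this triple thus makes $M_{(p,\ell)}$ positive, and the marginal distribution at $p$ (in particular the root distribution) positive as well.

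It then remains to force the interior-edge matrices to be positive, and here I would use the strict version of Proposition \ref{prop:4taxa}. For each internal edge $e=(a,b)$ of $T$, choose leaves $\ell_1,\ell_2$ in the two subtrees hanging off $a$ and $\ell_3,\ell_4$ in the two subtrees hanging off $b$; then the quartet $\ell_1\ell_2|\ell_3\ell_4$ induces an internal edge that is exactly $e$, so the matrix denoted $M_5$ in Proposition \ref{prop:4taxa} equals $M_e$ undivided. The two relevant $3$-marginals of this quartet already arise from positive parameters by the previous step, so I would impose that all leading principal minors of the matrix in \eqref{eq:bigflat} be strictly positive, which by Sylvester's Theorem \ref{thm:sylvester} means exactly that this matrix is positive definite. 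By the factorization \eqref{eq:bigsym} it equals $A^T D A$, with $A=M_1\otimes M_4$ nonsingular and $D$ a diagonal matrix whose diagonal entries are, up to the positive factors supplied by the root distribution, the entries of $M_e$; since $A$ is invertible and the root distribution is already positive, positive definiteness of $A^T D A$ is equivalent to $M_e$ having strictly positive entries. Collecting the three families of strict semialgebraic conditions then yields both implications, with uniqueness up to label swapping inherited from the $3$-leaf case.

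The main obstacle I anticipate is not any single computation but the combinatorial bookkeeping in the last two steps: one must check that for \emph{every} edge of $T$ there is a $3$-leaf or $4$-leaf marginalization in which that edge's matrix appears as an undivided factor, so that positivity of the individual parameter (not merely of a product) is genuinely forced. Positivity of a product of positive matrices is automatic, but the converse fails, so the careful choice of medians for terminal edges and of quartets inducing a single internal edge is precisely what makes the strict conditions sufficient; once this is verified, the forward direction follows immediately since marginalizing positive parameters again yields positive parameters.
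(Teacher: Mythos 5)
Your proposal is correct and takes essentially the same approach as the paper: the paper's (very terse) proof likewise strengthens condition (iv) of Theorem \ref{thm:kstate-quadform} to (iv') on the relevant $3$-marginalizations and replaces the nonnegativity of principal minors of the matrix in \eqref{eq:bigflat} by strict positivity of its leading principal minors (i.e., positive definiteness), then propagates these conditions from the $4$-leaf case to larger trees exactly as in Theorem \ref{thm:kstate-semialgebraic}. Your explicit bookkeeping --- choosing triples whose median is the parent of a leaf, and quartets whose induced internal edge is a single edge of $T$, so that each parameter matrix appears undivided --- fills in details the paper leaves implicit but does not change the method.
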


\smallskip

Note that one can also handle non-binary trees by the techniques of
this section. To show a distribution arises from nonsingular, or
stochastic nonsingular, parameters on a non-binary tree, one need only
show it arises from parameters on a binary resolution of the tree, and
that the Markov matrix on each edge introduced to obtain the
resolution is the identity. But semialgebraic conditions that the
Markov matrix on an internal edge of a 4-leaf tree be $I$ (or a
permutation, since label swapping prevents us from distinguishing
these) amounts to requiring that the matrix of equation
\eqref{eq:bigsym} has rank $k$.
Indeed, rank $k$ implies that the Markov
matrix on the internal edge has only $k$ nonzero entries, and since
other conditions we have derived imply nonsingularity, the matrix must
be a permutation.

\medskip

We now give an example illustrating that the quadratic form approach
of Proposition \ref{prop:4taxa}, and thus of Theorem
\ref{thm:kstate-semialgebraic}, detects a probability distribution
that is in the image of $\psi_T$ for nonsingular real \GMtwo
parameters on the $4$-taxon tree, where each matrix parameter on a
terminal edge is stochastic but the one on the internal edge is not.
By choosing parameters with some care, we can arrange that such a
probability distribution $P$ satisfies that all $3$-marginalizations
arise from stochastic parameters, yet $P$ does not.  Such examples are
not new (see for example \cite{ARquartets, KL2011, SZ2011}), but we
include one here to illustrate our methods.

To create such an example, set the Markov parameter on each terminal
edge to have positive entries, using, for instance, the same $M$ on
each of these 4 edges. Then choose the matrix parameter $N$ on the
internal edge of the tree to have very small negative off-diagonal
entries, so small so that both $MN$ and $NM$ are Markov matrices.  The
root distribution may be taken to be any probability distribution with
positive entries.  An example of such an (exact) probability
distribution is given by $P$ with slices
\begin{align*}\label{ex:4taxonInternalEdgeNonstochastic}
P_{\cdot \cdot 1 1} &= 
\left[
\begin{array}{cc}
   0.4005062  & 0.0565718\\
   0.0565718  & 0.0545702
\end{array}
\right] \hskip .5cm
P_{\cdot \cdot 1 2} = 
\left[
\begin{array}{cc}
   0.0457358   & 0.0141662\\
   0.0141662   & 0.0379118
\end{array}
\right] \\
P_{\cdot \cdot 2 1} &=
\left[
\begin{array}{cc}
   0.0457358  & 0.0141662\\
   0.0141662  & 0.0379118
\end{array}
\right] \hskip .5cm
P_{\cdot \cdot 2 2} =
\left[
\begin{array}{cc}
   0.0100222  & 0.0330958\\
   0.0330958  & 0.1316062
\end{array}
\right].
\end{align*}
Here $P$ satisfies all conditions  of 
Theorem \ref{thm:kstate-semialgebraic} except (iii).  A  computation shows
that the principal minors of the matrix
in \eqref{eq:bigflat} are, when rounded to eight decimal places,
$0.00363408$,  $ 0.00001744$,  $ 0.00000060$, and $ -0.00000005$.
The negativity of one of these shows $P$ does not arise from stochastic parameters.

\medskip

We conclude with a complete semialgebraic description of the
$2$-state general Markov model on a $4$-leaf tree without a restriction to 
nonsingular stochastic parameters.  This is straightforward to give, since 
Proposition \ref{prop:ind} indicates that in this case a distribution which arises 
from parameters either has independent leaf sets (so we can decompose the 
tree using Proposition  \ref{prop:patchtogetherindependentsets}), or the parameters 
were nonsingular so Theorem  \ref{thm:kstate-semialgebraic} applies.

As observed earlier, the existence of many non-independence 
cases when $k>2$ prevents us from assembling as complete a result.

\smallskip

\begin{proposition} For the $4$-leaf tree $12|34$, the image of the stochastic parameter space
  under the general Markov model $GM(2)$ is the union of the following sets of
  nonnegative tensors whose entries add to 1:

\begin{remunerate}

\item Probability distributions of $4$ independent variables: $P$ such that all $2\times2$
  minors of every edge flattening vanish (i.e., all edge flattenings have rank 1);

\item Probability distributions with partition into minimal independent sets of variables of
  size $1$, $3$, of which there are 4 cases: If the partition is $\left \{ \{Y_1\},\{Y_2,Y_3,Y_4\}\right \}$, then
  $P$ such that all $2\times 2$ minors of $\Flat_{1|234}(P)$ vanish, 
  and $P_{+\cdot\cdot\cdot}$ satisfies the conditions of Theorem
  \ref{thm:nonneg}, Case 1;

\item Probability distributions with partition into minimal
  independent sets of variables of size $1$, $1$, $2$, of which there
  are 6 cases: If the partition is $ \{Y_1\}|\{Y_2\}|\{Y_3,Y_4\}$,
  then $P$ such that all $2\times 2$ minors of $\Flat_{1|234}(P)$ and
  $\Flat_{2|134}(P)$ vanish, and $\det(P_{++\cdot\cdot})$ is nonzero;

\item Probability distributions with partition into minimal
  independent sets of variables $\left \{
    \{Y_1,Y_2\},\{Y_3,Y_4\}\right \}$ of size $2$, $2$: $P$ such that
  all $2\times 2$ minors of $\Flat_{12|34}(P)$ vanish, yet
  $\det(P_{\cdot\cdot++})$ and $\det(P_{++\cdot\cdot})$ are nonzero,

\item Probability distributions with no independent sets of variables: $P$ such that the edge invariants for
  $12|34$ are satisfied, the $3$-d marginalizations $P_{+\cdot\cdot\cdot}$ and
  $P_{\cdot\cdot\cdot+}$ satisfy the conditions of Theorem 
  \ref{thm:nonneg}, Case 1, and all
  principal minors of the matrix constructed in Proposition \ref{prop:4taxa} are
  nonnegative.

\end{remunerate}
\end{proposition}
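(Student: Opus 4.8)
The plan is to prove both inclusions by organizing distributions according to their \emph{finest partition into mutually independent leaf sets}, which for the tree $12|34$ yields exactly the five configurations enumerated. The starting observation is Proposition~\ref{prop:ind}: for $k=2$, a distribution produced by stochastic parameters either admits no proper partition into independent sets---forcing the parameters to be nonsingular, which is Case~5 and is already handled by Theorem~\ref{thm:kstate-semialgebraic}---or it admits such a partition, whose finest refinement has blocks of total size $4$ arranged as four singletons (Case~1), $1{+}3$ (Case~2), $1{+}1{+}2$ (Case~3), or $2{+}2$ (Case~4). Recalling that leaf sets $A|B$ are independent precisely when $\Flat_{A|B}(P)$ has rank $1$, each block structure translates into vanishing of the appropriate $2\times 2$ flattening minors, while minimality of a block of size $2$ translates into nonvanishing of the corresponding $2\times 2$ marginal determinant. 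This matching of combinatorial partition, independence, and polynomial conditions is the backbone of the argument.

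For the reverse inclusion I would treat Cases~1--4 uniformly via the gluing result. In each case the stated rank-$1$ minor conditions assert precisely that the indicated blocks are mutually independent, so by Proposition~\ref{prop:patchtogetherindependentsets2} it suffices to verify that the marginal on each block arises from stochastic parameters on the induced subtree. A singleton block is trivial; a size-$2$ block is an arbitrary nonnegative probability matrix, which always arises from stochastic parameters on a $2$-leaf tree (and the condition $\det\neq 0$ merely records that the block is genuinely minimal, i.e.\ the two variables are not themselves independent); and a size-$3$ block is handled by Case~1 of Theorem~\ref{thm:nonneg}. Since the block-subtrees can be reconnected---introducing edges that may subdivide their edges---to form exactly $12|34$, Proposition~\ref{prop:patchtogetherindependentsets2} then produces stochastic parameters on $T$ yielding $P$. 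Case~5 is immediate from Theorem~\ref{thm:kstate-semialgebraic}.

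For the forward inclusion, suppose $P=\psi_T(\boldsymbol\pi,\{M_e\})$ for stochastic parameters and let $L_1|\dots|L_s$ be its finest partition into independent sets. The converse to Proposition~\ref{prop:patchtogetherindependentsets} (noted there to require neither independence nor nonsingularity) shows each marginal on $L_i$ arises from stochastic parameters on the induced subtree; minimality of $L_i$ together with Proposition~\ref{prop:ind} then forces those parameters to be nonsingular. I would read off the semialgebraic conditions accordingly: nonsingularity of a size-$2$ marginal is $\det\neq 0$, nonsingularity of a size-$3$ marginal is exactly Case~1 of Theorem~\ref{thm:nonneg}, and independence of distinct blocks is the vanishing of the relevant flattening minors. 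Matching the size profile of $\{L_i\}$ to Cases~1--4, or to Case~5 when $s=1$, places $P$ in the corresponding set.

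The hard part will be the bookkeeping that makes the case list \emph{exhaustive and exclusive} and that certifies the conditions encode \emph{minimal} independent sets rather than merely some independent partition. Concretely, I must confirm the finest independent partition is well defined for these distributions and that each size-$2$ determinant condition (for example $\det(P_{++\cdot\cdot})\neq 0$ in Case~3) genuinely prevents further refinement of that block, so that a given $P$ lands in a single case; and I must check in every configuration that the block-subtrees can indeed be reassembled into the specific quartet topology $12|34$, so that the gluing of Proposition~\ref{prop:patchtogetherindependentsets2} delivers parameters on $T$ rather than on some other tree. These verifications are routine but must be carried out for each of the labelled sub-cases (four in Case~2, six in Case~3).
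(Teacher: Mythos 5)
Your overall strategy is the same as the paper's (the paper gives only a two-sentence sketch plus remarks): the dichotomy of Proposition \ref{prop:ind} for $k=2$, gluing via Proposition \ref{prop:patchtogetherindependentsets2} for the reverse inclusion, and marginalization plus Proposition \ref{prop:ind} for the forward one. Your reverse inclusion for Cases 1--4 is sound, including the observation that the determinant conditions are not needed in that direction. One minor correction first: Case 5 is \emph{not} immediate from Theorem \ref{thm:kstate-semialgebraic}, since that theorem's hypothesis (i) requires \emph{all four} $3$-marginalizations to satisfy Theorem \ref{thm:nonneg}, Case 1, while the proposition imposes this only on $P_{+\cdot\cdot\cdot}$ and $P_{\cdot\cdot\cdot+}$. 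You must invoke the redundancy the paper records just before the proposition: the inductive proof of Proposition \ref{prop:nonsing3toN} for a quartet uses only these two marginalizations (yielding nonsingular complex parameters), after which Proposition \ref{prop:4taxa} upgrades the parameters to stochastic ones.

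The substantive gap is in the forward direction, in the $2{+}2$ case. You ``match the size profile of $\{L_i\}$ to Cases 1--4,'' which tacitly assumes a finest independent partition with two blocks of size two must be $\{Y_1,Y_2\}\,|\,\{Y_3,Y_4\}$. A priori it could be $\{Y_1,Y_3\}\,|\,\{Y_2,Y_4\}$, and such a $P$ would lie in \emph{none} of the five sets: $\Flat_{12|34}(P)$ would be the Kronecker product of the two rank-$2$ marginals $P_{\cdot+\cdot+}$ and $P_{+\cdot+\cdot}$, hence of rank $4$, so even the edge invariants in Case 5 fail (and the rank-one conditions of Cases 1--4 clearly fail). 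The proposition is therefore true only because a distribution arising from stochastic parameters on $12|34$ cannot have such a partition --- the paper's closing remark that only partitions ``consistent with the tree topology'' occur --- and this needs an argument your outline does not contain. It is short but genuinely uses the tree: rooting at the internal node adjacent to leaves $1$ and $2$ (Lemma \ref{lem:moveroot}), if $Y_1\not\perp Y_3$ and $Y_2\not\perp Y_4$, then rank $2$ of $P_{\cdot+\cdot+}=M_1^T\operatorname{diag}(\boldsymbol\pi)M_5M_3$ and of $P_{+\cdot+\cdot}=M_2^T\operatorname{diag}(\boldsymbol\pi)M_5M_4$ forces every factor to be nonsingular; for $k=2$ a nonsingular Markov matrix has no zero column, so every node marginal is positive, the parameters are nonsingular in the sense of Definition \ref{def:nonsingparams}, and the forward implication of Proposition \ref{prop:ind} then forbids \emph{any} proper independent partition --- a contradiction. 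This tree-consistency step is not among the verifications you classify as routine bookkeeping (well-definedness of the finest partition, minimality of size-$2$ blocks, reassembly into $12|34$), and without it the case list cannot be shown exhaustive.
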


\smallskip

In case 1, the only edge flattenings that are needed are those
associated to terminal edges. If these all have rank 1, then the
flattening for the internal edge does as well.

In cases 1,2,3, the distributions arise from stochastic parameters on
all 3 of the binary topological trees with 4 leaves, as well as the
star tree.

Note that all possible partitions of variables do not appear, but only
those consistent with the tree topology. In the 4-leaf case, this has
ruled out only the 2 partitions of size 2,2 that do not reflect a
split in the tree.

\smallskip

Of course one could extend the above proposition to arbitrary size
trees, as long as $k=2$, but the number of possible partitions
into independent sets of variables grows quickly, so we will not
give an explicit statement.

\bigskip

\section*{Acknowledgements}  We thank Piotr Zwiernik for his careful
reading of an earlier version of this paper and his helpful comments.  We
are also grateful to the Mittag Leffler Institute for its hospitality in early 2011,
where some of this work was undertaken.

\bibliographystyle{siam}

\bibliography{gmsemi}

\end{document}